\numberwithin{equation}{section}
 \def\lb{\label}
\def\be{\begin{equation}}
\def\ee{\end{equation}}
\newcommand{\oalpha}{\overline{\alpha}}
\newcommand{\obeta}{\overline{\beta}}
\newcommand{\oo}{\overline{0}}
\newcommand{\ol}{\overline{1}}
\newcommand{\mcU}{\mathcal{U}}
\newcommand{\mcA}{\mathcal{A}}
\newcommand{\mcB}{\mathcal{B}}
\newcommand{\hC}{\widehat{C}}
\newcommand{\hCad}{\widehat{C}_{\ad}}
\newcommand{\hCp}{\widehat{C}_+}
\newcommand{\hCm}{\widehat{C}_-}
\newcommand{\tC}{\widetilde{C}}
\newcommand{\tCm}{\widetilde{C}_-}
\newcommand{\bii}{\mathbf{1}}
\newcommand{\bI}{\mathbf{I}}
\newcommand{\bP}{\mathbf{P}}
\newcommand{\bPad}{\mathbf{P}^{(\operatorname{ad})}}
\newcommand{\bK}{\mathbf{K}}
\newcommand{\hI}{\widehat{I}}
\newcommand{\mcP}{\mathcal{P}}
\newcommand{\mcK}{\mathcal{K}}
\newcommand{\mfg}{\mathfrak{g}}
\newcommand{\oI}{\overline{I}}
\newcommand{\Vad}{V_{\ad}}
\newcommand{\kg}{{\sf g}}
\newcommand{\okg}{\overline{\sf g}}
\newcommand{\ovarepsilon}{\overline{\varepsilon}}
\DeclareMathOperator{\End}{End}
\DeclareMathOperator{\ad}{ad}
\DeclareMathOperator{\tr}{tr}
\DeclareMathOperator{\str}{str}
\DeclareMathOperator{\bstr}{\pmb{\str}}
\DeclareMathOperator{\Sym}{Sym}
\DeclareMathOperator{\proj}{P}
\DeclareMathOperator{\Mat}{Mat}
\DeclareMathOperator{\sdim}{sdim}
\DeclareMathAccent{\wtilde}{\mathord}{largesymbols}{"65}
\newtheorem{proposition}{Proposition}
\begin{document}

\begin{flushright}
\today\\
\end{flushright}
\vspace{2cm}

\begin{center}
{\Large \bf The split Casimir operator \\[8pt] and solutions of the Yang-Baxter equation \\[8pt] for the $osp(M|N)$ and $s\ell(M|N)$ Lie superalgebras, \\[8pt] higher Casimir operators, and the Vogel parameters}
\end{center}
\vspace{1cm}

\begin{center}
{\large \bf  A. P. Isaev${}^{a,b}$, A. A. Provorov${}^{a,c}$}
\end{center}

\vspace{0.2cm}

\begin{center}
{${}^a$ \it
Bogoliubov Laboratory of Theoretical Physics,
Joint Institute for Nuclear Research, Dubna, Moscow region,
Russia}\vspace{0.2cm}

${}^b$ {\it
Faculty of Physics,
M.~V.~Lomonosov Moscow State University,
Moscow, Russia}\vspace{0.2cm}

{${}^c$ \it Moscow Institute of Physics and Technology (National Research University), Dolgoprudny, Moscow Region, Russia}
\vspace{0.5cm}

{\tt isaevap@theor.jinr.ru, aleksanderprovorov@gmail.com}
\end{center}
\vspace{1cm}
\begin{abstract}\noindent
We find the characteristic identities for the split Casimir operator in the defining and adjoint representations of the $osp(M|N)$ and $s\ell(M|N)$ Lie superalgebras. These identities are used to build the projectors onto invariant subspaces of the representation $T^{\otimes 2}$ of the $osp(M|N)$ and $s\ell(M|N)$ Lie superalgebras  in the cases when $T$ is the defining and adjoint representations. For defining representations, the $osp(M|N)$- and $s\ell(M|N)$-invariant solutions of the Yang-Baxter equation are expressed as rational functions of the split Casimir operator. For the adjoint representation, the characteristic identities and invariant projectors obtained are considered from the viewpoint of a universal description of Lie superalgebras by means of the Vogel parametrization. We also construct a universal generating function for higher Casimir operators of the $osp(M|N)$ and $s\ell(M|N)$ Lie superalgebras in the adjoint representation.
\end{abstract}

\newpage

\section{Introduction}
It is known that the split Casimir operator $\hC$ (see definition in Sec. \ref{Sec2}; also see \cite{Okub}) plays an important role in the description of Lie algebras and superalgebras as well as in the study of their representations. Furthermore, the operator $\hC$ is used for constructing solutions of the semiclassical and quantum Yang-Baxter equations that are invariant under the action of Lie algebras and superalgebras in various representations (see, e.g., \cite{ChPr}, \cite{ZMa}).

In the present paper, we use the operator $\hC$ to construct a system of projectors onto invariant subspaces of the representations $T\otimes T$ of the complex Lie superalgebras $osp(M|N)$ and $s\ell(M|N)$ in the cases when $T=T_f$ is the defining representation and when $T=\ad$ is the adjoint representation.

The idea to construct projectors onto invariant supspaces of representations of Lie algebras and superalgebras by means of invariant operators is not new. For example, the invariant projectors that act on the tensor product of the $s\ell(N)$ Lie algebra defining representations are called the Young symmetrizers and are constructed as images of specific elements of the group algebra $\mathbb{C}[S_r]$ of the symmetric group $S_r$. The algebra $\mathbb{C}[S_r]$ centralises the action of the algebra $sl(N)$ in the representation $T^{\otimes r}_f$. For the $so(N)$ and $sp(N)$ Lie algebras (where $N=2n$ is even) there exists an analogous statement: the action of those algebras in the representation $T^{\otimes r}_f$ is centralized by the Brauer algebra $B_r(N)$ (see e.g. \cite{Book2}). The aforementioned properties of the $s\ell(N)$, $so(N)$, and $sp(N)$ Lie algebras are carried over to the case of Lie superalgebras: in \cite{SWsl} and \cite{SWsl2} the method of describing subrepresentations of $T_f^{\otimes r}$ by means of the Young symmetrizers was generalized to encompass the $s\ell(M|N)$ Lie superalgebras, and in \cite{SWosp} an analogous result was obtained for the $osp(M|N)$ Lie superalgebras. In our work, we consider a decomposition of the representation $T\otimes T$ into subrepresentations by using the operator $\hC$, that is defined uniformly for all Lie superalgebras with the non-degenerate Cartan-Killing metric. Within this approach the $s\ell(M|N)$- and $osp(M|N)$-Lie superalgebras are described in a similar fashion.

In the case where $T=\ad$ is the adjoint representation, the construction of projectors onto invariant subspaces of the representation $T\otimes T=\ad\otimes \ad$ by using $\hC$ has one more significance. It is related to the notion of the Universal Lie algebra, which was introduced by Vogel in \cite{Vog} (see also \cite{Del}, \cite{Lan2}). The Universal Lie algebra was supposed to be a model of all complex simple Lie algebras, embracing some Lie superalgebras additionally. For example, many quantities that characterize the Lie algebra $\mfg$ in different representations $T_\lambda$ (possibly reducible) that participate in the decomposition $\ad^{\otimes k}=\sum_{\lambda}T_\lambda$ where $k \ge 1$ are expressed as rational functions of the three Vogel parameters (see their definition in Sec. \ref{Sec5}). These parameters take specific values for all complex simple Lie algebras as well as for all basic classical Lie superalgebras (see, e.g., \cite{Mkr}, and
Sec.{\bf \ref{Sec5}} below). In particular, it was shown for Lie algebras that using the Vogel parameters one can express the dimensions of the representations $T_\lambda$, when $k=2,3$ \cite{Vog}, the dimensions of the ad-series representations, i.e., the representations $T_{\lambda'}$ with the highest weight $\lambda'=k\lambda_{\ad}$ where $\lambda_{\ad}$ is the highest weight of the given Lie algebra \cite{Lan}, the dimensions of the representations of $X_2$-series \cite{AvMkr} as well as the values of higher Casimir operators in the adjoint representation of the given Lie algebra \cite{Mkr}. Furthermore, in \cite{Knots}, \cite{Knots2} it was shown that the universal description of complex simple Lie algebras allows formulating some types of knot polynomials via characters simultaneously for all 
types of quantum simple Lie groups.

 The paper is organized as follows. In Section \ref{Sec2}, we recall the main notions of Lie superalgebra theory and introduce some necessary conventions that we use throughout the work. Sections \ref{Sec3} and \ref{Sec4} are dedicated to calculating the characteristic identity for the split Casimir operator in the defining $T_f$ and adjoint $\ad$ representations of the $osp(M|N)$ and $s\ell(M|N)$ Lie superalgebras and to constructing projectors onto invariant subspaces of the representations $T_f^{\otimes 2}$ and $\ad^{\otimes 2}$. 
 We also show that the results obtained are in full correspondence with the conclusions of \cite{IsKr} and \cite{IsPr} where analogous calculations were carried out for Lie algebras. In Section \ref{Sec5}, we write characteristic identities for the symmetric part of the split Casimir operator as well as the corresponding projectors onto symmetric invariant subspaces, uniformly (in a universal way) for both the $osp(M|N)$ and $s\ell(M|N)$ Lie algebras by using the Vogel parameters. In Section \ref{Sec6}, following the approach of \cite{Okub} and \cite{Mkr}, we find a universal form of the generating function of the higher Casimir operators of the $osp(M|N)$ and
  $s\ell(M|N)$ Lie superalgebras in the adjoint representation.

\section{General information on Lie superalgebras}\label{Sec2}

In this section, we briefly discuss the main definitions and conventions from the theory of Lie superalgebras(see, e.g., \cite{Kac}, \cite{Ber}) and introduce the notation to be used in what follows.

\subsection{Lie superalgebras and associative algebras}
A linear superspace (or $\mathbb{Z}_2$-graded space) over the field $\mathbb{C}$ is a linear space $V=V_{\oo}\oplus V_{\ol}$, which is a direct sum of the linear spaces $V_{\oo}$ and $V_{\ol}$ over the field $\mathbb{C}$. The spaces $V_{\oo}$ and $V_{\ol}$ are called even and odd, respectively. The vectors from $V$ that lie in the even subspace $V_{\oo}$ are called even, and those lying in the odd space $V_{\ol}$ are called odd. Those vectors that are either even or odd are called homogeneous. The grading of an arbitrary homogeneous vector $v\in V$ is denoted by ${\sf deg}(v) \equiv [v]\in \mathbb{Z}_2$, i.e., $[v]=0,1\mod (2)$. The linear superspace $V=V_{\oo}\oplus V_{\ol}$, where $\dim V_{\oo}=M$ and $\dim V_{\ol}=N$, will be written as $V_{(M|N)}$. The superdimension of the space $V_{(M|N)}$ is defined by $\sdim(V_{(M|N)})\equiv M-N$. Throughout the rest of this paper we always assume the basis $\{e_a\}_{a=1}^{M+N}$ of $V_{(M|N)}$ to be homogeneous, with the first $M$ of its elements being even and the last $N$ of them being odd. The grading of the basis element $e_a$ will be written as $[a]$. Note that in our convention the grading is carried by the basis vectors of the space $V_{(M|N)}$. For example, there is another (but equivalent) convention whereby the grading is carried by the coordinates 
(see, e.g., \cite{FIKK} and \cite{IKK}).

Let $\mfg$ be a Lie superalgebra over the field $\mathbb{C}$ with a Lie superbracket $\lbrack\ ,\ \rbrack:\mfg\times \mfg\to \mfg$. For arbitrary homogeneous vectors $X,Y,Z\in\mfg$ the following two properties must be satisfied (see, e.g., \cite{Kac}):
\begin{equation}\label{SLieBrac}
\begin{aligned}
\lbrack X, Y\rbrack &\in \mfg_{_{\overline{[X]+[Y]}}} \; ,\;\;\;\;\;\;
\lbrack X, Y\rbrack &= -(-1)^{[X][Y]}\lbrack Y,X\rbrack,\\
\end{aligned}
\end{equation}
\begin{equation}\label{SLieJac}
(-1)^{[X][Z]}\lbrack X,\lbrack Y,Z\rbrack\rbrack+(-1)^{[Y][X]}\lbrack Y,\lbrack Z,X\rbrack\rbrack+(-1)^{[Z][Y]}\lbrack Z,\lbrack X,Y\rbrack\rbrack=0 \;,
\end{equation}
Let $\{X_i\}$ $(i=1,...,\dim \mfg)$ be a homogeneous basis of $\mfg$. Then
\begin{equation}\label{StrConG}
\lbrack X_i,X_j\rbrack =X_k \; X^k_{\;\; ij},
\end{equation}
where the numbers $X^k_{\;\; ij}$
are the structure constants of the algebra $\mfg$. Clearly,
$X^k_{\;\; ij}=0$ as long as $([i]+[j]+[k])\mod(2)\neq 0$, and
$X^k_{\;\; ij}=-(-1)^{[i][j]}X^k_{\;\; ji}$.

Consider an associative superalgebra $\mcA=\mcA_{\oo}\oplus\mcA_{\ol}$. For any two homogeneous elements $A, B \in \mcA$ we 
 can define the bracket $\lbrack\ ,\ \rbrack$ as follows:
\begin{equation}\label{AsToLie}
\lbrack A,B\rbrack := A \, B-(-1)^{[A][B]}\, B \, A \; .
\end{equation}
It is easy to check that this bracket satisfies \eqref{SLieBrac} and \eqref{SLieJac}. Thus, the algebra $\mcA$ can be viewed as a Lie superalgebra with respect to the bracket (\ref{AsToLie}).  Following \cite{Kac}, we denote this Lie superalgebra by $(\mcA)_L$.

A representation of the Lie superalgebra $\mfg$ is a homomorphism $T:\mfg\to (\End(V_{(M|N)}))_L$, such that for any $X,Y\in\mfg$ we must have
\begin{equation}\label{RepDef}
T(\lbrack X,Y\rbrack)=\lbrack T(X),T(Y)\rbrack,
\end{equation}
where the Lie superbracket in the right hand-side of \eqref{RepDef} is defined in \eqref{AsToLie}. In this paper, the key role is played by the adjoint representation $\ad:\mfg\to (\End(\mfg))_L$, which is defined by the formula
\begin{equation}\label{AdDef}
\ad(X)\cdot Y=\lbrack X,Y\rbrack
\end{equation}
for arbitrary vectors $X,Y\in\mfg\equiv V_{\ad}$. From \eqref{StrConG} and \eqref{AdDef} it follows that the entries of the matrix 
of the operators $\ad(X_i)$ in the homogeneous basis $\{X_j\}$ are equal to the structure constants of $\mfg$:
\begin{equation}\label{AdComp}
\ad(X_i)^k{}_j=X_{\;\; ij}^k.
\end{equation}

Consider a linear superspace $V_{(M|N)}$ and an
 operator $A:V_{(M|N)}\to V_{(M|N)}$ with the matrix $||A^a{}_b||$ in some homogeneous basis $\{e_a\}_{a=1}^{M+N}$ of $V_{(M|N)}$. Recall that the supertrace of $A$ is the quantity $\str A=(-1)^{[a]}A^a{}_a$, which has the following important property:
\begin{equation}\label{strBrac}
\str(\lbrack A,B\rbrack)=0
\end{equation}
for any $A$ and $B$ acting in $V_{(M|N)}$.

The Cartan-Killing metric $\sf{g}$ of the Lie superalgebra $\mfg$ is defined in the standard way:
\begin{equation}\label{KilDefG}
{\sf g}_{ij}=\str(\ad(X_i)\ad(X_j))=(-1)^{[m]}X_{\;\; ik}^m \, X_{\;\; jm}^k.
\end{equation}
Note that $\sf{g}$ has the following properties:
\begin{align}
{\sf g}(X,Y)=(-1)^{[X][Y]}{\sf g}(Y,X)&\ & &\forall X\in\mfg_{\oalpha},\forall Y\in\mfg_{\obeta},\ \oalpha,\obeta\in \mathbb{Z}_2,\label{KilProp2}\\
{\sf g}(\lbrack X,Y\rbrack ,Z)={\sf g}(X,\lbrack Y,Z\rbrack )&\ & &\forall X,Y,Z\in \mfg,\label{KilProp3}
\end{align}
and
\begin{gather}
{\sf g}_{ij}=(-1)^{[i][j]}{\sf g}_{ji}=(-1)^{[i]}{\sf g}_{ji}=(-1)^{[j]}{\sf g}_{ji},\label{KilProp4}\\
{\sf g}_{ij}=0 \; , \quad \mbox{\rm if} \;\; [i] + [j] \neq 0 \; {\rm mod}\, (2).\label{KilProp5}
\end{gather}

In the case of the nondegenerate Cartan-Killing metric, we also introduce
 the inverse Cartan-Killing metric with the components 
 $\okg^{ij}$ given by the relations
\begin{equation}\label{InvKilG}
\okg^{ij}{\sf g}_{jk}=\delta^i_k,\qquad {\sf g}_{ij}\okg^{jk}=\delta_i^k.
\end{equation}
The metric $\okg^{ij}$ has the same properties \eqref{KilProp4} with respect to index permutations. One can use the metrics ${\sf g}_{ij}$ 
and  $\okg^{ij}$ to lower and raise the indices of the vectors and covectors in $\mfg$. Note that the raising of the indices in $\kg_{ij}$ yields $\kg^{ij}=\okg^{ik}\okg^{jm}\kg_{km}=\okg^{ji}$, so the
 metric tensor with the upper indices $\kg^{ij}$ does not coincide with the inverse matrix $\okg^{ij}$. From now on we only use $\okg^{ij}$.

Let us now introduce the structure constants of $\mfg$ with the lower indices:
\begin{equation}\label{Xlow}
X_{kij}\equiv {\sf g}_{km}X_{\;\; ij}^m \; .
\end{equation}
 From \eqref{AdComp}, \eqref{KilProp3} and \eqref{KilProp4} we deduce the following properties of $X_{ijk}$ with respect to index permutation:
\begin{equation}\label{XSym}
X_{kji}=-(-1)^{[i][j]}X_{kij} \; ,  \;\;\;
X_{jik} =-(-1)^{[k]+[j]+[k][j]}X_{kij}\; ,  \;\;\;
X_{ikj} =-(-1)^{[i][k]}X_{kij}.
\end{equation}

\subsection{The split Casimir operator and comultiplication for Lie superalgebras}
Let $\mcU(\mfg)$ denote the enveloping algebra of the Lie superalgebra $\mfg$. Consider the quadratic Casimir operator:
\begin{equation}\label{C2G}
C_2={\sf g}^{ij} \, X_i \, X_j  \; \in\mcU(\mfg) \, .
\end{equation}
In view of
 (\ref{KilProp5}), the operator
 $C_2$ is even and commutes with all generators
$X_k$ of $\mcU(\mfg)$ with respect to the bracket
\eqref{AsToLie}. Therefore, $C_2$ belongs to the centre of $\mcU(\mfg)$.

Consider two associative superalgebras $\mcA$ and $\mcB$. The graded tensor product of $\mcA$ and $\mcB$ is the associative superalgebra $\mcA\otimes \mcB$
  that coincides as a linear space with the tensor product of the spaces $\mcA$ and $\mcB$, and the multiplication in $\mcA\otimes \mcB$
  is defined for arbitrary homogeneous vectors 
  $A,A'\in \mcA$ and $B,B'\in\mcB$  as
\begin{equation}\label{AsTenMult}
(A\otimes B)\cdot (A'\otimes B')=
(-1)^{[A'][B]}AA'\otimes BB' \; .
\end{equation}
Here and below, by 'tensor product' we will always mean the 'graded tensor product', for which (\ref{AsTenMult}) holds. The matrix of the operator $A\otimes B$ in the basis $\{e_i\otimes \varepsilon_\alpha\}$ of the space $V\otimes V'$ is given by
 $$
(A\otimes B)(e_i\otimes \varepsilon_\alpha)=\left\{
\begin{array}{l}
e_k\otimes \varepsilon_\beta(A\otimes B)^{k\beta}{}_{i\alpha}\\[0.2cm]
(-1)^{[B][i]}(Ae_i)\otimes (B\varepsilon_\alpha)=(-1)^{[B][i]}(e_kA^k{}_i)\otimes (\varepsilon_\beta B^\beta{}_\alpha)
\end{array}\right.,
$$
from where for the case of a homogeneous $B$ we get:
\begin{equation}\label{TensComp}
(A\otimes B)^{k\beta}{}_{i\alpha}=(-1)^{[B][i]}A^k{}_iB^\beta{}_\alpha.
\end{equation}
The formula \eqref{TensComp} can be generalized to arbitrary (not necessarily homogeneous) operators $B\in\End(V')$ as
\begin{equation}\label{ABmat}
(A\otimes B)^{k\beta}{}_{i\alpha}=
(-1)^{([\alpha]+[\beta])[i]}A^k{}_iB^\beta{}_\alpha.
\end{equation}
Generally, for the matrix of the operator
$(A \otimes B\otimes C\otimes  \cdots \otimes E)$ 
acting in $V_1 \otimes V_2\otimes V_3\otimes  \cdots \otimes V_n$,
we have
 $$
 \begin{array}{c}
 (A \otimes B\otimes C\otimes \cdots \otimes E)^{k_1...k_n}_{\;\;\; i_1...i_n} =
 (A)^{k_1}_{\;\; i_1}(-1)^{[i_1]([k_2]+[i_2])}(B)^{k_2}_{\;\; i_2}
 (-1)^{([i_1]+[i_2])([k_3]+[i_3])}(C)^{k_3}_{\;\; i_3} \cdots \\ [0.2cm]
  \cdots (-1)^{([i_1]+[i_2]+...+[i_{n-1}])([k_n]+[i_n])}(E)^{k_n}_{\;\; i_n}\;.
 \end{array}
 $$

For an arbitrary $A:V\to V$ we define the operators $A_1,A_2:V^{\otimes 2}\to V^{\otimes 2}$ by
\begin{equation}\label{TensExp}
A_1\equiv A\otimes I,\qquad A_2\equiv I\otimes A,
\end{equation}
where $I:V\to V$ is the identity operator. Applying \eqref{TensExp}, we get:
\begin{equation*}
A_1B_2=(A\otimes I)(I\otimes B)=A\otimes B \quad \text{and} \quad B_2A_1
 =(I\otimes B)(A\otimes I)=(-1)^{[A][B]}A\otimes B,
\end{equation*}
i.e., generally,
\begin{equation*}
A_1B_2\neq B_2A_1.
\end{equation*}

The notation introduced above can be generalized to the case of any operator
 \begin{equation}
 \label{defA}
 A=\hat{A}^{i_1\dots i_r}{}_{j_1\dots j_r}e_{i_1}{}^{j_1}\otimes \dots e_{i_r}{}^{j_r} \;\; \in \;\; {\rm End}(V^{\otimes r}) \; ,
 \end{equation}
 where $e_i{}^j$ are the matrix identities which are operators that act on the space $V$ with the basis $\{ e_a \}$ as
 \begin{equation}
 \label{MatUn}
 e_i{}^j \cdot e_a = e_b \; (e_i{}^j)^b_{\;\; a} =
 e_i \; \delta^j_a \;\;\; \Leftrightarrow \;\;\;
  (e_i{}^j)^b_{\;\; a} = \delta^b_i \delta^j_a \; .
 \end{equation}
 Let $s>r$ and $1\le \alpha_1<\dots <\alpha_r\le s$.
  Define $A_{\alpha_1\dots \alpha_r} \in
   {\rm End}(V^{\otimes s})$ as
\begin{equation}\label{A123}
A_{\alpha_1\dots \alpha_r}=\hat{A}^{i_1\dots i_r}{}_{j_1\dots j_r} \;
 \underbrace{I \otimes \cdots\otimes I \otimes e_{i_1}{}^{j_1}\otimes
 I \otimes \cdots \otimes I
 \otimes  e_{i_r}{}^{j_r} \otimes I  \otimes \cdots \otimes I}_s,
\end{equation}
where each of the matrix identities $e_{i_k}{}^{j_k}$ $(k=1,\dots r)$ stands at the
$\alpha_k$-th place in the tensor product in the right-hand side of \eqref{A123},
while  at all the other places there are identity operators $I:V\to V$.
 For instance, given $A=\hat{A}^{ij}{}_{km}e_i{}^k\otimes e_j{}^m$, the operators $A_{13},A_{12},A_{23}:V^{\otimes 4}\to V^{\otimes 4}$ are defined by
\begin{equation}\label{A13}
\begin{array}{c}
A_{13}=\hat{A}^{ij}{}_{km}e_i{}^k\otimes I\otimes e_j{}^m\otimes I \; , \;\;\;
A_{12}=\hat{A}^{ij}{}_{km}(e_i{}^k\otimes e_j{}^m\otimes I\otimes I) \; ,
 \\ [0.2cm]
A_{23}=\hat{A}^{ij}{}_{km}(I\otimes e_i{}^k\otimes e_j{}^m\otimes I) \; .
\end{array}
\end{equation}

In what follows we will need the superpermutation operator
\begin{equation}\label{PgenDef}
\mcP=(-1)^{[j]}e_i{}^j\otimes e_j{}^i\qquad \implies \qquad
\mcP^{k_1k_2}{}_{m_1m_2}= (-1)^{[k_1][k_2]}\delta^{k_1}_{m_2}
\delta^{k_2}_{m_2}.
\end{equation}
Note that the operators $\mcP_{\alpha,\alpha+1}$ $(\alpha=1,\dots, s-1)$ given in accordance with \eqref{A123} 
define the representation $\tau:S_s\to \End(V^{\otimes s})$ of the symmetric group $S_s$ with generators $\sigma_\alpha$:
\begin{equation}\label{SymGr}
\begin{gathered}
\sigma_\alpha\sigma_{\alpha+1}\sigma_\alpha=\sigma_{\alpha+1}\sigma_\alpha\sigma_{\alpha+1}\qquad \forall \alpha=1,\dots, s-2,\\
\sigma_\alpha\sigma_\beta=\sigma_\beta\sigma_\alpha\qquad \forall \alpha,\beta=1,\dots,s-1,\quad |\alpha-\beta|>1,\\
\sigma_\alpha^2=e \qquad \forall \alpha=1,\dots,s-1 \; ,
\end{gathered}
\end{equation}
where  $e$ is the identity of $S_s$, and
 \begin{equation}
 \label{reprS}
\tau(\sigma_\alpha) = \mcP_{\alpha,\alpha+1} \; , \;\;\;\;
\tau(e) = I^{\otimes s} \; .
\end{equation}
Using direct calculations and the definition of $\mcP_{\alpha,\alpha+1}$,
 one can prove the following statement.
\begin{proposition}\label{SuperShift}
Let $A$
 be an operator (\ref{defA}) acting in $V^{\otimes r}$,
 and $A_{\alpha_1\dots \alpha_r}$ be an operator (\ref{A123})
 that acts on $V^{\otimes s}$ $(s >r)$.
 If $\alpha_p+1 < \alpha_{p+1}$ for $p<r$, or $\alpha_p+1\le s$ for $p=r$, then
\begin{equation}\label{prop1}
\mcP_{\alpha_p,\alpha_p+1}A_{\alpha_1\dots \alpha_p\dots \alpha_r}\mcP_{\alpha_p,\alpha_p+1}=
 A_{\alpha_1\dots \alpha_p+1\dots \alpha_r} \; .
\end{equation}
That is, if the conditions are satisfied, the superpermutation $\mcP_{\alpha_p,\alpha_p+1}$ moves the nontrivial factor at the $\alpha_p$-th position in $A_{\alpha_1\dots \alpha_r}$
to the adjacent place $\alpha_p+1$ on the right, while the identity $I$ standing at the $\alpha_p+1$th place in $A_{\alpha_1\dots \alpha_r}$ is moved to the $\alpha_p$-th position.
\end{proposition}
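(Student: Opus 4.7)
The plan is to reduce \eqref{prop1} to the two-site identity $\mcP\,(e_i{}^j\otimes I)\,\mcP = I\otimes e_i{}^j$ on $V^{\otimes 2}$ by exploiting the locality of $\mcP_{\alpha_p,\alpha_p+1}$ and the evenness of $\mcP$, and then verify that two-site identity by a direct computation.

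First, by linearity of both sides of \eqref{prop1} in the coefficients $\hat A^{i_1\dots i_r}{}_{j_1\dots j_r}$, I may assume $A = e_{i_1}{}^{j_1}\otimes\cdots\otimes e_{i_r}{}^{j_r}$. For such $A$, the operator $A_{\alpha_1\dots\alpha_r}$ is an $s$-fold graded tensor product with $e_{i_q}{}^{j_q}$ placed at position $\alpha_q$ and the identity $I$ at every other position; the hypothesis $\alpha_p+1<\alpha_{p+1}$ (or $\alpha_p+1\le s$ for $p=r$) is used precisely to ensure that position $\alpha_p+1$ is an identity slot in $A_{\alpha_1\dots\alpha_r}$. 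The crucial structural observation is that $\mcP$ is an even operator: each summand $(-1)^{[j]}e_i{}^j\otimes e_j{}^i$ in \eqref{PgenDef} has total grading $2([i]+[j])\equiv 0\pmod 2$. Writing $A_{\alpha_1\dots\alpha_r} = B_L\otimes(e_{i_p}{}^{j_p}\otimes I)\otimes B_R$, with $B_L$ supported on positions $1,\dots,\alpha_p-1$ and $B_R$ on positions $\alpha_p+2,\dots,s$, and noting that $\mcP_{\alpha_p,\alpha_p+1} = I^{\otimes(\alpha_p-1)}\otimes\mcP\otimes I^{\otimes(s-\alpha_p-1)}$, the graded multiplication rule \eqref{AsTenMult} yields no sign corrections (all outer factors being even), and the composition factorises as
\begin{equation*}
\mcP_{\alpha_p,\alpha_p+1}\,A_{\alpha_1\dots\alpha_r}\,\mcP_{\alpha_p,\alpha_p+1} \;=\; B_L\otimes\bigl[\mcP\,(e_{i_p}{}^{j_p}\otimes I)\,\mcP\bigr]\otimes B_R.
\end{equation*}

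It then remains to verify $\mcP\,(e_i{}^j\otimes I)\,\mcP = I\otimes e_i{}^j$, which I would do by acting on an arbitrary basis vector $e_a\otimes e_b$ of $V^{\otimes 2}$: a short computation using $\mcP(e_a\otimes e_b)=(-1)^{[a][b]}e_b\otimes e_a$ (immediate from \eqref{PgenDef}) together with the graded action of matrix units produces the common value $(-1)^{([i]+[j])[a]}\delta^j_b\,e_a\otimes e_i$ on both sides. The main obstacle in the whole argument is the careful accounting of signs produced by \eqref{AsTenMult}; the observation that $\mcP$ carries zero grading is the clean resolution, since it ensures that the nontrivial factors $B_L,B_R$ at positions unrelated to $\alpha_p$ and $\alpha_p+1$ are carried through the conjugation intact.
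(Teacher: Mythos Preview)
Your proof is correct and is essentially the direct calculation the paper alludes to (the paper merely states that the proposition follows ``using direct calculations and the definition of $\mcP_{\alpha,\alpha+1}$'' without giving details). Your organization via the reduction to the two-site identity $\mcP(e_i{}^j\otimes I)\mcP = I\otimes e_i{}^j$, together with the observation that $\mcP$ is even so that the outer factors $B_L$, $B_R$ pass through the conjugation without sign corrections, is a clean way to carry out that calculation.
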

In particular, Statement \ref{SuperShift} implies for the operators $A_{\alpha_1\alpha_2}:V^{\otimes 4}\to V^{\otimes 4}$ defined in \eqref{A13} the relations
\begin{equation}
A_{13}=\mcP_{23}A_{12}\mcP_{23}=\mcP_{12}A_{23}\mcP_{12} \;\;
\Rightarrow \;\;
\mcP_{13}=\mcP_{23}\mcP_{12}\mcP_{23}=\mcP_{12}\mcP_{23}\mcP_{12} \; ,
\end{equation}
where the chain of equalities on the right is in accordance with \eqref{SymGr}. Besides, it follows from \eqref{prop1} that for any $A:V^{\otimes r}\to V^{\otimes r}$ we have
\begin{equation}\label{Aa1a2a3S}
A_{\alpha_1\dots\alpha_r}=\tau(\sigma)A_{1\dots r}\tau(\sigma)^{-1},
\end{equation}
where $\sigma\in S_s$, $s\ge r$ and $\tau(\sigma)$ is its image in the representation \eqref{reprS} constructed as a product of $P_{\alpha,\alpha+1}$.

For $\mcU(\mfg)$, define a homomorphic map
  $\Delta:\mcU(\mfg)\to \mcU(\mfg)\otimes \mcU(\mfg)$. It acts on the generators $X_i$ of $\mcU(\mfg)$ by
\begin{equation}
\label{coprod}
\Delta X_i=X_i\otimes I+I\otimes X_i.
\end{equation}
The map $\Delta$ is called the comultiplication of $\mcU(\mfg)$. Acting by $\Delta$ on the quadratic Casimir operator $C_2$ \eqref{C2G} yields:
\begin{equation}\label{SpCas1}
\Delta(C_2)=C_2\otimes I + I\otimes C_2+2\hC,
\end{equation}
where $\hC\in\mcU(\mfg)\otimes \mcU(\mfg)$ is called the split Casimir operator. Explicitly,
\begin{equation}\label{SpCas}
\hC=\okg^{ij}X_i\otimes X_j.
\end{equation}
 The operator $\hC$ has the property of $\ad$-invariance,
 i.e., for all generators $X_i \in \mcU(\mfg)$ we have, according to \eqref{SpCas1}:
\begin{equation}\label{CasAdInv}
\lbrack \hC,\Delta X_i\rbrack =\frac{1}{2}\Delta(\lbrack C_2,X_i\rbrack)-\frac{1}{2}\lbrack C_2,X_i\rbrack \otimes I-\frac{1}{2}I\otimes \lbrack C_2,X_i\rbrack=0 \; .
\end{equation}

It is convenient to use the notion of tensor product of the enveloping superalgebras ${\cal U}(\mathfrak{g})$ 
and comultiplication in ${\cal U}(\mathfrak{g})$
to define the tensor product $T\otimes T'$ of the representations $T:\mfg\to(\End(V))_L$ and $T':\mfg\to(\End(V'))_L$ of the Lie superalgebra $\mfg$. For an arbitrary homogeneous vector $X\in\mfg$ we define
$(T\otimes T')(X)$ as
\begin{equation}\label{TenRepDef}
(T\otimes T')(X)\equiv (T\otimes T') \big(\Delta (X)\big)
=T(X)\otimes T'(I)+T(I)\otimes T'(X) \; ,
\end{equation}
so for any homogeneous vectors $v\in V$ and $u\in V'$
we have
\begin{equation}
(T\otimes T')(X)\cdot (v\otimes u)=(T(X)\cdot v)\otimes u+(-1)^{[X][v]}v\otimes (T'(X)\cdot u).
\end{equation}
 The map thus defined
$(T\otimes T'): \;\; \mcU(\mfg)\to \End(V\otimes V')$
is indeed homomorphic and therefore is a representation of $\mcU(\mfg)$ on
 $V\otimes V'$.

 From (\ref{CasAdInv}) and \eqref{TenRepDef} one can infer that for any represenations $T$ and $T'$ the operator $(T\otimes T')(\hC)$ commutes with $(T\otimes T')(X)$ for all $X \in \mathfrak{g}$.
 Recall that by Schur's lemma (more precisely, by its generalization to the case of Lie superalgebras)
 for each {\it irreducible} representation $\widetilde{T}$
 of any complex Lie superalgebra ${\cal A}$, an even operator
 $A$ that commutes with all the elements of ${\cal A}$ in the representation $\widetilde{T}$ must be proportional to the identity operator, that is $A=\lambda I$, where $\lambda\in \mathbb{C}$.
 Thus, if an irreducible representation $\widetilde{T}$ of $\mcU(\mfg)$ is contained in $(T\otimes T')$, we have
 $\widetilde{T}(\hC) \sim I_{\widetilde{T}}$.
The following corollary of Schur's lemma is central for our work: if a representation $T\otimes T$ of a Lie superalgebra $\mfg$ in the space $V_T\otimes V_T$ is completely reducible, then $V_T\otimes V_T$ can be expanded as a direct sum of invariant eigenspaces of the operator $(T\otimes T)(\hC)$. From now on, we denote this operator by $\hC_T$. If the operator $\hC_T$ satisfies a characteristic identity
\begin{equation}
\label{charct}
(\hC_T-a_1 I_T^{\otimes 2})(\hC_T-a_2 I_T^{\otimes 2})\dots
(\hC_T-a_p I_T^{\otimes 2})=0,
\end{equation}
where $I_T^{\otimes 2}$ is the identity operator on
$V_T\otimes V_T$, all the complex numbers $a_1,\ a_2,\dots,a_p$ are different, and crossing out any of the parentheses on the left of (\ref{charct})
breaks the identity, then the numbers $a_1,\ a_2,\dots,a_p$ are the eigenvalues of $\hC_T$, and a projector onto the eigenspace of $\hC_T$ corresponding to the eigenvalue $a_j$ is given by the formula
\begin{equation}\label{GenPr}
\proj_j\equiv\proj_{a_j}=
\prod_{\substack{i=1\\i\neq j}}^{p}
\frac{\hC_T-a_i I_T^{\otimes 2}}{a_j-a_i} \; .
\end{equation}
Moreover, $\proj_k \proj_j = \delta_{kj} \proj_j$ and
$\sum_{i=1}^p \proj_i = I_T^{\otimes 2}$. We emphasize that the spaces extracted by the projectors $\proj_j$ are not necessarily spaces of irreducible representations of $\mfg$, as there may exist other nontrivial invariant operators in $V_T\otimes V_T$
 that cannot be expressed as polynomials in $\hC_T$. Therefore, the spaces $\proj_j(V_T\otimes V_T)$
can in principle be further expanded into a direct sum of nontrivial invariant subspaces.

Unlike the case of Lie algebras, reducible representations of simple Lie superalgebras are not always completely reducible. Correspondingly, the Casimir operators in such representations are not always diagonilizable: this situation is present in our paper. The operator $\hC_T$ is not diagonilizable if and only if it does not satisfy any identity of the form \eqref{charct} with pairwise different $a_i$. In this case, $\hC_T$ must satisfy
\begin{equation}\label{charct2}
(\hC_T-a_1 I_T^{\otimes 2})^{k_1}(\hC_T-a_2 I_T^{\otimes 2})^{k_2}\dots
(\hC_T-a_p I_T^{\otimes 2})^{k_p}=0,
\end{equation}
where all $k_i\in\mathbb{Z}_{\ge 1}$ are minimal, i.e., subtracting $1$ from any of them breaks the identity and, as earlier, $a_i$ are pairwise different. Instead of projectors onto eigenspaces of $\hC_T$, we can construct projectors onto its generalised eigenspaces (weight spaces):\footnote{Here by weight spaces we mean spaces in which $\hC_T$ acts as a Jordan cell with the corresponding eigenvalue}:
\begin{equation}\label{GenPr2}
\proj_j\equiv \proj_{a_j}=I^{\otimes 2}_T-\Bigg(I^{\otimes 2}_T-\prod_{\substack{i=1\\i\neq j}}^p\bigg(\frac{\hC_T-a_i I^{\otimes 2}_T}{a_j-a_i}\bigg)^{k_i}\Bigg)^{k_j}.
\end{equation}
If $k_i=1$ for some $i$, then the image of $\proj_i$ is an eigenspace of $\hC_T$. If $k_i>1$, then $\proj_i$ projects onto a generalized eigenspace. Note that for $k_1=k_2=\dots=k_p=1$ \eqref{charct2} and \eqref{GenPr2} turn into \eqref{charct} and \eqref{GenPr}, respectively.

\subsection{The split Casimir operator for simple complex Lie superalgebras with nondegenerate Cartan-Killing metric}\label{21}
 Hereinafter, we consider only those simple Lie superalgebras, for which the Cartan-Killing metric ${\sf g}_{ab}$ is nondegenerate, i.e., there must be an inverse metric $\okg^{ab}$ satisfying (\ref{InvKilG}). It also implies that the structure constants $X^k_{\;\; ij}$ satisfy the relation
 \be
 \lb{strX}
 {\rm str} (\ad(X_i)) = (-1)^k \; X^k_{\;\; ik} = 0 \; , \;\;\;\;\;
 \forall i \; .
 \ee

Using \eqref{TensComp}, \eqref{AdComp} and \eqref{SpCas}, we can find the components of the split Casimir operator in the adjoint representation with respect to the basis $\{X_a\otimes X_b\}$ of $\Vad\otimes \Vad$ ($\Vad$ coincides with $\mfg$ as a vector space and is defined to be the space of the adjoint representation):
\begin{equation}\label{CadGenC}
(\hCad)^{i_1i_2}{}_{j_1j_2}=(-1)^{[a_2][j_1]}
\okg^{a_1a_2} \, X^{i_1}{}_{a_1j_1} \, X^{i_2}{}_{a_2j_2}.
\end{equation}
Here we need other three ad-invariant operators in $\Vad\otimes \Vad$ with the components
\begin{equation}\label{IPKGenC}
(\bI)^{i_1i_2}{}_{j_1j_2}=\delta^{i_1}_{j_1}\delta^{i_2}_{j_2},\qquad (\bP)^{i_1i_2}{}_{j_1j_2}=(-1)^{[i_1][i_2]}\delta^{i_1}_{j_2}\delta^{i_2}_{j_1},\qquad (\bK)^{i_1i_2}{}_{j_1j_2}=\okg^{i_1i_2}\kg_{j_1j_2}.
\end{equation}
Therefore, $\bP$ is the operator of superpermutation of the two spaces $\Vad$.  One can easily check that for any $X\in \mfg$,
\begin{equation}\label{XPcom}
[\ad^{\otimes 2} (\Delta \; X),\bP]=0,
\end{equation}
As $\bP$ is invariant under the adjoint action of $\mfg$, then so are its eigenspaces, which are extracted by the projectors $\frac{1}{2}(\bI+\bP)$ and $\frac{1}{2}(\bI-\bP)$.

Using the component form \eqref{IPKGenC} and \eqref{CadGenC} of $\bI$, $\bP$, $\bK$ and $\hCad$, one can check the identities
\begin{equation}\label{IPKalg}
\bP\cdot \bP=\bP,\qquad \bP\hCad\bP=\hCad,\qquad \bP\cdot \bK=\bK\cdot \bP=\bK,
\end{equation}
\begin{equation}\label{KKK}
\bK\cdot \bK=\sdim \mfg\cdot \bK.
\end{equation}

Applying the projectors
\begin{equation}
\bPad_\pm=\frac{1}{2}(\bI\pm\bP)
\end{equation}
to $\hCad$, we define the symmetric and antisymmetric parts of the split Casimir operator:
\begin{equation}\label{CpmDef}
\hC_{\pm}=\bPad_{\pm}\hCad=\hCad\bPad_{\pm}
\end{equation}
where the last equality is verified by using \eqref{IPKalg}. From \eqref{CpmDef} and the relations
\begin{equation}
\bPad_++\bPad_-=\bI,\quad \bPad_+\bPad_-=\bPad_-\bPad_+=0
\end{equation}
that follow from \eqref{CpmDef}, we instantly get:
\begin{equation}\label{CpmMult}
\hC_{\ad}=\hCp+\hCm,\qquad \hCp\hCm=\hCm\hCp=0.
\end{equation}

Utilizing the symmetry properties \eqref{XSym} of the structure constants of $\mfg$ with respect to index permutation, the graded Jacobi identity \eqref{SLieJac}, and the identity
\begin{equation}\label{XXdelta}
(-1)^{[i_1][i_2]}X^{i_1i_2}{}_aX^b{}_{i_1i_2}=-\delta^b_a \;\;\;
\Leftrightarrow \;\;\;
(-1)^{[i_1][i_2]} X^{i_1i_2}{}_a \, X_{i_1i_2b}=-{\sf g}_{ab} \; ,
\end{equation}
where $X^{i_1i_2}{}_a=\kg^{i_2j_2}X^{i_1}{}_{j_2a}$, we get a convenient form of $\hCm$:
\begin{equation}\label{CmconvC}
(\hCm)^{i_1i_2}{}_{j_1j_2}=\frac{1}{2}(-1)^{[i_1][i_2]}X^{i_1i_2}{}_aX^a{}_{j_1j_2}.
\end{equation}

By \eqref{CmconvC}, \eqref{XSym}, \eqref{KilDefG} and identities \eqref{XXdelta}, one can derive the following equality:
\begin{equation}\label{Cm2byCm}
\hCm^2=-\frac{1}{2}\hCm.
\end{equation}
Besides, from \eqref{CmconvC}, \eqref{XSym}, \eqref{KilProp4} we get:
\begin{equation}\label{CadpmKRel}
\hCad\bK=\bK\hCad=-\bK,\qquad \hCm\bK=\bK\hCm=0,\qquad \hCp\bK=\bK\hCp=-\bK,
\end{equation}
where the last equality in \eqref{CadpmKRel} is a consequence of the other two and of the first relation in \eqref{CpmMult}.

Supertraces of the operators $\bI$, $\bP$, $\bK$, $\hCad$, $\hCp$, and $\hCm$, as well as of some of their powers can be obtained by using \eqref{strX}, \eqref{CadGenC}, \eqref{IPKGenC}, \eqref{CmconvC}, \eqref{Cm2byCm}:
\begin{equation}\label{strGenMany}
\begin{gathered}
\bstr (\hCad)=0,\qquad \bstr(\hC_{\pm})=\pm\frac{1}{2}\sdim\mfg,\qquad \bstr (\hCad^2)=\sdim\mfg,\\
\bstr(\hCm^2)=-\frac{1}{2}\bstr(\hCm)=\frac{1}{4}\sdim\mfg,\qquad \bstr(\hCp^2)=\bstr(\hCad^2-\hCm^2)=\frac{3}{4}\sdim \mfg,\\
\bstr(\hCm^3)=\frac{1}{4}\bstr(\hCm)=-\frac{1}{8}\sdim\mfg,\qquad \bstr(\hCp^3)=\bstr(\hCad^3-\hCm^3)=-\frac{1}{8}\sdim \mfg,\\
\bstr (\bK)=\sdim\mfg,\qquad \bstr (\bI)=(\sdim\mfg)^2,\qquad \bstr (\bP)=\sdim \mfg.
\end{gathered}
\end{equation}
Here $\bstr=\str_1\str_2$ is the trace in $V_{\ad}\otimes V_{\ad}$, and the indices $1$ and $2$ in $\str_1$ and $\str_2$ refer to the tensor components of $V_{\ad}\otimes V_{\ad}$.

\section{The $osp(M|N)$ Lie superalgebra}\label{Sec3}

 There exist various conventions on how to define the orthosymplectic Lie superalgebras $osp(M|N)$ (see e.g.\cite{Kac} --
 \cite{Sorba2}), all of which are equivalent.
 In this section, we fix the definition of these algebras
 that was formulated in \cite{FIKK}, \cite{IKK}, as it is the most convenient for our purpose.
 For this, we introduce a scalar product $\varepsilon$ in the space  $\varepsilon$ $V_{(M|N)}$ (here $N=2n$ is even). Its components $\varepsilon_{ab}\equiv \varepsilon(e_a,e_b)$ in some homogeneous basis $\{e_a\}_{a=1}^{M+N}$ of $V_{(M|N)}$ are given by the matrix:
\begin{equation}\label{Jay}
\varepsilon=
\left(
\begin{array}{c|c}
I_M & 0\\
\hline
0 & J_{N}\\
\end{array}
\right).
\end{equation}
Here $I_M$ is the $(M\times M)$ identity matrix. The antisymmetric matrix $J_N$ is
\begin{equation}
J_N=
\left(
\begin{array}{c c}
0 & I_n\\
-I_n & 0
\end{array}
\right).
\end{equation}
The definition \eqref{Jay} of $\varepsilon$ implies the following relations on its components:
\begin{equation}\label{epsProp}
\varepsilon _{ba}=(-1)^{[a]}\varepsilon _{ab}=(-1)^{[b]}\varepsilon _{ab}=(-1)^{[a][b]}\varepsilon _{ab}\;.
\end{equation}
The metric $\varepsilon$ can be used to raise and lower indices:
\begin{equation}\label{UpLow}
z_{a\dots}=\varepsilon _{ab}z^b{}_{\dots},\qquad z^{a\dots}=\ovarepsilon ^{ab}z_{b}{}^{\dots}\;,
\end{equation}
where $\ovarepsilon ^{ab}$ are the components of the matrix $\varepsilon^{-1}$:
\begin{equation}
\varepsilon ^{ab}\varepsilon _{bc}=\delta^a_c,\qquad \varepsilon _{ab}\varepsilon ^{bc}=\delta_a^c\;.
\end{equation}
By \eqref{UpLow}, $
\varepsilon^{ab}=\ovarepsilon^{ac}\ovarepsilon^{bd}\varepsilon_{cd}=\ovarepsilon^{ba}$, i.e., the matrix of the metric tensor with the upper indices $\varepsilon^{ab}$ does not coincide with the inverse matrix $\ovarepsilon^{ab}$. From now on, we only use the matrix $\ovarepsilon^{ab}$.

Define the $osp(M|N)$ Lie superalgebra as the algebra of operators $A:V_{(M|N)}\to V_{(M|N)}$ that leave the metric $\varepsilon$ invariant. For homogeneous $A$ this means:
\begin{equation}\label{epsInv1}
\varepsilon(A(u),v)+(-1)^{[A][u]}\varepsilon(u,A(v))=0,
\end{equation}
for any homogeneous $u,v\in V_{(M|N)}$. If $A=A_{\oo}+A_{\ol}$ is not homogeneous, with $A_{\oo}$ and $A_{\ol}$ being its even and odd parts, respectively, we require equation \eqref{epsInv1} to hold simultaneously for $A_{\oo}$ and $A_{\ol}$. Expanding $u=u^a e_a$ and $v=v^b e_b$ over the basis $\{e_a\}$, using the definitions of the components $\varepsilon_{ab}=\varepsilon(e_a,e_b)$ of the metric $\varepsilon$ and of the operator $A(e_a)=e_b A^b{}_a$, as well as the fact that the grading of a homogeneous operator $A$ with nonzero components $A^b{}_a$ equals $[a]+[b]$, we can rewrite \eqref{epsInv1} in the component form \cite{FIKK}, \cite{IKK}:
\begin{equation}\label{epsInv2}
A^c{}_a\varepsilon_{cb}+(-1)^{[a]+[a][b]}\varepsilon_{ac}A^c{}_b=0.
\end{equation}
Note that for the left-hand side of \eqref{epsInv2} to be well-defined, $A$ needs not be homogeneous; thus, \eqref{epsInv2} can be viewed as the definition of invariance of a metric with respect to the action of an arbitrary operator $A$. Multiplying \eqref{epsInv2} by $(-1)^{[a]+[a][b]}$ yields:
\begin{equation}\label{epsInv3}
(-1)^{[a]+[a][b]}A^c{}_a\varepsilon_{cb}+\varepsilon_{ac}A^c{}_b=0.
\end{equation}
To draw an analogy between \eqref{epsInv3} and the definition of the matrix Lie algebras $so(M)$ and $sp(N)$, we introduce the operation of supertransposition of any $A:V_{(M|N)}\to V_{(M|N)}$, the application of which results in an operator $A^T:\overline{V}_{(M|N)}\to\overline{V}_{(M|N)}$ where $\overline{V}_{(M|N)}$ is the dual space of $V_{(M|N)}$. The components $(A^T)_b{}^a$ of $A^T$, which are defined by$A^T(\epsilon^a)=\epsilon^b(A^T)_b{}^a$ with $\{\epsilon^a\}$ being the dual basis of $\{e_a\}$ in $\overline{V}_{(M|N)}$, are given explicitly by
\begin{equation}\label{transpC}
(A^T)_a{}^b=(-1)^{[a]+[a][b]}A^b{}_a.
\end{equation}
Thus, for the $(M+N)\times (M+N)$ matrix of $A^T$ we have:
\begin{equation}\label{transp}
A=\left(\begin{array}{c|c}
X & Y\\
\hline
Z & W
\end{array}
\right)\implies
A^T
=
\left(
\begin{array}{c|c}
X^t & Z^t\\
\hline
-Y^t & W^t
\end{array}
\right).
\end{equation}
Here $X,\ Y,\ Z,\ W$ are $M\times M$, $M\times N$, $N\times M$, $N\times N$ matrices, respectively, and $t$ denotes the usual matrix transposition.

Using \eqref{transpC} and \eqref{epsProp}, we can rewrite \eqref{epsInv3} as
\begin{equation}\label{epsInv4}
(A^T)_a{}^c \, \varepsilon_{cb}+\varepsilon_{ac} \, A^c{}_b=0.
\end{equation}
The matrix form of \eqref{epsInv4} is:
\begin{equation}\label{epsInv5}
A^T\varepsilon+\varepsilon A=0.
\end{equation}
The form of \eqref{epsInv5} coincides with that of an analogous expression used in the definitions of the matrix algebras $so(M)$ and $sp(N)$
$(N=2n)$ with the appropriate choice of the metric $\varepsilon$ and supertransposition substituted with the regular transposition. Using \eqref{transp}, where $Y,Z$ and $W$ are viewed as block matrices, and \eqref{epsInv5} we infer the explicit form of the matrix of $A$:
\begin{equation}\label{ospMat}
A=\left(
\begin{array}{c|c c}
X & Q & S\\
\hline
-S^t & E & F\\
Q^t & G & -E^t
\end{array}
\right)\;.
\end{equation}
Here $X$ is a matrix of dimension $M\times M$ (as in (\ref{transp})),
$M\times n$ blocks $Q$ and $S$ form the $M\times 2n$ matrix $Y=(Q\; S)$,
and the $n\times n$ matrices $E$, $F$, $G$ comprise $W$. Furthermore, $X$, $F$ and $G$ satisfy $X^t=-X$, $F^t=F$ and $G^t=G$.

Note that supertransposition does not possess some properties intrinsic to the usual transposition. In general, for some $A,B:V_{(M|N)}\to V_{(M|N)}$ it may be that $
(A^T)^T\neq A$ and $(AB)^T\neq B^T A^T.
 $
Nevertheless,
 by \eqref{transpC} and \eqref{transp}:
\begin{equation}\label{ospTrel}
[A,B]^T=-[A^T,B^T] \; ,
\end{equation}
 from where it follows that the vector space of all operators $A$, satisfying
  \eqref{epsInv4} and \eqref{epsInv5}, is closed under the Lie bracket \eqref{AsToLie}. Therefore, it forms a Lie superalgebra.

To find all solutions of  \eqref{epsInv2}, we introduce the following operators in the space $\End(V_{(M|N)})$:
\begin{equation}\label{projPM}
\proj_\pm(E)^a{}_c=\frac{1}{2}\big(E^a{}_c\pm(-1)^{[c]+[c][a]}
\varepsilon_{cb}E^b{}_d\ovarepsilon^{da}\big)\;,
\end{equation}
where $E\in\End(V_{(M|N)})$, and $E^a{}_b$ is its matrix. It is easy to check that the operators $\proj_\pm$ satisfy
\begin{equation}\label{ProjProds}
\proj_A \; \proj_B =\proj_A \; \delta_{AB} \qquad (A,B=+,-) ,\qquad
\proj_++\proj_-=I_{M+N} \; .
\end{equation}
and therefore constitute a full system of mutually orthogonal projectors in $\End(V_{(M|N)})$.

In terms of $\proj_\pm$, equation \eqref{epsInv2} is rewritten as:
\begin{equation}\label{GenCon3}
\big(\proj_+A\big)^a{}_b=0\iff \proj_+A=0\iff \proj_- A=A\;.
\end{equation}
The latter condition is satisfied if and only if $A$ lies in the image of $\proj_-$. Hence, the matrix of any operator $A\in osp(M|N)$
is of the form:
\begin{equation}\label{GenCon4}
A^a{}_b=E^a{}_b-(-1)^{[b]+[b][a]}\varepsilon _{bc}E^c{}_d\ovarepsilon ^{da},
\end{equation}
where $||E^a{}_b||\in \Mat_{M+N}(\mathbb{C})$ is an arbitrary matrix.

 The basis elements $M_i{}^j \in osp(M|N)$
 in the defining representation are realized as matrices
 $(M_i{}^j)^a{}_b$ obtained from \eqref{GenCon4}
 by the substitution
$E^a{}_b \to (e_i{}^j)^a{}_b$:
\begin{equation}\label{Bas1}
(M_i{}^j)^a{}_b=(e_i{}^j)^a{}_b-(-1)^{[b]+[b][a]}
\varepsilon _{bc}(e_i{}^j)^c{}_d\ovarepsilon ^{da}\;,
\end{equation}
 where $e_i{}^j$ are the matrix identities in $V_{(M|N)}$, $(e_i{}^j)^a{}_b=\delta^j_b\delta^a_i$,
 see (\ref{MatUn}).
Lowering the index $j$ in \eqref{Bas1} via the metric $\varepsilon$ given in \eqref{Jay}, we get the final form of the matrices of the $osp(M|N)$ Lie superalgebra basis elements $M_{ij}$:
\begin{equation}\label{MatBas}
(M_{ij})^a{}_b=\varepsilon _{jb}\delta^a_i-(-1)^{[a][b]}\varepsilon _{ib}\delta^a_j=\varepsilon_{jb}\delta^a_i-(-1)^{[i][j]}\varepsilon _{ib}\delta^a_j\;.
\end{equation}
The degree of $M_{ij}$ is $[i]+[j]$. Moreover, $M_{ij}$ satisfy:
\begin{equation}
M_{ij}=-(-1)^{[i][j]}M_{ji}.
\end{equation}
Taking this condition into account, we require the components of any vector (or covector) from the $osp(M|N)$ Lie superalgebra to satisfy the same index permutation symmetry, that is:
$
X^{ij}=-(-1)^{[i][j]}X^{ji}$, $Y_{ij}=-(-1)^{[i][j]}Y_{ji}$,
where $X^{ij}$ are the coordinates of an arbitrary vector from $osp(M|N)$, and $Y_{ij}$ are the coordinates of an arbitrary covector. This requirement allows us to achieve uniqueness in assignment of coordinates to vectors of $osp(M|N)$ in the basis\eqref{MatBas} (and to covectors in the dual basis of \eqref{MatBas}).

The Lie superbracket \eqref{AsToLie} of $M_{ij}$ is
\begin{equation}
\lbrack M_{ij},M_{km}\rbrack=\varepsilon _{jk}M_{im}-(-1)^{[k][m]}\varepsilon _{jm}M_{ik}-(-1)^{[i][j]}\varepsilon _{ik}M_{jm}+(-1)^{[i][j]+[k][m]}\varepsilon _{im}M_{jk}
\end{equation}
The structure constants $X_{i_1i_2,j_1j_2}{}^{k_1k_2}$ of $osp(M|N)$ are defined by
\begin{equation}
\lbrack M_{i_1i_2},M_{j_1j_2}\rbrack = X_{i_1i_2,j_1j_2}{}^{k_1k_2} M_{k_1k_2}\;,
\end{equation}
and have the following explicit form:
\begin{equation}\label{OspStrCon}
\begin{aligned}
X^{k_1k_2}{}_{i_1i_2,j_1j_2}&=\varepsilon _{i_2j_1}\delta_{i_1}^{(k_1}\delta_{j_2}^{k_2)}-(-1)^{[j_1][j_2]}\varepsilon _{i_2j_2}\delta_{i_1}^{(k_1}\delta_{j_1}^{k_2)}-\\
&-(-1)^{[i_1][i_2]}\varepsilon _{i_1j_1}\delta_{i_2}^{(k_1}\delta_{j_2}^{k_2)}+(-1)^{[i_1][i_2]+[j_1][j_2]}\varepsilon _{i_1j_2}\delta_{i_2}^{(k_1}\delta_{j_1}^{k_2)}\;,
\end{aligned}
\end{equation}
where $A^{(k_1k_2)}$ stands for:
\begin{equation}\label{IndASym}
A^{(k_1k_2)}=\frac{1}{2}(A^{k_1k_2}-(-1)^{[k_1][k_2]}A^{k_2k_1}).
\end{equation}

By \eqref{KilDefG}, the Cartan-Killing metric \eqref{KilDefG} of $osp(M|N)$ in the basis \eqref{MatBas} equals
\begin{equation}\label{Kil}
\begin{aligned}
{\sf g}_{i_1i_2,j_1j_2}&=(-1)^{[m_1]+[m_2]}X^{m_1m_2}{}_{i_1i_2,k_1k_2}
X^{k_1k_2}{}_{j_1j_2,m_1m_2}\\
&=2\big(\omega-2\big)\lbrack \varepsilon _{i_1j_2}\varepsilon _{i_2j_1}-(-1)^{[j_1][j_2]}\varepsilon _{i_1j_1}\varepsilon _{i_2j_2}\rbrack \;,
\end{aligned}
\end{equation}
where $\omega \equiv M-N$. For $\omega=2$ the metric \eqref{Kil} is degenerate and thus this case is omitted in what follows. The identity operator $\hI$ that acts on the algebra $osp(M|N)$ (which is considered here as a vector space embedded into $V_{(M|N)}^{\otimes 2}$) has the following components in the basis \eqref{MatBas}
\begin{equation}
\hI^{i_1i_2}{}_{j_1j_2}=\frac{1}{2}\big(\delta^{i_1}_{j_1}
\delta^{i_2}_{j_2}-(-1)^{[i_1][i_2]}\delta^{i_1}_{j_2}
\delta^{i_2}_{j_1}\big)\; ,
\end{equation}
 and appears to be a projector onto (super)antisymmetric second-rank tensors (see Section {\bf \ref{projt2}}).
The components of the inverse Cartan-Killing metric are defined by \eqref{InvKilG}:
\begin{equation}
{\sf g}^{i_1i_2,j_1j_2}{\sf g}_{j_1j_2,k_1k_2}=\hI^{i_1i_2}{}_{k_1k_2}, \qquad {\sf g}_{i_1i_2,j_1j_2}{\sf g}^{j_1j_2,k_1k_2}=\hI^{k_1k_2}{}_{i_1i_2}.
\end{equation}
Direct calculations yield their explicit form:
\begin{equation}\label{InvKil}
{\sf g}^{i_1i_2,j_1j_2}=\frac{1}{8(\omega-2)}\big(\ovarepsilon^{i_1j_2}
\ovarepsilon^{i_2j_1}-(-1)^{[i_1][i_2]}\ovarepsilon^{i_1j_1}
\ovarepsilon^{i_2j_2}\big)\;.
\end{equation}

\subsection{Projectors onto invariant subspaces of tensor product of two defining representations\label{projt2}}
Using \eqref{MatBas}, \eqref{InvKil} and \eqref{SpCas}, we write the split Casimir operator in the tensor product of two defining representations
\cite{FIKK}, \cite{IKK}:
\begin{equation}
\hC_f^{k_1k_2}{}_{m_1m_2}=\frac{1}{2(\omega-2)}\big((-1)^{[k_1][k_2]}
\delta^{k_1}_{m_2}\delta^{k_2}_{m_1}-\ovarepsilon ^{k_1k_2}
\varepsilon _{m_1m_2}\big)\;.
\end{equation}
Define the operators $\bii,\mcP,\mcK:V^{\otimes 2}_{(M|N)}\to V^{\otimes 2}_{(M|N)}$:
 (the matrix forms of these operators are given on the right of these formulas):
\begin{align}
\bii&=e_i{}^i\otimes e_j{}^j &\implies & &
\bii^{k_1k_2}{}_{m_1m_2}= \delta^{k_1}_{m_1}\delta^{k_2}_{m_2} \;,\label{Idef}\\
\mcP&=(-1)^{[j]}e_i{}^j\otimes e_j{}^i &\implies & &
\mcP^{k_1k_2}{}_{m_1m_2}= (-1)^{[k_1][k_2]}\delta^{k_1}_{m_2}
\delta^{k_2}_{m_2} \;,\label{Pdef}\\
\mcK&=(-1)^{[i]+[i][k]}\varepsilon^{ij}
\varepsilon_{km}e_i{}^k\otimes e_j{}^m
&\implies & &
\mcK^{k_1k_2}{}_{m_1m_2}=  \varepsilon^{k_1k_2}\varepsilon_{m_1m_2}   \; .
\label{Kdef}
\end{align}
Here $\bii$ is the identity operator,
$\mcP$ is the superpermutation, and $e_i{}^j$ are the defined in \eqref{MatUn} matrix identities that act on $V_{(M|N)}$.
$\bii$, $\mcP$ and $\mcK$ have the following properties:
\begin{equation}
\mcP^2=\bii,\quad \mcK^2=\omega \mcK,\quad \mcP\mcK=\mcK\mcP=\mcK\;.
\end{equation}
In terms of $\mcP$ and $\mcK$, the split Casimir operator in the defining representation can be written as \cite{FIKK}, \cite{IKK}
\begin{equation}\label{hCf}
\hC_f=\frac{1}{2(\omega-2)}\big(\mcP-\mcK\big)\;.
\end{equation}
The characteristic identity for $\hC_f$ has degree three:
\begin{equation}\label{fChPlUn}
\hC_f^3+\frac{\omega-1}{2(\omega-2)}\hC_f^2-
\frac{1}{4(\omega-2)^2}\hC_f-\frac{\omega-1}{8(\omega-2)^3} = 0 \;.
\end{equation}
Formula \eqref{fChPlUn} is easily verified by using (\ref{hCf}) and the relations
 \begin{equation}\label{hCf23}
\hC_f^2=\frac{1}{4(\omega-2)^2}\bii+\frac{1}{4(\omega-2)}\mcK\;,\qquad\quad \hC_f^3=\frac{1}{8(\omega-2)^3}\big\lbrack \mcP-(\omega^2-3\omega + 3)\mcK\big\rbrack\;.
\end{equation}
The left-hand side of \eqref{fChPlUn} can be factorized, which results in:
\begin{equation}\label{fChPl}
\Big(\hC_f-\frac{1}{2(\omega-2)}\Big)
\Big(\hC_f+\frac{1}{2(\omega-2)}\Big)
\Big(\hC_f+\frac{\omega-1}{2(\omega-2)}\Big)=0 \; .
\end{equation}
Using this factorized form of the characteristic identity for $\hC_f$ and \eqref{GenPr},
 where we put $p=3$, fix the roots
$$
a_1=\frac{1}{2(\omega-2)} \; , \;\;\;\; a_2=-\frac{1}{2(\omega-2)}\; , \;\;\;\; a_3=\frac{1-\omega}{2(\omega-2)} \; ,
$$
and utilize \eqref{hCf}, and the left hand-side of \eqref{hCf23} for $\hC_f$ and $\hC_f^2$ gives us three projectors onto invariant subspaces of $V_{(M|N)}^{\;\; \otimes 2}$:
\begin{equation}\label{ospFproj}
\proj_1 =\frac{1}{2}(\bii+\mcP)-\frac{1}{\omega}\mcK\;,\quad
\proj_2 =\frac{1}{2}(\bii-\mcP)\;,\quad
\proj_3 =\frac{1}{\omega}\mcK\;.
\end{equation}
Note that the substitutions $\omega\to M$ and $\omega\to -N$
turn the projectors \eqref{ospFproj} into the corresponding projectors onto invariant subspaces of the representation $T_f^{\otimes 2}$ of the $so(M)$ and $sp(N)$ Lie algebras  (for the explicit formulas see, e.g. \cite{Cvit} and \cite{Book2}).

 To conclude this subsection, we show that the solution $R^{i_1 i_2}_{\;\; k_1 k_2}(u)$
 of the graded Yang-Baxter equation \cite{KulSkl}
(see also \cite{FIKK}, \cite{Isaev}):
\begin{equation}
\label{grYBE}
 \begin{array}{c}
 R^{i_{1}i_{2}}_{\;\; j_{1}j_{2}}(u) (-1)^{[j_1][j_2]}
R^{j_{1}i_{3}}_{\;\; k_{1}j_{3}}(u+v) (-1)^{[k_1][j_2]}
R^{j_{2}j_{3}}_{\;\; k_{2}k_{3}}(v) = \\ [0.2cm]
= R^{i_{2}i_{3}}_{\;\; j_{2}j_{3}}(v) (-1)^{[i_1][j_2]}
R^{i_{1}j_{3}}_{\;\; j_{1}k_{3}}(u+v) (-1)^{[j_1][j_2]}
R^{j_{1}j_{2}}_{\;\; k_{1}k_{2}}(u),
\end{array}
\end{equation}
which is invariant under the action of $osp(M|N)$ in the defining representation, can be written in terms of the split Casimir operator (\ref{hCf}).  Recall that this solution can be written in several equaivalent ways \cite{FIKK}, \cite{IKK}, \cite{Isaev}:
\begin{equation}\label{ospYBsol}
R(u)=\frac{1}{1-u}\Big(u+\mcP-\frac{u}{u+\omega/2-1}\mcK\Big)=
 \frac{1+u}{1-u}\proj_1-\proj_2+\frac{\omega/2-1-u}{\omega/2-1+u}\proj_3.
\end{equation}
 The important point to note here is that this solution can be written as a rational function of $\hC_f$:
 \be\lb{best5}
 R(u) =\frac{(\omega-2)\hC_f +1/2 +u}{(\omega-2)\hC_f  +1/2 -u} \; .
 \ee
 This form of the $R$-matrix generalizes that obtained in \cite{IsKr}
 for the $so(M)$ and $sp(2n)$ Lie superalgebras in the defining representation.
Note that the solution \eqref{ospYBsol}, (\ref{best5}) is unitary, i.e., $\mcP R(u)\mcP R(-u)=\bii$.

\subsection{Projectors onto invariant subspaces of the tensor product of two adjoint representations}
In order to find a characteristic identity for the split Casimir operator in the adjoint representation $\hC_{\ad}$, we first write the components of the basis elements of $osp(M|N)$ in this representation. By \eqref{AdComp}, they coincide with the structure constants \eqref{OspStrCon} of $osp(M|N)$:
\begin{equation}\label{AdBas}
\begin{aligned}
(M_{i_1i_2})^{k_1k_2}{}_{j_1j_2}&=\varepsilon _{i_2j_1}\delta_{i_1}^{(k_1}\delta_{j_2}^{k_2)}-(-1)^{[j_1][j_2]}\varepsilon _{i_2j_2}\delta_{i_1}^{(k_1}\delta_{j_1}^{k_2)}-\\
&-(-1)^{[i_1][i_2]}\varepsilon _{i_1j_1}\delta_{i_2}^{(k_1}\delta_{j_2}^{k_2)}+(-1)^{[i_1][i_2]+[j_1][j_2]}\varepsilon _{i_1j_2}\delta_{i_2}^{(k_1}\delta_{j_1}^{k_2)}\;.
\end{aligned}
\end{equation}
A comparison of \eqref{MatBas} and \eqref{AdBas} suggests a convenient relation between the components of the basis elements $M_{ij}$ of $osp(M|N)$ in the adjoint and defining representations:
\begin{equation}\label{MAdFcor}
(M_{i_1i_2})^{k_1k_2}{}_{j_1j_2}=
2(M_{i_1i_2})^{(k_1}{}_{(j_1}\delta^{k_2)}_{j_2)}=4\Sym_{1\leftrightarrow 2}\varepsilon _{i_2j_1}\delta^{k_1}_{i_1}\delta^{k_2}_{j_2}\;,
\end{equation}
where $\Sym_{1\leftrightarrow 2}$ denotes (anti)symmetrisation over the pairs of indices $(i_1,i_2)$, $(j_1,j_2)$ and $(k_1,k_2)$.
 Using \eqref{MAdFcor}, we deduce the following form of the components of $\hCad$:
\begin{align}
(\hC_{\ad})^{k_1k_2k_3k_4}{}_{m_1m_2m_3m_4}&=g^{i_1i_2j_1j_2}(M_{i_1i_2}\otimes M_{j_1j_2})^{k_1k_2k_3k_4}{}_{m_1m_2m_3m_4}\label{CadFbas}\\
&=(-1)^{([j_1]+[j_2])([m_1]+[m_2])}g^{i_1i_2j_1j_2}(M_{i_1i_2})^{k_1k_2}{}_{m_1m_2}(M_{j_1j_2})^{k_3k_4}{}_{m_3m_4}\nonumber\\
&=4(-1)^{([j_1]+[j_2])([m_1]+[m_2])}g^{i_1i_2j_1j_2}(M_{i_1i_2})^{(k_1}{}_{(m_1}\delta^{k_2)}_{m_2)}(M_{j_1j_2})^{(k_3}{}_{(m_3}\delta^{k_4)}_{m_4)}\;,\nonumber
\end{align}
As a result, we find a connection between the components of the split Casimir operator in the adjoint and defining representations:
\begin{equation}\label{Crel}
(\hC_{\ad})^{k_1k_2k_3k_4}{}_{m_1m_2m_3m_4}=
4(\hC_f)^{(k_1k_2)(k_3k_4)}_{13}{}_{(m_1m_2)(m_3m_4)}\; ,
\end{equation}
where
\begin{align}
(\hC_f)^{k_1k_2k_3k_4}_{13}{}_{m_1m_2m_3m_4}&=g^{i_1i_2j_1j_2}
(M_{i_1i_2}\otimes I\otimes M_{j_1j_2}\otimes I)^{k_1k_2k_3k_4}{}_{m_1m_2m_3m_4}\\
&=(-1)^{([j_1]+[j_2])([m_1]+[m_2])}g^{i_1i_2j_1j_2}
(M_{i_1i_2})^{k_1}{}_{m_1}\delta^{k_2}_{m_2}
(M_{j_1j_2})^{k_3}{}_{m_3}\delta^{k_4}_{m_4}\nonumber\;.
\end{align}
 The lower index ''$13$ '' of $\hC_f$ is defined in accordance with \eqref{A13}.

In what follows, we need the operators $\mcP_{\alpha\beta}$ and $\mcK_{\alpha\beta}:V_{(M|N)}^{\otimes 4}\to V_{(M|N)}^{\otimes 4}$
 ($\alpha,\beta=1,\dots, 4$, $\alpha\neq \beta$), which are built from \eqref{Pdef} and \eqref{Kdef}
by \eqref{A123}.
 Direct calculations show that $\mcP_{\alpha,\alpha+1}$ and $\mcK_{\alpha,\alpha+1}$ satisfy the relations between the generators
 $\sigma_\alpha$ and $\kappa_\alpha$ of the Brauer algebra ${\cal B}r_4(\omega)$ :
\begin{equation}\label{Brauer}
\begin{gathered}
\sigma_\alpha^2=I\;,\  \kappa_\alpha^2=\omega \kappa_\alpha\;,\  \sigma_\alpha \kappa_\alpha=\kappa_\alpha \sigma_\alpha=\kappa_\alpha\;, \  \sigma_\alpha\kappa_\alpha=\kappa_\alpha\sigma_\alpha=\kappa_\alpha\;,\quad \alpha=1,\dots,3\;,\\
\sigma_{\alpha}\sigma_{\beta}=\sigma_{\beta}\sigma_{\alpha}\;,\quad \kappa_{\alpha}\kappa_\beta=\kappa_{\beta}\kappa_{\alpha}\;,\quad \sigma_\alpha \kappa_\beta=\kappa_\beta \sigma_\alpha\;,\quad |\alpha-\beta|>1\;,\\
\sigma_\alpha \sigma_{\alpha+1}\sigma_\alpha=\sigma_{\alpha+1}\sigma_{\alpha}
\sigma_{\alpha+1}\,\quad \kappa_\alpha \kappa_{\alpha+1} \kappa_{\alpha}=\kappa_\alpha\;,\quad \kappa_{\alpha+1}\kappa_{\alpha}\kappa_{\alpha+1}=\kappa_{\alpha+1}\;,\\
\sigma_{\alpha}\kappa_{\alpha+1}\kappa_{\alpha}=\sigma_{\alpha+1}
\kappa_{\alpha}\;,\quad \kappa_{\alpha+1}\kappa_\alpha \sigma_{\alpha+1}=\kappa_{\alpha+1}\sigma_\alpha,\quad \alpha=1,\dots,3 \; .
\end{gathered}
\end{equation}
Thus, $\mcP_{\alpha,\alpha+1} = \tau(\sigma_\alpha)$ and $\mcK_{\alpha,\alpha+1} = \tau(\kappa_\alpha)$, where
$\tau$ is a representation of ${\cal B}r_4(\omega)$ in the space $V_{(M|N)}^{\otimes 4}$.
Recall that, by convention \eqref{Aa1a2a3S}, the operators $\mcP_{\alpha\beta}$ for $\beta > \alpha+1$ can be obtained from $\mcP_{\alpha'\beta'}$
 by a consequtive action of adjacent transpositions
 $\mcP_{\gamma,\gamma+1}$. For instance,
\begin{equation}
\mcP_{14}=\mcP_{34}\mcP_{13}\mcP_{34}=
\mcP_{34}\mcP_{23}\mcP_{12}\mcP_{23}\mcP_{34}=
\tau(\sigma_3\sigma_2\sigma_1\sigma_2\sigma_3) \; .
\end{equation}
Besides, being even (or, alternatively by \eqref{Brauer}), the operators $\mcP_{\alpha_1\alpha_2}$ and $\mcP_{\beta_1\beta_2}$ commute for $\alpha_1\neq \beta_1,\beta_2$ and $\alpha_2\neq \beta_1,\beta_2$, and so do $\mcK_{\alpha_1\alpha_2}$ and $\mcK_{\beta_1\beta_2}$
\begin{equation}
\mcP_{\alpha_1\alpha_2}\mcP_{\beta_1\beta_2}=
\mcP_{\beta_1\beta_2}\mcP_{\alpha_1\alpha_2},\quad \mcK_{\alpha_1\alpha_2}\mcK_{\beta_1\beta_2}=
\mcK_{\beta_1\beta_2}\mcK_{\alpha_1\alpha_2}.
\end{equation}

Define the antisymmetrizer
$\mcP^-:V_{(M|N)}^{\otimes 2}\to V_{(M|N)}^{\otimes 2}$
by
\begin{equation}\label{Symdef}
\mcP^-=\frac{1}{2}(\bii-\mcP)\;,
\end{equation}
where $\bii$ and $\mcP$ are given in \eqref{Idef} and \eqref{Kdef}. Then, by \eqref{hCf} and \eqref{Crel},
\begin{equation}\label{hCexp}
\hC_{\ad}=4\mcP_{12}^-\mcP^-_{34}(\hC_f)_{13}\mcP^-_{12}\mcP^-_{34}=\frac{2}{\omega-2}\mcP^-_{12}\mcP^-_{34}(\mcP_{13}-\mcK_{13})\mcP_{12}^-\mcP_{34}^-\;.
\end{equation}

Define the space $V_{\ad}$ of the adjoint representation of the $osp(M|N)$ Lie superalgebra by $V_{\ad}=\mcP^- V_{(M|N)}^{\otimes 2}$. The algebra $osp(M|N)$ coincides with $V_{\ad}$ as a vector space. Now introduce the following operators that act on $V_{\ad}^{\otimes 2}\subset V_{(M|N)}^{\otimes 4}$:
\begin{equation}\label{IPKosp}
\bI=\mcP^-_{12}\mcP^-_{34}\equiv \mcP^-_{12,34}\;,\qquad
\bP=\mcP^-_{12,34}\mcP_{13}\mcP_{24}\mcP^-_{12,34}\;,\qquad
\bK=\mcP_{12,34}^-\mcK_{13}\mcK_{24}\mcP^-_{12,34}\;,
\end{equation}
where we denoted
$
\mcP_{12,34}^-=\mcP_{12}^-\mcP_{34}^-.
$
Note that by (\ref{A123}), the operators (\ref{IPKosp}) (in a way similar to $\mcP$ and $\mcK$) define a Brauer algebra ${\cal B}r_s(\omega)$ representation in the space $V_{\ad}^{\otimes s}$.

The following relations hold for $\bI$, $\bP$ and $\bK$ introduced in \eqref{IPKosp}:
\begin{equation}\label{IPrel}
\bI=\bI\mcP_{12}\mcP_{34}=\mcP_{12}\mcP_{34}\bI,\qquad \bP=\mcP_{12}\mcP_{34}\bP=\bP \mcP_{12}\mcP_{34},
\end{equation}
\begin{equation}\label{PKrel}
\bP^2=\bI,\qquad \bK\bP=\bP\bK=\bK,\qquad \bK^2=\frac{\omega(\omega-1)}{2}\bK,
\end{equation}
\begin{equation}\label{CKrel}
\hC_{\ad}\bP=\bP\hC_{\ad},\qquad \hC_{\ad}\bK=\bK\hC_{\ad}=-\bK.
\end{equation}
The operators \eqref{IPKosp} are invariant with respect to the $osp(M|N)$ Lie superalgebra in the adjoint representation (the definition of ad-invariance is given in (\ref{CasAdInv})). Comparing the last formula in  (\ref{PKrel}) and (\ref{KKK}), we get
\be
\lb{dimOSP}
{\rm sdim} \mathfrak{g} = \frac{\omega(\omega -1)}{2} \; .
\ee

To find the characteristic identity for $\hC_{\ad}$, it is convenient to introduce the symmetric $\hC_+$ and antisymmetric $\hC_-$ projections of $\hC_{\ad}$:
\begin{equation}\label{hCpmosp}
\hC_{\pm}=\frac{1}{2}(\bI\pm\bP)\hC_{\ad},
\end{equation}
which satisfy:
\begin{equation}\label{hCpmPK}
\hC_\pm\hC_\mp=0,\qquad \bP\hC_\pm=\pm\hC_\pm,\qquad \bK\hC_-=\hC_-\bK=0,\qquad \bK \hC_+=\hC_+ \bK=-\bK \; .
\end{equation}
Note that formulas \eqref{PKrel}, \eqref{CKrel} and
\eqref{hCpmPK} were derived for all Lie superalgebras with the nondegenerate Cartan-Killing metric in Section {\bf \ref{21}}.
Substitution of \eqref{hCexp} and \eqref{IPKosp} into \eqref{hCpmosp} gives explicit formulas for the antisymmetric and symmetric parts of $\hC_{\ad}$:
\begin{align}
\hCm&=\frac{1}{\omega-2}\mcP^-_{12,34}(\mcK_{13}\mcP_{24}
-\mcK_{13})\mcP^-_{12,34}\;,\label{hCmExp}\\
\hCp&=\frac{1}{\omega-2}\mcP^-_{12,34}(2\mcP_{24}
-\mcK_{13}-\mcK_{13}\mcP_{24})\mcP^-_{12,34}\;.\label{hCpExp}
\end{align}
\begin{proposition}\label{ospProp}
The antisymmetric $\hCm$ and symmetric $\hCp$ parts of the split Casimir operator of the $osp(M|N)$ Lie superalgebra for $M-N\equiv \omega\neq 0,1,2,4,8$ satisfies:
\begin{equation}\label{hCmChar}
\hC_-^2=-\frac{1}{2}\hC_- \iff \hC_-(\hC_-+\frac{1}{2})=0,
\end{equation}
\begin{equation}\label{Cp3}
\hCp^3=-\frac{1}{2}\hCp^2-\frac{\omega-8}{2(\omega-2)^2}\hCp+\frac{\omega-4}{2(\omega-2)^3}(\bI+\bP-2\bK)\;,
\end{equation}
 \begin{equation}\label{Cp4}
 \hCp^4+\frac{3}{2}\hCp^3+\frac{(\omega+1)(\omega-4)}{2(\omega-2)^2}\hCp^2
 +\frac{\omega^2-12\omega+24}{2(\omega-2)^3}\hCp-
 \frac{(\omega-4)}{2(\omega-2)^3}(\bI+\bP)=0,
 \end{equation}
\begin{equation}\label{Cp5F}
\hCp\Big(\hCp+\bI\Big)\Big(\hCp-\frac{\bI}{\omega-2}\Big)\Big(\hCp+\frac{2\bI}{\omega-2}\Big)\Big(\hCp+\frac{(\omega-4)\bI}{2(\omega-2)}\Big)=0.
\end{equation}
The split Casimir operator $\hCad=\hCm+\hCp$ for $\omega\neq 0,1,2,4,6,8$ satisfies:
\begin{equation}\label{CPolyF}
\hC_{\ad}(\hC_{\ad}+\frac{1}{2})(\hC_{\ad}+1)(\hC_{\ad}-\frac{1}{\omega-2})(\hC_{\ad}+\frac{2}{\omega-2})(\hC_{\ad}+\frac{\omega-4}{2(\omega-2)})=0\;.
\end{equation}
\end{proposition}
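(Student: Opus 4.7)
The plan is to reduce everything to explicit computation in the Brauer-algebra picture of \eqref{hCmExp}--\eqref{hCpExp} plus algebraic manipulation of the resulting cubic. The antisymmetric identity \eqref{hCmChar} is already available, since \eqref{Cm2byCm} establishes $\hCm^{2}=-\tfrac{1}{2}\hCm$ in full generality for every simple Lie superalgebra with nondegenerate Cartan-Killing metric; for $\hCm$ it only needs to be invoked.

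The main obstacle is deriving the cubic \eqref{Cp3}. I would start from
$\hCp=\tfrac{1}{\omega-2}\mcP^{-}_{12,34}(2\mcP_{24}-\mcK_{13}-\mcK_{13}\mcP_{24})\mcP^{-}_{12,34}$
given by \eqref{hCpExp}, compute $\hCp^{2}$ and $\hCp^{3}$ by expanding the bracket and pushing the outer antisymmetrizers through Brauer generators using \eqref{Brauer} together with the commutations of $\mcP$'s and $\mcK$'s on disjoint strands, and then reduce every product that appears to a linear combination of the four operators $\{\bI,\bP,\bK,\hCp\}$ introduced in \eqref{IPKosp}. The delicate part is the bookkeeping of products such as $\mcK_{13}\mcP_{24}\mcK_{13}$ and $\mcP^{-}_{12,34}\mcP_{24}\mcP^{-}_{12,34}$, and of their $\omega$-dependent coefficients; once the expansions of $\hCp^{2}$ and $\hCp^{3}$ in this four-element basis are in hand, the relation \eqref{Cp3} is read off as a linear dependence.

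Granting \eqref{Cp3}, the quartic \eqref{Cp4} follows by multiplying on the left by $(\hCp+\bI)$ and using \eqref{hCpmPK}: since $\bP\hCp=\hCp\bP=\hCp$ and $\hCp\bK=-\bK$, one computes $(\hCp+\bI)(\bI+\bP-2\bK)=2\hCp+\bI+\bP$, so the $\bK$-term drops out and only the combination $\bI+\bP$ survives. Multiplying \eqref{Cp4} in turn by $\hCp$ and using $\hCp(\bI+\bP)=2\hCp$ produces a degree-five polynomial identity in $\hCp$ alone; a direct root check against the candidate roots $0,\,-1,\,1/(\omega-2),\,-2/(\omega-2),\,-(\omega-4)/(2(\omega-2))$ (matching elementary symmetric functions against the computed coefficients) yields the factorized form \eqref{Cp5F}.

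Finally, for \eqref{CPolyF} I combine the identities for $\hCp$ and $\hCm$ using $\hCad=\hCp+\hCm$ and $\hCp\hCm=\hCm\hCp=0$ from \eqref{CpmMult}: then $\hCad^{k}=\hCp^{k}+\hCm^{k}$ for $k\ge 1$, so on the decomposition $V_\ad^{\otimes 2}=\bPad_{+}V_\ad^{\otimes 2}\oplus\bPad_{-}V_\ad^{\otimes 2}$ the operator $\hCad$ is block-diagonal, acting as $\hCp$ on the symmetric block and as $\hCm$ on the antisymmetric one. Writing the target polynomial as $R(x)=(x+\tfrac{1}{2})S(x)$, where $S$ is the quintic on the left of \eqref{Cp5F}, one has $S(\hCad)=S(\hCp)=0$ on the symmetric block, while on the antisymmetric block $S(\hCad)=S(\hCm)$ retains the leading factor $\hCm$ from $S(x)=x\cdot(\text{quartic})$, and the additional factor $(\hCad+\tfrac{1}{2})$ annihilates it via $(\hCm+\tfrac{1}{2})\hCm=0$ from \eqref{hCmChar}. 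Hence $R(\hCad)=0$ on all of $V_\ad^{\otimes 2}$, which is \eqref{CPolyF}.
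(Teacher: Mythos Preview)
Your plan is essentially correct and, in places, cleaner than the paper's proof. One small imprecision: you say you will reduce $\hCp^{2}$ and $\hCp^{3}$ to the four-element span $\{\bI,\bP,\bK,\hCp\}$, but $\hCp^{2}$ does \emph{not} lie in that span. The paper's explicit calculation (its equation for $\hCp^{2}$) shows a residual Brauer-algebra term $\mcP_{12,34}^{-}(\mcK_{13}\mcP_{24}+\mcK_{13})\mcP_{12,34}^{-}$ with coefficient proportional to $(\omega-8)$; only after multiplying by $\hCp$ does this term fold back into $\{\hCp^{2},\hCp,\bI,\bP,\bK\}$, giving \eqref{Cp3}. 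So in practice you must keep $\hCp^{2}$ as a fifth basis element rather than reduce it further. With that correction, your derivation of \eqref{Cp3} is exactly the paper's.

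For \eqref{Cp4} and \eqref{CPolyF}, your route is different and shorter. The paper obtains \eqref{Cp4} by multiplying \eqref{Cp3} by $\hCp$ to isolate $\bK$ and then substituting back; your single multiplication by $(\hCp+\bI)$, using $(\hCp+\bI)(\bI+\bP-2\bK)=2\hCp+\bI+\bP$, kills $\bK$ directly and yields \eqref{Cp4} in one step. For \eqref{CPolyF} the contrast is sharper: the paper writes a generic sextic $\sum_{i}\alpha_{i}\hCad^{i}$, expands via $\hCad^{k}=\hCp^{k}+\hCm^{k}$, computes $\hCp^{5}$ and $\hCp^{6}$ explicitly, and solves a linear system for the $\alpha_{i}$, only afterward factoring the result. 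Your block-diagonal argument --- that $\hCad$ acts as $\hCp$ on $\bPad_{+}V_{\ad}^{\otimes 2}$ and as $\hCm$ on $\bPad_{-}V_{\ad}^{\otimes 2}$, so the quintic $S$ of \eqref{Cp5F} kills the symmetric block and the extra factor $(\hCad+\tfrac{1}{2})$ combines with the leading $\hCad$ in $S$ to kill the antisymmetric block via \eqref{hCmChar} --- is correct and bypasses all of that computation. The paper's approach has the advantage of not presupposing the factorized form of the answer, but once \eqref{Cp5F} is in hand your argument is the more efficient one.
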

\begin{proof}

For $M-N\equiv\omega=2$ the Cartan-Killing metric \eqref{Kil} of $osp(M|N)$ is degenerate, so this case is excluded from consideration. The special cases of $\omega=0,1,4,6,8$ are considered later.

Identity \eqref{hCmChar} for $osp(M|N)$ is a special case of \eqref{Cm2byCm}, which holds for all Lie superalgebras with the nondegenerate Cartan-Killing metric. Note also a useful consequence of \eqref{hCmChar}:
\begin{equation}\label{Cmk}
\hC_-^k=\big(-\frac{1}{2}\big)^{k-1}\hC_-,\quad k\ge 1.
\end{equation}

Using the explicit formula \eqref{hCpExp} for $\hC_+$, one can directly calculate an expression for $\hC_+^2$:
\begin{equation}\label{Cp2}
\hCp^2=\frac{1}{(\omega-2)^2}(\bI+\bP+\bK)-\frac{1}{\omega-2}\hCp+\frac{\omega-8}{2(\omega-2)^2}\mcP_{12,34}^-(\mcK_{13}\mcP_{24}+\mcK_{13})\mcP_{12,34}^-\;.
\end{equation}
If $\omega=8$, then the last term in \eqref{Cp2} is nullified and the characteristic identity for $\hC_+$ takes the form:
\begin{equation}\label{Cp2o8}
\hCp^2=-\frac{1}{6}\hCp+\frac{1}{36}(\bI+\bP+\bK)
\end{equation}
If $\omega\neq 2, 8$, then multiplication of  \eqref{Cp2} by $\hC_+$ yields the third-degree identity \eqref{Cp3}. Note that for $\omega=4$ the last term in \eqref{Cp3} is zero, hence in this case \eqref{Cp3} is the characteristic identity for $\hC_+$ that has the following explicit form:
\begin{equation}\label{Cp3o4}
\hCp^3=-\frac{1}{2}\hCp^2+\frac{1}{2}\hCp\;.
\end{equation}
To obtain a characteristic identity for $\hC_+$ when $\omega\neq 2,4,8$, we get rid of $\bP$ and $\bK$ in \eqref{Cp3}. Multiplying \eqref{Cp3} by $\hCp$ and using \eqref{hCpmPK}, we can express $\bK$ in terms of $\hCp$:
\begin{equation}\label{bK}
\frac{\omega -4}{(\omega-2)^3}\bK=\hCp^4+\frac{1}{2}\hCp^3+\frac{\omega-8}{2(\omega-2)^2}
\hCp^2-\frac{\omega-4}{(\omega-2)^3}\hCp\;.
\end{equation}
 Substitution of $\bK$ from \eqref{bK} into \eqref{Cp3} gives \eqref{Cp4}. Multiplying both sides of \eqref{bK} by $(\hCp+\bI)$ and using the last relation in \eqref{hCpmPK}, we get the characteristic identity for $\hCp$:
\begin{equation}\label{Cp5}
\hCp^5+\frac{3}{2}\hCp^4+
\frac{(\omega+1)(\omega-4)}{2(\omega-2)^2}\hCp^3+
\frac{\omega^2-12\omega+24}{2(\omega-2)^3}\hCp^2-
\frac{\omega-4}{(\omega-2)^3}\hCp=0,
\end{equation}
which can be rewritten in the following form:
\begin{equation}\label{Cp5E}
\hCp^5=-\frac{3}{2}\hCp^4-\frac{(\omega+1)(\omega-4)}{2(\omega-2)^2}\hCp^3-\frac{\omega^2-12\omega+24}{2(\omega-2)^3}\hCp^2+\frac{\omega-4}{(\omega-2)^3}\hCp.
\end{equation}
Now, \eqref{Cp5F} is the result of factorizing \eqref{Cp5}. For further calculations, we also need an expression for $\hCp^6$, which can be derjved by multiplying \eqref{Cp5E} by $\hCp$ and using the known polynomial for $\hCp^5$ from \eqref{Cp5E}:
\begin{equation}\label{Cp6E}
\begin{aligned}
\hCp^6&=\frac{(7\omega^2-30\omega+44)}{4(\omega-2)^2}\hCp^4+\frac{3\omega^3-17\omega^2+30\omega-24}{4(\omega-2)^3}\hCp^3\\
&+\frac{3\omega^2-32\omega+56}{4(\omega-2)^3}\hCp^2-\frac{3(\omega-4)}{2(\omega-2)^3}\hCp.
\end{aligned}
\end{equation}

Our next goal is to find a characteristic polynomial for the split Casimir operator $\hC_{\ad}=\hC_++\hC_-$ by using the expressions obtained. We look for such an expression in the form of a polynomial in $\hC_{\ad}$ of degree six with arbitrary coefficients $\alpha_i$:
\begin{equation}\label{ArbPoly}
\hC_{\ad}^6+\alpha_5\hC_{\ad}^5+\alpha_4\hC_{\ad}^4
+\alpha_3\hC_{\ad}^3+\alpha_2\hC_{\ad}^2+\alpha_1\hC_{\ad}+\alpha_0\;.
\end{equation}
We need to find $\alpha_i$ that nullify \eqref{ArbPoly}. The first formula in
\eqref{hCpmPK} implies $\hC^k=\hC_+^k+\hC_-^k$,
thus equating the polynomial \eqref{ArbPoly} to zero yields the equation
\begin{align*}
&\hC_{+}^6+\alpha_5\hC_{+}^5+\alpha_4\hC_{+}^4+\alpha_3\hC_{+}^3
+\alpha_2\hC_{+}^2+\alpha_1\hC_{+}\\
+&\hC_{-}^6+\alpha_5\hC_{-}^5+\alpha_4\hC_{-}^4
+\alpha_3\hC_{-}^3+\alpha_2\hC_{-}^2+\alpha_1\hC_{-}+\alpha_0=0\;.
\end{align*}
Using the expessions for $\hC_{+}^{5,6}$ in terms of $\hC_{+}^{4,3,2,1}$ given by \eqref{Cp5E} and \eqref{Cp6E} as well as the expressions for $\hC_{-}^{6,5,4,3,2}$ in terms of $\hC_{-}$ from \eqref{Cmk} and setting the coefficients of those operators to zero, we get the values of $\alpha_i$:
\begin{align*}
\alpha_0&=0\;, & \alpha_1&=-\frac{\omega-4}{2(\omega-2)^3}\;, & \alpha_2&=\frac{\omega^2-16\omega+40}{4(\omega-2)^3}\;,\\
\alpha_3&=\frac{\omega^3-3\omega^2-22\omega+56}{4(\omega-2)^3}\;, &
\alpha_4&=\frac{5\omega^2-18\omega+4}{4(\omega-2)^2}\;, & \alpha_5&=2\;.
\end{align*}
Thus, the characteristic identity for $\hC_{\ad}$ takes the form:
\begin{equation}\label{CPoly}
\begin{gathered}
\hC_{\ad}^6+2\hC_{\ad}^5+\frac{5\omega^2-18\omega+4}{4(\omega-2)^2}\hC_{\ad}^4+\frac{\omega^3-3\omega^2-22\omega+56}{4(\omega-2)^3}\hC_{\ad}^3+\\
+\frac{\omega^2-16\omega+40}{4(\omega-2)^3}\hC_{\ad}^2-\frac{\omega-4}{2(\omega-2)^3}\hC_{\ad}=0\;.
\end{gathered}
\end{equation}
The roots of the polynomial on the left of \eqref{CPoly} may be found explicitly:
\begin{equation}\label{Croots}
a_1=0,\quad a_2=-\frac{1}{2},\quad a_3=-1,\quad a_4=\frac{1}{\omega-2},\quad a_5=\frac{-2}{\omega-2},\quad a_6=\frac{4-\omega}{2(\omega-2)}\;.
\end{equation}
Note that degenerate roots appear for the following values of $\omega$:
\begin{equation}\label{ospDegRoots}
\begin{aligned}
\omega=0&\implies a_2=a_4=-\frac{1}{2},\quad a_3=a_6=-1, &\qquad \omega=1&\implies a_3=a_4=-1,\\
\omega=4&\implies a_1=a_6=0,\quad a_3=a_5=-1, &
\omega=6&\implies a_2=a_5=-\frac{1}{2},\\
\omega=8&\implies a_5=a_6=-\frac{1}{3}.
\end{aligned}
\end{equation}
Therefore, the cases $\omega=0,1,4,6,8$
are to be considered separately (see Remark below).

Using the roots \eqref{Croots} of the characteristic polynomial in the left-hand side of \eqref{CPoly}, we can rewrite the characteristic identity \eqref{CPoly} for $\omega\neq 0,1,4,6,8$ as \eqref{CPolyF}.
\end{proof}
\noindent
{\bf Remark.} In order to get characteristic identities for $\hC_{\ad}$ when $\omega=0,1,4,6,8$, we need for all the degenerate roots in \eqref{ospDegRoots} to leave in \eqref{CPolyF} only one of the parentheses corresponding to such a root. This operation turns the left hand side of \eqref{CPolyF} into the following polynomials in $\hC_{\ad}$:
\begin{equation}\label{CPolyFo08}
\begin{aligned}
\omega&=0: & &\hC_{\ad}(\hC_{\ad}+\frac{1}{2})(\hC_{\ad}+1)(\hC_{\ad}-1)=\frac{1}{2}\bK\neq 0,\\
\omega&=1: & &\hC_{\ad}(\hC_{\ad}+\frac{1}{2})(\hC_{\ad}+1)(\hC_{\ad}-2)(\hC_{\ad}+\frac{3}{2})=-\frac{3}{2}\bK\neq 0,\\
\omega&=4: & &\hC_{\ad}(\hC_{\ad}+\frac{1}{2})(\hC_{\ad}+1)(\hC_{\ad}-\frac{1}{2})=0,\\
\omega&=6: & &\hC_{\ad}(\hC_{\ad}+\frac{1}{2})(\hC_{\ad}+1)(\hC_{\ad}-\frac{1}{4})(\hC_{\ad}+\frac{1}{4})=0,\\
\omega&=8: & &\hC_{\ad}(\hC_{\ad}+\frac{1}{2})(\hC_{\ad}+1)(\hC_{\ad}-\frac{1}{6})(\hC_{\ad}+\frac{1}{3})=0.
\end{aligned}
\end{equation}
The equalities in the right-hand side of \eqref{CPolyFo08} are obtained by substitution $\hC_{\ad}=\hCp+\hCm$ and using \eqref{Cmk}.
From \eqref{CPolyFo08}, we see that for $\omega=0,1$ the characteristic polynomial for $\hC_{\ad}$ of the form \eqref{charct} does not exist; hence $\hCad$ is not diagonalizable. Correspondingly, the representation $\ad^{\otimes 2}$ of $osp(M|N)$ in this case is not completely reducible. It can also be deduced from the fact that for $\omega=0,1$ the ad-invariant operator $\bK$ is nilpotent (as $\bK^2=0$) and, therefore, not diagonalizable.

The form \eqref{CPolyF} of the characteristic identity of $\hCad$ for $\omega\neq 0,1,4,6,8$ allows us to construct projectors onto invariant subspaces of $V_{\ad}\otimes V_{\ad}$ by \eqref{GenPr}, where $p=6$ and $a_i$ are the roots \eqref{Croots} of the characteristic equation \eqref{CPolyF}. Using \eqref{hCpmPK},\eqref{Cp3}, \eqref{Cmk}, 
and \eqref{Cp5E}, we find explicit expressions for the projectors \eqref{GenPr} in terms of $\bI$, $\bP$, $\bK$, $\hC_+$, $\hC_-$:
\begin{equation}\label{proj}
\begin{aligned}
\proj_1&=\frac{1}{2}(\bI-\bP)+2\hC_{-}\;, \qquad\quad \proj_2=-2\hC_{-}\;, \qquad\quad \proj_3=\frac{2\bK}{(\omega-1)\omega} \;, \\
\proj_4&=\frac{2}{3}(\omega-2)\hC_{+}^2+\frac{\omega}{3}\hC_{+}
+\frac{(\omega-4)(\bI+\bP)}{3(\omega-2)}-
\frac{2(\omega-4)\bK}{3(\omega-2)(\omega-1)} \;, \\
\proj_5&=-\frac{2(\omega-2)^2}{3(\omega-8)}\hC_{+}^2-
\frac{(\omega-2)(\omega-6)}{3(\omega-8)}\hC_{+}
+\frac{(\omega-4)(\bI+\bP)}{6(\omega-8)}+\frac{2\bK}{3(\omega-8)}\;,\\
\proj_6&=\frac{4(\omega-2)}{\omega-8}\hC_{+}^2+\frac{4}{\omega-8}\hC_{+}
-\frac{4(\bI+\bP)}{(\omega-2)(\omega-8)}-\frac{8(\omega-4)\bK}{\omega(\omega-2)(\omega-8)}\;.
\end{aligned}
\end{equation}
The images of $\proj_1$ and $\proj_2$ are contained in the antisymmetric part $\bP_-(\Vad^{\otimes 2})$, while the images of $\proj_i$, $(i=3,...,6)$, lie within the symmetric part $\bP_+(\Vad^{\otimes 2})$ of $\Vad^{\otimes 2}$ where $\bP_{\pm}=\frac{1}{2}(\bI\pm\bP)$. Note that for $\omega=4,6$ all the projectors \eqref{proj} are well defined and constructed from the ad-invariant operators $\bI$, $\bP$, $\bK$, $\hCm$, $\hCp$ and $\hCp^2$; hence they are projectors onto the invariant subspaces of the representation $\ad^{\otimes 2}$ of $osp(M|N)$. Besides, although $\proj_5$ and $\proj_6$ are not formally defined for $\omega=8$, substitution of the explicit expressions for $\bI$, $\bP$, $\bK$, $\hCp$ and $\hCp^2$ into \eqref{proj} and the subsequent cancellation of the pole at $\omega=8$ yield:
\begin{align}
\proj_5&=\frac{1}{6}(1-(\mcP_{14}+\mcP_{23}+\mcP_{13}+\mcP_{24})+\mcP_{13}\mcP_{24})\mcP_{12,34},\label{ospProj5}\\
\proj_6&=\frac{4}{\omega-2}\mcP_{12,34}\mcK_{13}\big[\frac{1}{2}(1+\mcP_{24})-\frac{1}{\omega}\mcK_{24}\big]\mcP_{12,34}.
\end{align}
It is worth pointing out that expression \eqref{ospProj5} for the projector $\proj_5$ does not depend on $\omega$ and coincides with the full (anti)symmetrizer of $V_{(M|N)}^{\otimes 4}$.

To build projectors onto the generalized eigenspaces of $\hCad$ for $\omega=0,1$, we use the characteristic identity \eqref{charct2}. For $\omega=0$ it takes the form:
\begin{equation*}
\hCad(\hCad+\frac{1}{2})(\hCad+1)^2(\hCad-1)=0,
\end{equation*}
so one needs to put $a_1=1$, $a_2=-\frac{1}{2}$, $a_3=-1$, $a_4=1$, $k_1=k_2=k_4=1$, $k_3=1$ in \eqref{charct2}. Then \eqref{GenPr2} gives the projectors onto the generalized eigenspaces of $\hCad$:
\begin{equation}\label{projospo0}
\begin{aligned}
\proj_1&=\frac{1}{2}(\bI-\bP)+2\hCm, &\quad
\proj_2&=-2\hCm+\frac{2}{3}(\bI+\bP)+\frac{4}{3}\bK-\frac{4}{3}\hCp^2,\\
\proj_3&=-\frac{1}{4}(\bI+\bP)-\frac{5}{4}\bK-\frac{1}{2}\hCp+\hCp^2, &
\proj_4&=\frac{1}{12}(\bI+\bP)-\frac{1}{12}\bK+\frac{1}{2}\hCp+
\frac{1}{3}\hCp^2.
\end{aligned}
\end{equation}
Here the operators $\proj_1$, $\proj_2$ and $\proj_4$ extract eigenspaces of $\hCad$, while $\proj_3$ projects $\Vad^{\otimes 2}$ onto the generalized eigenspace of $\hCad$. We thus conclude that the restriction of the representation $\ad^{\otimes 2}$ to $\proj_3(\Vad^{\otimes 2}$ is reducible but not completely reducible. It is also worth noting that $\proj_2$ given in \eqref{projospo0} is a linear combination of symmetric operators $(\bI+\bP)$, $\bK$, $\hCp$ and an antisymmetric operator $\hCm$. As pointed out above (see Section \ref{21}), the symmetric and antisymmetric parts of $\Vad^{\otimes 2}$ are invariant under the action of any Lie superalgebra $\mfg$ in the representation $\ad^{\otimes 2}$, so $\proj_2$ may be invariantly split into its symmetric and antisymmetric parts:
\begin{equation}\label{proj2ospo1}
\proj_2^{(+)}=\frac{2}{3}(\bI+\bP)+\frac{4}{3}\bK-\frac{4}{3}\hCp, \qquad\qquad\qquad
\proj_2^{(-)}=-2\hCm.
\end{equation}

Finally, for $osp(M|N)$ in the case of $\omega\equiv M-N=0$, the needed projectors are:
\begin{equation*}
\begin{aligned}
\proj_1\equiv&\proj_1^{(-)}=\frac{1}{2}(\bI-\bP)+2\hCm, \qquad &
&\proj_2^{(+)}=\frac{2}{3}(\bI+\bP)+\frac{4}{3}\bK-\frac{4}{3}\hCp^2,\\
&\proj_2^{(-)}=-2\hCm, &
\proj_3\equiv &\proj_3^{(+)}=-\frac{1}{4}(\bI+\bP)-\frac{5}{4}\bK-\frac{1}{2}\hCp+\hCp^2,\\
&\ &
\proj_4\equiv &\proj_4^{(+)}=\frac{1}{12}(\bI+\bP)-\frac{1}{12}\bK+\frac{1}{2}\hCp+\frac{1}{3}\hCp^2.
\end{aligned}
\end{equation*}

If $\omega=1$, then the characteristic identity \eqref{charct2} for $\hCad$ is:
\begin{equation*}
\hCad(\hCad+\frac{1}{2})(\hCad+1)^2(\hCad-2)(\hCad+\frac{3}{2})=0,
\end{equation*}
which implies $a_1=0,\ a_2=-\frac{1}{2},\ a_3=-1,\ a_4=2,\ a_5=-\frac{3}{2}, k_1=k_2=k_4=k_5=1,\ k_3=2$ in \eqref{charct2}. The projectors onto the generalised eigenspaces of $\hCad$ are then specified by \eqref{GenPr2}:
\begin{equation*}
\begin{aligned}
\proj_1&=\frac{1}{2}(\bI-\bP)+2\hCm, \qquad& \proj_3&=\bI+\bP-\frac{10}{3}\bK+\frac{1}{3}\hCp-\frac{2}{3}\hCp^2,\\
\proj_2&=-2\hCm, & \proj_4&=\frac{1}{14}(\bI+\bP)-\frac{2}{21}\bK+\frac{5}{21}\hCp+\frac{2}{21}\hCp^2,\\
&\ &
\proj_5&=-\frac{4}{7}(\bI+\bP)+\frac{24}{7}\bK-\frac{4}{7}\hCp+\frac{4}{7}\hCp^2.
\end{aligned}
\end{equation*}
Here the operators $\proj_1$, $\proj_2$, $\proj_4$ and $\proj_5$ extract eigenspaces of $\hCad$, while $\proj_3$ projects $\Vad^{\otimes 2}$ onto the generalized eigenspace of $\hCad$. It can be then concluded that $\proj_3(\Vad^{\otimes 2})$ is not a space of an irreducible or completely reducible representation of $osp(M|N)$ for $\omega=1$.

In order to find the dimensions of the invariant subspaces, we need to calculate the traces and supertraces of $\proj_1,\dots, \proj_6$. First, we calculate the following auxiliary traces and supertraces (here we also use the notation $\xi\equiv M+N$):
\begin{equation}\label{AuxTr}
\begin{aligned}
\tr \bI &=\frac{1}{4}(\xi^2-\omega)^2\;, & \str \bI&= \frac{1}{4}\omega^2(\omega-1)^2\;,\\
\tr \bP &=\frac{1}{2}\omega(\omega-1)\;, & \str \bP&=\frac{1}{2}\omega(\omega-1)\;,\\
\tr \bK &=\frac{1}{2}\omega(\omega-1)\;, & \str \bK&=\frac{1}{2}\omega(\omega-1)\;,\\
\tr \hCm&=-\frac{1}{4}(\xi^2-\omega)\;, & \str \hCm&=-\frac{1}{4}\omega(\omega-1)\;,\\
\tr \hCp&=\frac{1}{4}(\xi^2-\omega)\;, & \str\hCp&=\frac{1}{4}\omega(\omega-1)\;,\\
\tr \hCp^2&=\frac{2\xi^4+(\omega^2-16\omega +20)\xi^2+\omega^3+4\omega^2-12\omega}{8(\omega-2)^2}\;, & \str\hCp^2&=\frac{3}{8}\omega(\omega-1)\;.
\end{aligned}
\end{equation}
 Here the formulas for $\str$ in the second column correspond to the general formulas \eqref{strGenMany} in view of \eqref{dimOSP}.
From \eqref{AuxTr} and \eqref{proj}, we get the traces
\begin{align}
\tr\proj_1&=\frac{1}{8}(\xi^4-2\xi^2(\omega+2)-\omega(\omega-6))\;,&\tr\proj_4&=\frac{1}{12}(\xi^4-10\xi^2+3\omega(\omega-2))\;,\nonumber\\
\tr\proj_2&=\frac{1}{2}(\xi^2-\omega)\;, & \tr\proj_5&=\frac{1}{24}(\xi^4+2\xi^2(3\omega-4)+3\omega(\omega-2)\;,\nonumber\\
\tr\proj_3&=1\;, & \tr\proj_6&=\frac{1}{2}(\xi^2+\omega-2)\label{osptr}
\end{align}
snd supertraces of $\proj_i$:
\begin{equation}\label{ospstr}
\begin{aligned}
\str\proj_1&=\frac{1}{8}\omega(\omega-1)(\omega+2)(\omega-3)\;,\qquad&\str\proj_4&=\frac{1}{12}\omega(\omega+1)(\omega+2)(\omega-3)\;,\\
\str\proj_2&=\frac{1}{2}\omega(\omega-1)\;,&\str\proj_5&=\frac{1}{24}\omega(\omega-1)(\omega-2)(\omega-3)\;,\\
\str\proj_3&=1\;,&\str\proj_6&=\frac{1}{2}(\omega-1)(\omega+2)\\
\end{aligned}
\end{equation}
for $\omega\neq 0,1$. The dimension $\dim_{\oo}V_i$ of the even part of the invariant subspace $V_i=V_{i\oo}\oplus V_{i\ol}\subseteq V_{(M|N)}^{\otimes 4}$ extracted by $\proj_i$ is $\dim_{\oo}V_i=\frac{1}{2}(\tr \proj_i+\str \proj_i)$, while the dimension of the odd part $V_{i\ol}$ of $V_i$ is $\dim_{\ol}V_i=\frac{1}{2}(\tr\proj_i-\str\proj_i)$. Using \eqref{osptr} and \eqref{ospstr}, and substituting $\omega=M-N$ and $\xi=M+N$, we obtain the following values for dimensions of the invariant subspaces:
\begin{equation}\label{ospDims0}
\begin{array}{c}
\dim_{\oo}V_1= \frac{1}{8}M(M-1)(M+2)(M-3)+\frac{1}{8}N(N+1)(N-2)(N+3) + \\
  +\frac{1}{4}MN(3MN+M-N+1)\;,\\ [0.2cm]
\dim_{\oo}V_2=  \frac{1}{2}M(M-1)+\frac{1}{2}N(N+1)\;, \;\;\;\;\;\;\;\;
\dim_{\oo}V_3=1\;,\\ [0.2cm]
\dim_{\oo}V_4=  \frac{1}{12}M(M+1)(M+2)(M-3)+\frac{1}{12}N(N-1)(N-2)(N+3)+\\
+  \frac{1}{2}MN(MN-1)\;,\\ [0.2cm]
\dim_{\oo}V_5=  \frac{1}{24}M(M-1)(M-2)(M-3)+\frac{1}{24}N(N+1)(N+2)(N+3) + \\
+\frac{1}{4}MN(M-1)(N+1)\;,\\ [0.2cm]
\dim_{\oo}V_6=  \frac{1}{2}(M-1)(M+2)+\frac{1}{2}N(N-1)\;,\\
\end{array}
\end{equation}
\begin{equation}
\label{ospDims1}
 \begin{array}{c}
\dim_{\ol}V_1 =\frac{1}{2}MN\bigl(M(M-1)+(N-1)(N+2)\bigr)\;, \;\;\;\;\;\;\;
 \dim_{\ol}V_2 =MN\;, \\ [0.2cm]
\dim_{\ol}V_3=0\;, \;\;\;\;\;\;\;  \dim_{\ol}V_4 =\frac{1}{3}MN(M^2+N^2-5)\;,
\\ [0.2cm]
\dim_{\ol}V_5 =\frac{1}{6}MN\big((M-1)(M-2)+(N+1)(N+2)\big)\;, \;\;\;\;\;\;\;  \dim_{\ol}V_6 =MN\;.
\end{array}
\end{equation}
Note that the substitution $\omega=M$ (which implies $N=0$) into the identites \eqref{hCmChar}--\eqref{CPolyF} yields analogous identities for the $so(M)$ Lie algebra that are given in \cite{IsKr} and \cite{IsPr}, and the dimensions \eqref{ospDims0} of the invariant subspaces of the $osp(M|N)$ Lie superalgebra representation $\ad^{\otimes 2}$ transform into the corresponding dimensions of the invariant subspaces of $so(M)$. Analogously, the substitution $\omega=-N$ (which means $M=0$) transforms identities \eqref{hCmChar}--\eqref{CPolyF} into analogous identities for the algebra $sp(N)$ while \eqref{ospDims0} gives the dimensions of the invariant subspaces of the $sp(N)$ Lie algebra representation $\ad^{\otimes 2}$. Indeed, the substitutions $M=0$ and $N=0$ nullify the dimensions of the odd parts of the invariant subspaces \eqref{ospDims1}, which corresponds to the transition from the $osp(M|N)$ Lie superalgebra to the $so(M)$ or $sp(N)$ Lie algebras.

\section{The $s\ell(M|N)$ Lie superalgebra}\label{Sec4}
The $s\ell(M|N)$ Lie superalgebra (where $M\neq N$) is defined as the algebra of the operators $A:V_{(M|N)}\to V_{(M|N)}$ that satisfy:
\begin{equation}\label{str0mat}
\str A=0,
\end{equation}
and the Lie superbracket of which is given by \eqref{AsToLie}. It is known that $s\ell(N,N)$ is not simple, and $s\ell(M|N)\cong s\ell(N|M)$, so we restrict ourselves to $\omega=M-N>0$.

To build a basis of $s\ell(M|N)$, we use the same method as in the $osp(M|N)$ case, that is, we build a projector onto the space of solutions of \eqref{str0mat}. Consider the operators $\proj_0$ and $\proj_I$ that act on $\End(V_{(M|N)})$ by
\begin{equation}
\begin{aligned}
\proj_0 (E)&=E-\frac{\str E}{M-N}I,\\
\proj_I (E)&=\frac{\str E}{M-N}I,
\end{aligned}
\end{equation}
for any $E\in \End(V_{(M|N)})$ and $I$ being the identity operator acting in $V_{(M|N)}$. Clearly,
\begin{equation}
\begin{aligned}
\str \proj_0(E)&=0,\\
\str \proj_I(E)&=\str E,
\end{aligned}
\end{equation}
so $\proj_A \, \proj_B= \delta_{AB} \proj_A$ ($A,B=0,I$). Therefore, $\proj_0$ and $\proj_I$ comprise a full system of projectors in $\End(V_{(M|N)})$. Analogously to the $osp(M|N)$ case, \eqref{str0mat} can be rewritten in terms of $\proj_A$ as $\proj_I(A)=0$ or $\proj_0(A)=A$, which implies the following general solution of \eqref{str0mat}:
\begin{equation}\label{slGenMat}
A^a{}_c=E^a{}_c-\frac{(-1)^{[b]}\, E^b{}_b}{M-N} \, \delta^a_c,
\end{equation}
where $E$ is an arbitrary element of $\End(V_{(M|N)})$.

 Consider the matrix identities $e_{ij}:V_{(M|N)}\to V_{(M|N)}$ with the components
 \begin{equation}\label{MatUn1}
 (e_{ij})^a{}_b=\delta^a_i\delta_{jb} \; .
 \end{equation}
 Note that unlike the matrix identites $e_i{}^j$ of the $osp(M|N)$ case, both indices of $e_{ij}$ are lower. The matrices of the basis elements $(T_{ij})^a{}_b$
 of $s\ell(M|N)$ in the defining representation are obtained by substituting $E^a{}_b \to (e_{ij})^a{}_b$
 into \eqref{slGenMat}:
\begin{equation}\label{slBas}
(T_{ij})^a{}_b=(e_{ij})^a{}_b-
\frac{(-1)^{[i][j]}}{M-N}\delta_{ij}\delta^a_b.
\end{equation}
The degree of $T_{ij}$ coincides with that of $e_{ij}$ and equals $[i]+[j]$. Besides, we have the following equality for $T_{ij}$:
\begin{equation}\label{slBasCon}
{\rm Tr}(T) \equiv T_{ii}=0.
\end{equation}
In order for the vectors $X=X^{ij}T_{ij}$ of the algebra $s\ell(M|N)$ to correspond uniquely to their coordinates $X^{ij}$, we require the numbers $X^{ij}$ to satisfy the condition
\begin{equation}\label{slComCon}
(-1)^{[i]}X^{ii}=0,
\end{equation}
which has the following advantage: for any $X = X^{ij}T_{ij} \in s\ell(M|N)$, in the defining representation $X=
 X^{ij}T_{ij}=X^{ij}e_{ij} \in s\ell(M|N).$
If the vector $X\in \Mat_{M+N}(\mathbb{C})$ is expanded over the basis $\{e_{ij}\}_{i,j=1}^{M+N}$, \eqref{slComCon} means that $X$ lies in $s\ell(M|N)$. We also impose the conditions analogous to \eqref{slBasCon} on the coordinates $Y_{ij}$ of the covectors $Y$ in the dual basis of \eqref{slBas}:
\begin{equation}\label{slCovCon}
{\rm Tr}(Y)= Y_{ii}=0.
\end{equation}

Calculating the Lie superbracket \eqref{AsToLie} of $T_{ij} \in s\ell(M|N)$ defined in \eqref{slBas}, we obtain:
\begin{equation}
\lbrack T_{i_1i_2},T_{j_1j_2}\rbrack =
\delta_{j_1 i_2}T_{i_1 j_2}-
(-1)^{([i_1]+[i_2])([j_1]+[j_2])}\delta_{i_1 j_2}T_{j_1 i_2}
=T_{k_1k_2} \, X^{k_1k_2}{}_{i_1i_2,j_1j_2} \; ,
\end{equation}
where the structure constants $X^{k_1k_2}{}_{i_1i_2,j_1j_2}$
are written explicitly as
\begin{equation}\label{slStruc}
X^{k_1k_2}{}_{i_1i_2,j_1j_2}=\delta_{j_1i_2}\delta_{i_1}^{k_1}\delta_{j_2}^{k_2}-(-1)^{([i_1]+[i_2])([j_1]+[j_2])}\delta_{i_1j_2}\delta_{j_1}^{k_1}\delta_{i_2}^{k_2}.
\end{equation}
One can check that the pairs of indices $(i_1,i_2)$, $(j_1,j_2)$ and $(k_1,k_2)$ in \eqref{slStruc}
satisfy \eqref{slBasCon},
\eqref{slCovCon} and \eqref{slComCon}, respectively.

The Cartan-Killing metric \eqref{KilDefG} of $s\ell(M|N)$ in the basis \eqref{slBas} is calculated by \eqref{KilDefG}:
\begin{equation}\label{slKil}
{\sf g}_{i_1i_2,j_1j_2}=2\omega \big((-1)^{[i_1][j_2]}\delta_{j_1i_2}\delta_{i_1j_2}-
\frac{(-1)^{[i_1]+[j_2]}}{\omega}\delta_{i_1i_2}\delta_{j_1j_2}\big),
\end{equation}
where $\omega=M-N$, as in the $osp(M|N)$ case. For $\omega=0$, the metric \eqref{slKil} is degenerate: $\kg_{i_1i_2,j_1j_2}=-2(-1)^{[i_1]+[j_2]}\delta_{i_1i_2}\delta_{j_1j_2}$. However, as we have mentioned, this case is omitted in our paper. Note also that for $(i_1,i_2)$ and $(j_1,j_2)$ in (\ref{slKil}) the condition (\ref{slCovCon}) holds.

 Let us introduce the projector $\oI$ that acts on $V_{(M|N)}^{\otimes 2}$ with the components
\begin{equation}
\label{barI}
\oI^{i_1i_2}{}_{j_1j_2}=\delta^{i_1}_{j_1}\delta^{i_2}_{j_2}-
\frac{(-1)^{[j_1][j_2]}}{\omega}\delta^{i_1i_2}\delta_{j_1j_2}
\;\;\; \Rightarrow \;\;\; \oI^2 = \oI \; .
\end{equation}
It maps an arbitrary $Y \in V_{(M|N)}^{\otimes 2}$ with the components $Y^{ik}$
to $X\in V_{(M|N)}^{\otimes 2}$ with the components $X^{ik} = \oI^{ik}{}_{j\ell}Y^{j\ell}$
 that satisfy (\ref{slComCon}).
 If we identify the space $s\ell(M|N)$ with $\oI (V_{(M|N)}^{\otimes 2}) \subset V_{(M|N)}^{\otimes 2}$,
 then $\oI$ can be viewed as the identity operator $\oI$ acting in the algebra
 $s\ell(M|N)$.

The components of the inverse Cartan-Killing metric in the basis \eqref{slBas} are defined by \eqref{InvKilG}:
\begin{equation}
{\sf g}^{i_1i_2,j_1j_2}{\sf g}_{j_1j_2,k_1k_2}=\oI^{i_1i_2}{}_{k_1k_2}, \qquad {\sf g}_{i_1i_2,j_1j_2}{\sf g}^{j_1j_2,k_1k_2}=\oI^{k_1k_2}{}_{i_1i_2}.
\end{equation}
Direct calculaitons yield
\begin{equation}\label{InvKilSl}
{\sf g}^{i_1i_2,j_1j_2}=\frac{1}{2\omega}\big((-1)^{[j_1][i_2]}
\delta^{j_1i_2}\delta^{i_1j_2}-\frac{1}{\omega}\delta^{i_1i_2}
\delta^{j_1j_2}\big) \; ,
\end{equation}
where the pairs of indices
$(i_1,i_2)$ and $(j_1,j_2)$
 satisfy (\ref{slCovCon}), as expected.

\subsection{Projectors onto invariant subspaces of the tensor product of two defining representations}

Using \eqref{SpCas}, \eqref{ABmat}, \eqref{slBas} and \eqref{InvKilSl},
we can write the matrix of the split Casimir operator of $s\ell(M|N)$ in the defining representation:
\begin{equation}
\label{hCmat}
\hC_f^{k_1k_2}{}_{m_1m_2}=
\frac{1}{2\omega}\big((-1)^{[m_1][m_2]}\delta^{k_1}_{m_2}\delta^{k_2}_{m_1}
-\frac{1}{\omega}\delta^{k_1}_{m_1}\delta^{k_2}_{m_2}\big).
\end{equation}
 Define the identity operator $\bii$ and the superpermutation $\mcP$ acting in $V_{(M|N)}^{\otimes 2}$:
\begin{align}
\bii&=e_{ii}\otimes e_{jj}
& &\implies &
\bii^{k_1k_2}{}_{m_1m_2}&=\delta^{k_1}_{m_1}\delta^{k_2}_{m_2} \;,\label{Idef2}\\
\mcP&=(-1)^{[j]}e_{ij}\otimes e_{ji} & &\implies &
\mcP^{k_1k_2}{}_{m_1m_2}&=
(-1)^{[k_1][k_2]}\delta^{k_1}_{m_2}\delta^{k_2}_{m_2} \;,\label{Pdef2}
\end{align}
where $e_{ij}$ are the matrix identities \eqref{MatUn1}
on $V_{(M|N)}$. The matrices of $\bii$ and $\mcP$ are presented in the right-hand sides of these equalities. By \eqref{Pdef2},
$\mcP^2=\bii$.
A comparison of \eqref{hCmat} and \eqref{Idef2}, \eqref{Pdef2} shows that $\hC_f$ can be written in terms of $\bii$ and $\mcP$ as
\begin{equation}\label{hCfsl}
\hC_f=\frac{1}{2\omega}\big(\mcP-\frac{1}{\omega}\bii\big).
\end{equation}
Using this formula, we obtain the second-degree characteristic identity for $\hC_f$:
\begin{equation}\label{hCfslchar1}
\hC_f^2+\frac{1}{\omega^2}\hC_f-\frac{\omega^2-1}{4\omega^4}\bii=0.
\end{equation}
The left-hand side of this equality can be factorized, which yields
\begin{equation}\label{hCfslchar}
\big(\hC_f-\frac{\omega-1}{2\omega^2}\bii\big)
\big(\hC_f+\frac{\omega+1}{2\omega^2}\bii\big)=0.
\end{equation}
The projectors $\proj_+$ and $\proj_-$ onto the eigenspaces of $\hC_f$ that correspond to the roots $a_+=\frac{\omega-1}{2\omega^2}$ and $a_-=-\frac{\omega+1}{2\omega^2}$ of equation \eqref{hCfslchar} are built by \eqref{GenPr} with $p=2$:
\begin{equation}\label{slFproj}
\proj_\pm=\pm\Big(\omega\hC_f+\frac{1\pm\omega}{2\omega}\Big)=
\frac{1}{2}(\bii\pm\mcP).
\end{equation}
Thus, $\proj_+$ and $\proj_-$ turn out to be the super-symmetrizer and the super-antisymmetriser in $V_{(M|N)}^{\otimes 2}$.

The solution of the graded Yang-Baxter equation \eqref{grYBE} that is invariant under the action of $s\ell(M|N)$ in the defining representation can be written in several equivalent ways:
\begin{equation}
R(u)=\frac{u+\mcP}{1-u}=\frac{1+u}{1-u}\proj_+-\proj_-
\end{equation}
where $u$ is the spectral parameter. Analogously to the $osp(M|N)$ case, this solution can be expressed as a rational function of $\hC_f$:
\begin{equation}
R(u)=\frac{\proj_++u}{\proj_+-u}=\frac{\omega \hC_f+\frac{1+\omega}{2\omega}+u}{\omega \hC_f+\frac{1+\omega}{2\omega}-u}.
\end{equation}
Note that the solution $R(u)$ is unitary: $\mcP R(u)\mcP R(-u)=\bii$. It is defined up to multiplication by an arbitrary function $f(u)$ for which $f(u)f(-u)=1$.

\subsection{Projectors onto invariant subspaces of the tensor product of two adjoint representations}

In order to make the following calculations more concise, we introduce
the operator $\mcK:V_{(M|N)}^{\otimes 2}\to V_{(M|N)}^{\otimes 2}$ the components of which in the homogeneous basis $\{e_{i_1} \otimes e_{i_2} \}$ of this space are
\begin{equation}\label{PKKKK}
\begin{aligned}
\mcK^{i_1i_2}{}_{j_1j_2}&=(-1)^{[j_1][j_2]}
\delta^{i_1i_2}\delta_{j_1j_2} \, .
\end{aligned}
\end{equation}
One can derive the following identities for $\mcP_{ab}$, $\mcK_{ab}:V_{(M|N)}^{\otimes 4}\to V_{(M|N)}^{\otimes 4}$, $(a,b=1,\dots, 4)$:
\begin{equation}\label{slKrel}
\begin{gathered}
\mcK_{ab}\mcK_{ab}=\omega \mcK_{ab},\quad \mcP_{ab}\mcK_{ad}\mcK_{bc}=\mcP_{cd}\mcK_{ad}\mcK_{bc},\quad \mcK_{ad}\mcK_{bc}\mcP_{ab}=\mcK_{ad}\mcK_{bc}\mcP_{cd} \, , \\[3mm]
\mcK_{ab}\mcP_{ab}\mcK_{bc}=\mcK_{ab}\mcP_{ab}\mcP_{ac}=
\mcP_{ac}\mcP_{bc}\mcK_{bc},\quad \mcK_{ab}\mcP_{bc}\mcK_{bc}=\mcK_{ab}\mcP_{ab}\mcP_{ac}=
\mcP_{ac}\mcP_{bc}\mcK_{bc} \; ,
\end{gathered}
\end{equation}
($\mcP_{ab}$ and $\mcK_{ab}$ are defined in \eqref{A123}). It is worth pointing out that $\oI$, defined in \eqref{barI}, can be written as $
\oI=\bii-\frac{1}{\omega}\mcK
$
where $\bii$ is given in \eqref{Idef2}.

The components of the basis elements $T_{i_1i_2} \in s\ell(M|N)$ in the adjoint representation are precisely the structure constants \eqref{slStruc}:
\begin{equation}\label{slAd2Bas}
(T_{i_1i_2})^{k_1k_2}{}_{j_1j_2}=\delta_{j_1i_2}\delta_{i_1}^{k_1}\delta_{j_2}^{k_2}-(-1)^{([i_1]+[i_2])([j_1]+[j_2])}\delta_{i_1j_2}\delta_{j_1}^{k_1}\delta_{i_2}^{k_2}.
\end{equation}
Now using \eqref{SpCas}, \eqref{slAd2Bas} and \eqref{InvKilSl}, we find an explicit form of $\hC_{\ad}$. In terms of the operators \eqref{Pdef2}
and \eqref{PKKKK}, it can be written as follows:
\begin{equation}\label{kazSL}
\hC_{\ad}=\frac{1}{2\omega}\big(\mcP_{13}+\mcP_{24}-\mcK_{32}-\mcK_{14}\big).
\end{equation}
 In what follows, we need three more operators: $\bK,\ \bP^{\ad}:V_{\ad}^{\otimes 2}\to V_{\ad}^{\otimes 2}$ and $\bP:V_{(M|N)}^{\otimes 4}\to V_{(M|N)}^{\otimes 4}$, $\bK$ is defined by
\begin{equation}
\bK^{i_1i_2i_3i_4}{}_{j_1j_2j_3j_4} = \okg^{i_1i_2i_3i_4}{\sf g}_{j_1j_2j_3j_4},
\end{equation}
and expressing it in terms of the operators \eqref{Pdef2},
 \eqref{PKKKK}, we obtain
\begin{equation}
\bK=\mcK_{32}\mcK_{14}-\frac{1}{\omega}\mcP_{24}\mcK_{12}\mcK_{34}-
\frac{1}{\omega}\mcP_{13}\mcK_{32}\mcK_{14}+
\frac{1}{\omega^2}\mcK_{12}\mcK_{34}.
\end{equation}
Furthermore, $\bK^2=(\omega^2-1)\bK$. Recall that $\mcK_{32}=\mcP_{23}\mcK_{23}\mcP_{23}$,
 $\mcK_{14} = \mcP_{34} \mcP_{23} \mcK_{12} \mcP_{23} \mcP_{34}$
 etc.
The operator $\bP$ permutes the first and third, second and fourth factors in $V_{(M|N)}^{\otimes 4}$
and is defined by
\begin{equation}
\bP=\mcP_{13}\mcP_{24}\implies (\bP)^{i_1i_2i_3i_4}{}_{j_1j_2j_3j_4}=
(-1)^{[i_1][i_3]+[i_1][i_4]+[i_2][i_3]+[i_2][i_4]}
\delta^{i_1}_{j_3}\delta^{i_2}_{j_4}\delta^{i_3}_{j_1}\delta^{i_4}_{j_2},
\end{equation}
so we have $\bP^2=I$
where $I$ is the identity operator acting in $V_{(M|N)}^{\otimes 4}$. Finally, the operator
\begin{equation}
\bPad=\oI_{12}\oI_{34}\bP \; ,
\end{equation}
where $\oI$ is given in \eqref{barI},
is the permutation operator in the space
$V_{ad}\otimes V_{\ad}\subset V_{(M|N)}^{\otimes 4}$.

Note that $\bP$ commutes with both $\hC_{\ad}$ and $\bK$:
\begin{equation}\label{slPCadK}
\bP\hC_{\ad}=\hC_{\ad}\bP,\qquad \bP\bK=\bK=\bK\bP.
\end{equation}

Using $\bP$ and $\bPad$ we define the symmetrizer $\bPad_+$ and antisymmetrizer $\bPad_-$ in $V_{\ad}^{\otimes 2}$:
\begin{equation}\label{slPadDif}
\begin{aligned}
\bPad_\pm&=\frac{1}{2}(\bI\pm \bPad)=\frac{1}{2}(I\pm \bP)\oI_{12}\oI_{34}=\frac{1}{2}\oI_{12}\oI_{34}(I\pm \bP)
\quad \Longrightarrow \\
\bPad_-&=\frac{1}{2}(I-\mcP_{13}\mcP_{24})(I-\frac{1}{\omega}(\mcK_{12}+\mcK_{34})),\\
\bPad_+&=\frac{1}{2}(I+\mcP_{13}\mcP_{24})(I-\frac{1}{\omega}(\mcK_{12}+\mcK_{34})+\frac{1}{\omega^2}\mcK_{12}\mcK_{34}),
\end{aligned}
\end{equation}
where $\bI=\oI_{12}\oI_{34}$ is the identity operator acting in $V_{\ad}^{\otimes 2}$ that satisfies
\begin{equation}
\bI\bPad=\bPad=\bPad\bI,\quad \bI\bK=\bK=\bK\bI,\quad \bI\hC_{\ad}=\hC_{\ad}=\hC_{\ad}\bI.
\end{equation}
Define now the symmetric and antisymmetric parts of the split Casimir operator \eqref{kazSL}:
\begin{equation}\label{slhCpmExp}
\begin{aligned}
\hCp&=\bPad_+\hC_{\ad}=\frac{1}{2}(I+\bP)\hC_{\ad}\\
&=\frac{1}{4\omega}\big(2\mcP_{13}+2\mcP_{24}-(I+\mcP_{13}\mcP_{24})\mcK_{32}-(I+\mcP_{13}\mcP_{24})\mcK_{14}\big),\\
\hCm&=\bPad_-\hC_{\ad}=\frac{1}{2}(I-\bP)\hC_{\ad}\\
&=\frac{1}{4\omega}(\mcP_{13}\mcP_{24}-I)(\mcK_{14}+\mcK_{32})=\frac{1}{4\omega}(\mcK_{14}+\mcK_{32})(\mcP_{13}\mcP_{24}-I).
\end{aligned}
\end{equation}
By \eqref{slKrel} and \eqref{kazSL}, the following relations hold for $\hCm$, $\hCp$ and $\bK$:
\begin{equation}\label{slCpmKP}
\begin{gathered}
\hCp+\hCm=\hC_{\ad},\qquad \bP\hC_{\pm}=\hC_{\pm}\bP=\pm\hC_{\pm},\qquad \hCp\hCm=\hCm\hCp=0,\\
\bK\hCm=\hCm\bK=0,\qquad \bK\hCp=\hCp\bK=-\bK,\\
\bK\hC_{\ad}=\hC_{\ad}\bK=-\bK.
\end{gathered}
\end{equation}
\begin{proposition}
The antisymmetric $\hCm$ and symmetric $\hCp$ parts of the split Casimir operator of the $s\ell(M|N)$ Lie superalgebra for $\omega\neq 0,1,2$ satisfy
\begin{equation}\label{slCmChar}
\hCm^2=-\frac{1}{2}\hCm\iff \hCm(\hCm+\frac{1}{2}\bI)=0.
\end{equation}
\begin{equation}\label{slCp3}
\hCp^3=-\frac{1}{2}\hCp^2+\frac{1}{\omega^2}\hCp+\frac{1}{4\omega^2}(\bI+\bPad-2\bK)
\end{equation}
 \begin{equation}\label{slCp4noK}
 \hCp^4=-\frac{3}{2}\hCp^3-\frac{\omega^2-2}{2\omega^2}\hCp^2+
 \frac{3}{2\omega^2}\hCp+\frac{1}{4\omega^2}(\bI+\bPad).
 \end{equation}
 \begin{equation}\label{slCpCharF}
\hCp(\hCp+\bI)(\hCp-\frac{1}{\omega}\bI)(\hCp+\frac{1}{\omega}\bI)(\hCp+\frac{1}{2}\bI)=0.
\end{equation}
The split Casimir operator $\hCad=\hCm+\hCp$ for $\omega\neq 0,1,2$ satisfies
\begin{equation}\label{slPolyF}
\hC_{\ad}(\hC_{\ad}+\bI)(\hC_{\ad}-\frac{1}{\omega}\bI)(\hC_{\ad}+\frac{1}{\omega}\bI)(\hC_{\ad}+\frac{1}{2}\bI)=0.
\end{equation}
\end{proposition}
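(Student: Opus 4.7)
The plan is to closely mirror the proof of Proposition \ref{ospProp}: first establish the antisymmetric identity via the general theory of Section \ref{21}, then compute the symmetric identities directly from the explicit expressions in \eqref{slhCpmExp} by systematic use of the Brauer-type relations \eqref{slKrel}, and finally combine them for $\hCad$ via the orthogonality $\hCp\hCm=0$ from \eqref{slCpmKP}.

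The identity \eqref{slCmChar} is immediate: it is the specialization to $s\ell(M|N)$ of the universal relation $\hCm^{\,2}=-\tfrac{1}{2}\hCm$ established in \eqref{Cm2byCm} for any Lie superalgebra with nondegenerate Cartan--Killing metric. For the symmetric part, I would first compute $\hCp^{\,2}$ by expanding the explicit form of $\hCp$ in \eqref{slhCpmExp} and reducing every resulting monomial in $\mcP_{ab}$ and $\mcK_{ab}$ with the help of \eqref{slKrel}, the relations $\mcP_{ab}^{\,2}=I$, $\mcK_{ab}^{\,2}=\omega\mcK_{ab}$, and the definitions of $\bI$, $\bPad$, $\bK$. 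The answer should come out as a linear combination of $\bI$, $\bPad$, $\bK$ and $\hCp$; multiplying this by $\hCp$ and invoking the consequences $\bPad\hCp=\hCp$ and $\bK\hCp=-\bK$ (which follow from \eqref{slCpmKP} and the definition of $\bPad$) would then yield the cubic identity \eqref{slCp3}.

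To pass to \eqref{slCp4noK}, I would solve \eqref{slCp3} for $\bK$ as a polynomial in $\hCp$ and the operator $(\bI+\bPad)$, then substitute this expression into the identity obtained by multiplying \eqref{slCp3} once more by $\hCp$; the $\bK$-terms cancel and only $(\bI+\bPad)$ survives among the Brauer-type generators, giving \eqref{slCp4noK}. Multiplying \eqref{slCp4noK} by $\hCp$ and using $(\bI+\bPad)\hCp=2\hCp$ eliminates $(\bI+\bPad)$ as well, producing a degree-five polynomial equation in $\hCp$ alone; a direct factorization check shows that this polynomial coincides with the left-hand side of \eqref{slCpCharF}.

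Finally, for \eqref{slPolyF} the key observation is that $\hCp\hCm=\hCm\hCp=0$ implies $\hCad^{\,k}=\hCp^{\,k}+\hCm^{\,k}$ for every $k\ge 1$, whence any polynomial $p$ with $p(0)=0$ satisfies $p(\hCad)=p(\hCp)+p(\hCm)$. The polynomial $p(x)=x(x+1)(x-1/\omega)(x+1/\omega)(x+1/2)$ in \eqref{slPolyF} satisfies $p(0)=0$, coincides with the left-hand side of \eqref{slCpCharF} and hence annihilates $\hCp$, and contains the factor $x(x+1/2)$ which by \eqref{slCmChar} annihilates $\hCm$; therefore $p(\hCad)=0$. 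I expect the main technical obstacle to be the book-keeping in the direct evaluation of $\hCp^{\,2}$: the cross-terms involve products such as $\mcP_{13}\mcK_{32}$ and $\mcK_{14}\bP\mcK_{32}$, and reducing them to the basis $\{\bI,\bPad,\bK,\hCp\}$ requires careful and repeated application of \eqref{slKrel}; once this reduction is in hand, all remaining steps amount to polynomial algebra in $\hCp$.
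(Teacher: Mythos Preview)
Your proposal follows essentially the same route as the paper's proof, with one small correction worth flagging. The quadratic $\hCp^{\,2}$ does \emph{not} reduce to a linear combination of $\bI$, $\bPad$, $\bK$ and $\hCp$ alone; the paper's explicit computation leaves residual Brauer-algebra terms such as $(\mcP_{13}+\mcP_{24})(\mcK_{12}+\mcK_{34}+\mcK_{32}+\mcK_{14})$ that do not lie in that four-dimensional span. It is only upon multiplying once more by $\hCp$ that everything closes in $\{\bI,\bPad,\bK,\hCp,\hCp^{\,2}\}$, producing \eqref{slCp3}. (Indeed, if $\hCp^{\,2}$ already closed in those four operators you would obtain a quadratic characteristic identity, contradicting the degree-five one in \eqref{slCpCharF}.) Apart from this adjusted expectation, your plan matches the paper: they too eliminate $\bK$ between \eqref{slCp3} and its product with $\hCp$ to reach \eqref{slCp4noK}, multiply once more by $\hCp$ and factorize for \eqref{slCpCharF}, and deduce \eqref{slPolyF} via exactly your splitting $p(\hCad)=p(\hCp)+p(\hCm)$ for polynomials with $p(0)=0$.
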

\begin{proof}
Identity \eqref{slCmChar} for $s\ell(M|N)$ is a special case of \eqref{Cm2byCm} that holds for all Lie superalgebras with the nondegenerate Cartan-Killing metric.

By \eqref{kazSL} and \eqref{slKrel} we obtain for $\hCp^2$
\begin{equation}\label{slCp2}
\begin{aligned}
\hCp^2&=\frac{1}{8\omega^2}(I+\mcP_{13}\mcP_{24})(4I+2\mcK_{32}\mcK_{14}+
\omega\mcK_{32}+\omega\mcK_{14})\\
&-\frac{1}{4\omega^2}(\mcP_{13}+\mcP_{24})(\mcK_{12}+\mcK_{34}+\mcK_{32}+
\mcK_{14}).
\end{aligned}
\end{equation}
Multiplying \eqref{slCp2} by $\hCp$ yields \eqref{slCp3}. Multiplying once more \eqref{slCp3} by $\hCp$ and using \eqref{slCpmKP}, we obtain:
\begin{equation}\label{slCp4}
\hCp^4=-\frac{1}{2}\hCp^3+\frac{1}{\omega^2}\hCp^2+\frac{1}{2\omega^2}\hCp
+\frac{1}{2\omega^2}\bK.
\end{equation}
 Now we express $\bK$ from \eqref{slCp4} and substitute the result into \eqref{slCp3}, which gives \eqref{slCp4noK}. Multiplying both sides of \eqref{slCp4}
by $(\hCp + \bI)$ and using the last identity from \eqref{slCpmKP},
or multiplying both sides of \eqref{slCp4noK}
by $\hCp$ and using the second identity from \eqref{slCpmKP} yields
\begin{equation}\label{slCp5}
\hCp^5=-\frac{3}{2}\hCp^4-\frac{\omega^2-2}{2\omega^2}\hCp^3-
\frac{3}{2\omega^2}\hCp^2-\frac{1}{2\omega^2}\hCp,
\end{equation}
which can be rewritten as
\begin{equation}\label{slCpChar}
\hCp^5+\frac{3}{2}\hCp^4+\frac{\omega^2-2}{2\omega^2}\hCp^3-
\frac{3}{2\omega^2}\hCp^2-\frac{1}{2\omega^2}\hCp=0.
\end{equation}
Identity \eqref{slCpChar} is characteristic for $\hCp$.
The roots of the polynomial on the left-hand side of \eqref{slCpChar} are
\begin{equation}\label{slCpRoots}
a_1=0,\quad a_2=-1,\quad a_3=\frac{1}{\omega},\quad a_4=-\frac{1}{\omega},\quad a_5=-\frac{1}{2},
\end{equation}
hence the characteristic identity \eqref{slCpChar} takes the form \eqref{slCpCharF}. Note that for $\omega = 1,2$ we have degenerate roots, and these cases are considered separately (see below).

Since $\hCm=\hCm\bPad_-$ and $\hCp=\hCp\bPad_+$, from \eqref{slCmChar} and \eqref{slCpCharF} one infers
\begin{equation}\label{slCmCharM}
\hCm(\hCm+\frac{1}{2}\bPad_-)=0,
\end{equation}
 \begin{equation}\label{slCpCharFP}
\hCp(\hCp+\bPad_+)(\hCp-\frac{1}{\omega}\bPad_+)(\hCp+\frac{1}{\omega}\bPad_+)(\hCp+\frac{1}{2}\bPad_+)=0.
\end{equation}
These identities can be viewed as characteristic for $\hCm$ and $\hCp$ that are restricted to $\bPad_-(\Vad^{\otimes 2})$ and $\bPad_+(\Vad^{\otimes 2})$, respectively. In this sense, the roots of the characteristic polynomial on the right of \eqref{slCpCharFP} are
\begin{equation}\label{slCpRootsPr}
a'_1=-1,\quad a'_2=\frac{1}{\omega},\quad a'_3=-\frac{1}{\omega},\quad a'_4=-\frac{1}{2}.
\end{equation}

To find the characteristic polynomial of $\hC_{\ad}$, we substitute $\hC_{\ad}$ for $\hCp$ in \eqref{slCpCharF} and use $\hC_{\ad}=\hCm+\hCp$ and $\hCp\hCm=0$:
\begin{equation}
\begin{gathered}
\hC_{\ad}(\hC_{\ad}+\bI)(\hC_{\ad}-\frac{1}{\omega}\bI)(\hC_{\ad}+\frac{1}{\omega}\bI)(\hC_{\ad}+\frac{1}{2}\bI)=\\
=\hCp(\hCp+\bI)(\hCp-\frac{1}{\omega}\bI)(\hCp+\frac{1}{\omega}\bI)(\hCp+\frac{1}{2}\bI)+\\
+\hCm(\hCm+\bI)(\hCm-\frac{1}{\omega}\bI)(\hCm+\frac{1}{\omega}\bI)(\hCm+\frac{1}{2}\bI)=0,
\end{gathered}
\end{equation}
where the last relation holds by \eqref{slCmChar} and \eqref{slCpCharF}. Therefore, the characteristic identity for $\hC_{\ad}$ is given by \eqref{slPolyF}.
\end{proof}
Note that the roots of the polynomial on the left of \eqref{slPolyF} coincide with \eqref{slCpRoots}, and for $\omega=1, 2$ we have degenerate roots:
\begin{equation}\label{slPolyDegR}
\omega=1  \implies a_2=a_4=-1,\qquad\qquad
\omega=2  \implies a_4=a_5=-\frac{1}{2}
\end{equation}
Leaving in the polynomial on the left of \eqref{slPolyF} only one of the parentheses that correspond to the roots \eqref{slPolyDegR}, we get for $\omega= 1$:
\begin{equation}
\hC_{\ad}(\hC_{\ad}+\bI)(\hC_{\ad}-\bI)(\hC_{\ad}+\frac{1}{2}\bI)=\frac{1}{2}\bK\neq 0.
\end{equation}
Therefore, for $\omega= 1$ the $s\ell(M|N)$ representation $\ad^{\otimes 2}$ is not completely reducible. For $\omega=2$, we analogously get
\begin{equation}\label{slo2}
\hC_{\ad}(\hC_{\ad}+\bI)(\hC_{\ad}-\frac{1}{2}\bI)(\hC_{\ad}+\frac{1}{2}\bI)=\frac{1}{16}(\bPad_++\bK)-\frac{1}{4}\hCp^2.
\end{equation}
Using the component form of \eqref{slo2}, one can check that for $M=2$, $N=0$ this expression is nullified, while for $M=3$, $N=1$ it does not. Therefore, the representation $\ad^{\otimes 2}$ of $s\ell(M|N)$ for $\omega= 2$ in general is not completely reducible. The exceptional cases $\omega\neq 1,2$ are to be considered later in this section.

To construct projectors onto invariant subspaces of the representation $\ad^{\otimes 2}$ of $s\ell(M|N)$, we utilize the fact that for an arbitrary Lie superalgebra the symmetric and antisymmetric parts of $\Vad^{\otimes 2}$ are invariant (see Section \ref{21} and \eqref{XPcom}). The symmetric invariant subspaces $\bPad_+(\Vad^{\otimes 2})$ of $\Vad^{\otimes 2}$ can be expressed as eigenspaces of $\hCp$, which can be viewed as acting in $\bPad_+(\Vad^{\otimes 2})$. The role of the identity operator here is played by $\bPad_+$. By \eqref{GenPr}, where we suppose $p=5$, $\hC_T=\hCp$, $I_T^{\otimes 2}=\bPad_+$, and $a_i$ are given in \eqref{slCpRootsPr},
\begin{equation}\label{slProjSym}
\begin{aligned}
\proj_{a'_1}^{(+)}\equiv\proj_1^{(+)}&=\frac{1}{\omega^2-1}\bK,\\
\proj_{a'_2}^{(+)}\equiv\proj_2^{(+)}&=
-\frac{\omega}{2(\omega+1)(\omega+2)}\bK+
\frac{\omega^2}{\omega+2}\hCp^2+\frac{\omega}{2}\hCp+
\frac{\omega}{2(\omega+2)}\bPad_+,\\
\proj_{a'_3}^{(+)}\equiv\proj_3^{(+)}&=\frac{\omega}{2(\omega-1)(\omega-2)}\bK-
\frac{\omega^2}{\omega-2}\hCp^2-\frac{\omega}{2}\hCp+
\frac{\omega}{2(\omega-2)}\bPad_+,\\
\proj_{a'_4}^{(+)}\equiv\proj_4^{(+)}&=
\frac{4}{\omega^2-4}\big(\omega^2\hCp^2-
\bPad_+-\bK\big).
\end{aligned}
\end{equation}
One can easily check that $\proj_1^{(+)}+\proj_2^{(+)}+\proj_3^{(+)}+\proj_4^{(+)}=\bPad_+$.

For $\omega=1$, the characteristic identity for $\hCp$ is
\begin{equation}
(\hCp+\bI)^2(\hCp-\bI)(\hCp+\frac{1}{2}\bI)\bPad_+=0.
\end{equation}
Therefore, in \eqref{charct2} one needs to put $a'_1=-1,\ a'_2=1,\ a'_3=-\frac{1}{2},\ k_1=2,\ k_2=k_3=1$. Similarly to the considered above general case, we build projectors onto symmetric generalized eigenspaces of $\hCp$. By \eqref{GenPr2}, where we substitute $\bPad_+$ for $\bI^{\otimes 2}_T$,
\begin{equation}
\begin{aligned}
\proj_{a'_1}^{(+)}\equiv\proj_1^{(+)}&=-\frac{1}{2}\bPad_+-\frac{5}{4}\bK-\frac{1}{2}\hCp+\hCp^2,\\
\proj_{a'_2}^{(+)}\equiv\proj_2^{(+)}&=\frac{1}{6}\bPad_+-\frac{1}{12}\bK+\frac{1}{2}\hCp+\frac{1}{3}\hCp^2,\\
\proj_{a'_3}^{(+)}\equiv\proj_3^{(+)}&=\frac{4}{3}\bPad_++\frac{4}{3}\bK-\frac{4}{3}\hCp^2.
\end{aligned}
\end{equation}
Of these operators, $\proj^{(+)}_2$ and $\proj^{(+)}_3$ project onto the eigenspaces of $\hCp$, while $\proj_1^{(+)}$ extracts its generalized eigenspace. Thus, the action of $s\ell(M|N)$ in $\proj_1^{(+)}(\Vad^{\otimes 2})$ for $\omega=1$ is reducible but not completely reducible. It is related to the fact that $\bK$, satisfying $\bK^2=0$, is nilpotent and thus not diagonalizable in this case.

For $\omega=2$, the characteristic identity for $\hCp$ is
\begin{equation}
(\hCp+\bI)(\hCp-\frac{1}{2}\bI)(\hCp+\frac{1}{2}\bI)^2\bPad_+=0,
\end{equation}
so in \eqref{charct2} we put $a'_1=-1,\ a'_2=\frac{1}{2},\ a'_3=-\frac{1}{2},\ k_1=k_2=1,\ k_3=2$. The projectors onto the generalized eigenspaces of $\hCp$ are given by \eqref{GenPr2}:
\begin{equation}
\begin{aligned}
\proj_{a'_1}^{(+)}\equiv\proj_1^{(+)}&=\frac{1}{3}\bK,\\
\proj_{a'_2}^{(+)}\equiv\proj_2^{(+)}&=\frac{1}{4}\bPad_+-\frac{1}{12}\bK+\hCp+\hCp^2,\\
\proj_{a'_3}^{(+)}\equiv\proj_3^{(+)}&=\frac{3}{4}\bPad_+-\frac{1}{4}\bK-\hCp-\hCp^2.
\end{aligned}
\end{equation}
Of these operators, $\proj_1^{(+)}$ and $\proj_2^{(+)}$ project onto the eigenspaces of $\hCp$ while the image of $\proj_3^{(+)}$ is a generalized eigenspace of $\hCp$. Thus, the restriction of the representation $\ad^{\otimes 2}$ to $\proj_3^{(+)}(\Vad^{\otimes 2})$ is neither irreducible nor completely reducible for $\omega\equiv M-N=2$. For simplicity, in what follows we assume $\proj_4^{(+)}=0$ for $\omega=1,2$.

To find projectors onto antisymmetric invariant subspaces of the representation $\ad^{\otimes 2}$ of $s\ell(M|N)$ that can be expressed as eigenspaces of $\hCm$ (which is viewed here as acting in $\bPad_-(\Vad^{\otimes 2})$, where the role of the identity operator is played by $\bPad_-$), we use \eqref{GenPr} where, in accordance with \eqref{slCmCharM}, we put $p=2$, $\hC_T=\hCm$, $I_T^{\otimes 2}=\bPad_-$, $a_1=0$ and $a_2=-\frac{1}{2}$:
\begin{equation}\label{slProjAsym1}
\proj_{a_1}^{(-)}\equiv\proj_1^{(-)}=2\hCm+\bPad_-,\qquad\qquad
\proj_{a_2}^{(-)}\equiv\proj_2^{(-)}=-2\hCm.
\end{equation}
Apparently, $\proj_1^{(-)}+\proj_2^{(-)}=\bPad_-$, i.e. $\proj_1^{(-)}+\proj_2^{(-)}+\proj_2^{(+)}+\proj_3^{(+)}+\proj_4^{(+)}=\bPad_-+\bPad_+=\bI$.

As a result, we have the following full system of projectors: $\proj_1^{(-)},\ \proj_2^{(-)},\ \proj_2^{(+)},\ \proj_3^{(+)},\ \proj_4^{(+)},\ \proj_5^{(+)}$. However, not all of those projectors are primitive. In order to show this, we use the following relation, which holds for the generators $T_{ij}$ of $s\ell(M|N)$ in the defining representation $T_f$ (we denote $T_f(T_{ij})=T_{ij}$):
\begin{equation}
T_{ij}T_{km}+(-1)^{([i]+[j])([k]+[m])}T_{km}T_{ij}\equiv\lbrack T_{ij},T_{km}\rbrack_+=D^{rs}{}_{ij,km}T_{rs}+\alpha{\sf g}_{ij,km}.
\end{equation}
Here $\alpha=\frac{2c_2(T_f)}{\sdim(s\ell(M|N))}=\frac{1}{\omega^2}$ where $c_2(T_f)=\frac{\omega^2-1}{2\omega^2}$ is the value of the quadratic Casimir operator \eqref{C2G} in the defining representation $T_f$, and the numbers $D^{rs}{}_{ij,km}$ are called the symmetric in pairs of indices $(ij)$ and $(km)$ structure constants of $s\ell(M|N)$. Define the operators $\tC$ and $\tCm$, that act on $V_{\ad}^{\otimes 2}$:
\begin{equation}\label{tCmdef}
\begin{gathered}
\tC^{i_1i_2i_3i_4}{}_{j_1j_2j_3j_4}=(-1)^{([j_1]+[j_2])([b_1]+[b_2])}\okg^{a_1a_2,b_1b_2}X^{i_1i_2}{}_{a_1a_2,j_1j_2}D^{i_3i_4}{}_{b_1b_2,j_3j_4},\\
\tCm=\frac{\omega}{4}(\bI-\bPad)\tC(\bI-\bPad),
\end{gathered}
\end{equation}
where $X^{i_1i_2}{}_{a_1a_2,j_1j_2}$ are the structure constants of $s\ell(M|N)$, given in \eqref{slStruc}, and $\okg^{a_1a_2,b_1b_2}$ is the inverse Cartan-Killing metric \eqref{InvKilSl}.

\begin{proposition}
The explicit form of $\tCm$ defined in \eqref{tCmdef} and acting in $\Vad^{\otimes 2}$, is
\begin{equation}\label{tCmExp}
\tCm=\frac{1}{2}(\mcP_{13}-\mcP_{24})\big(I-\frac{1}{\omega}(\mcK_{12}+\mcK_{34}+\mcK_{32}+\mcK_{14})\big).
\end{equation}
Besides, $\tCm$ satisfies
\begin{equation}\label{tCmRels}
\begin{gathered}
\bPad_+\tCm=\tCm\bPad_+=0,\qquad \tCm\hCm=\hCm\tCm=0,\\
\tCm^2=2\hCm+\bPad_-,\qquad \tCm(\tCm+\bI)(\tCm-\bI)=0.
\end{gathered}
\end{equation}
The last identity in the second row in \eqref{tCmRels} is characteristic for $\tCm$.
\end{proposition}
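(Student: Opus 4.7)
The plan is to prove the proposition in two main stages: first derive the explicit form \eqref{tCmExp}, then use it to verify the algebraic relations in \eqref{tCmRels}.

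Stage 1. I would begin by computing the symmetric structure constants $D^{rs}{}_{ij,km}$ from the anticommutator $\{T_{ij},T_{km}\}$ in the defining representation, using $T_{ij}=e_{ij}-\frac{(-1)^{[i][j]}}{\omega}\delta_{ij}I$ together with the matrix-unit product $e_{ij}e_{km}=\delta_{jk}e_{im}$. This yields $D^{rs}{}_{ij,km}$ in closed form as a combination of Kronecker deltas (matching the given $\alpha{\sf g}_{ij,km}$-piece with $\alpha=1/\omega^2$). Substituting this $D$, the structure constants \eqref{slStruc}, and the inverse Cartan--Killing metric \eqref{InvKilSl} into the definition \eqref{tCmdef} of $\tC$, and translating the component expression into operator form using \eqref{Pdef2}, \eqref{PKKKK}, \eqref{A123}, should express $\tC$ as a short combination of monomials in $\mcP_{13},\mcP_{24},\mcK_{12},\mcK_{34},\mcK_{32},\mcK_{14}$. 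Conjugating by $\bPad_-=\frac{1}{2}(\bI-\bPad)$ on both sides, using $\bPad_-\bP=-\bPad_-$ with $\bP=\mcP_{13}\mcP_{24}$ together with the identities \eqref{slKrel}, should then collapse the result to \eqref{tCmExp}, the overall factor $\omega/4$ being absorbed in the reduction.

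Stage 2. The vanishing $\bPad_+\tCm=\tCm\bPad_+=0$ is immediate from $\tCm=\frac{\omega}{4}(\bI-\bPad)\tC(\bI-\bPad)$ and $\bPad_+(\bI-\bPad)=(\bI-\bPad)\bPad_+=0$. For the square identity, I would compute $\tCm^2$ directly from \eqref{tCmExp}: the outer factor gives $(\mcP_{13}-\mcP_{24})^2=2(I-\mcP_{13}\mcP_{24})$, and the inner $\mcK$-dressed product, reduced via \eqref{slKrel}, should split cleanly into a $\bPad_-$-piece (from the identity part) and a $2\hCm$-piece (from the $\mcP_{13}\mcP_{24}$ part crossed with the $\mcK$-terms, recognized through the factorization of $\hCm$ in \eqref{slhCpmExp}). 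The identities $\tCm\hCm=\hCm\tCm=0$ I would verify directly from the explicit forms, exploiting the representations $\hCm\propto(\bP-I)(\mcK_{14}+\mcK_{32})$ and $\tCm\propto(\mcP_{13}-\mcP_{24})(\cdots)$; the interplay of these two factor structures under \eqref{slKrel} (notably $\mcP_{ab}\mcK_{ad}\mcK_{bc}=\mcP_{cd}\mcK_{ad}\mcK_{bc}$ and the commutation of $\bP=\mcP_{13}\mcP_{24}$ with $\mcK_{14}+\mcK_{32}$) produces the required cancellation. Finally, the characteristic identity reduces to pure algebra: $\tCm(\tCm+\bI)(\tCm-\bI)=\tCm(\tCm^2-\bI)=\tCm(2\hCm+\bPad_--\bI)=2\tCm\hCm-\tCm\bPad_+=0$, using $\bI=\bPad_++\bPad_-$ and the preceding orthogonality relations.

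The principal obstacle lies in Stage 2, specifically the direct computations of $\tCm^2$ and $\tCm\hCm$. Because the operators $\mcK_{12},\mcK_{34},\mcK_{32},\mcK_{14}$ are attached to non-adjacent index pairs in $V_{(M|N)}^{\otimes 4}$, the products produce a proliferation of cross-terms whose reduction to the concise right-hand sides $2\hCm+\bPad_-$ and $0$ requires disciplined bookkeeping through the full list of Brauer-type identities \eqref{slKrel}, together with $\bPad_-\bP=-\bPad_-$ and the inclusions $\hCm=\hCm\bPad_-$, $\tCm=\tCm\bPad_-$. Once these two identities are secured, the rest of \eqref{tCmRels} is essentially automatic.
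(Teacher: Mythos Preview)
Your proposal is correct and follows essentially the same route as the paper: compute the symmetric structure constants $D^{rs}{}_{ij,km}$ explicitly, substitute into the definition of $\tC$ to get an operator expression in the $\mcP_{ab}$ and $\mcK_{ab}$, project with $(\bI-\bPad)$ on both sides to obtain \eqref{tCmExp}, and then verify \eqref{tCmRels} by direct calculation from the explicit forms \eqref{tCmExp}, \eqref{slhCpmExp}, \eqref{slPadDif}. Your Stage~2 in fact goes beyond the paper, which simply writes ``one can check \eqref{tCmRels} by direct calculations''; your reduction of the characteristic identity via $\tCm(\tCm^2-\bI)=\tCm(2\hCm-\bPad_+)=0$ is a clean way to organize that step.
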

\begin{proof}
Direct calculations show that the symmetric structure constants $D^{rs}{}_{ij,km}$ of $s\ell(M|N)$ satisfy
\begin{gather}
D^{rs}{}_{ij,km}=(\delta^r_l\delta^s_n-\frac{1}{\omega}\mcK^{rs}{}_{ln})\bar{D}_{ij,km}{}^{ln},\\
\bar{D}^{rs}{}_{ij,km}=\delta^r_i\delta^s_m\delta_{jk}+(-1)^{([i]+[j])([k]+[m])}\delta^r_k\delta^s_j\delta_{im}-\frac{2}{\omega}\big((-1)^{[i]}\delta^r_k\delta^s_m\delta_{ij}+(-1)^{[m]}\delta^r_i\delta^s_j\delta_{km}\big).\nonumber
\end{gather}
Thus, the operator $\tC$ given in \eqref{tCmdef} equals
\begin{equation}
\tC=(I-\frac{1}{\omega}\mcK_{34})\big(\mcP_{13}-\mcP_{24}+\mcK_{14}-\mcK_{32}+\frac{2}{\omega}(\mcP_{24}-\mcP_{13})\mcK_{34}\big)
\end{equation}
Using \eqref{tCmdef} and the equalities $(\bI-\bPad)=(I-\bP)\oI_{12}\oI_{34}$, $\bP=\mcP_{13}\mcP_{24}$, we get \eqref{tCmExp}. One can check \eqref{tCmRels} by direct calculations using \eqref{tCmExp}, \eqref{slhCpmExp}, \eqref{slPadDif}.
\end{proof}
Since $\tCm=\tCm\bPad_-=\bPad_-\tCm$, the last equality in \eqref{tCmRels} can be rewritten as
\begin{equation}\label{tCmchar}
\tCm(\tCm+\bPad_-)(\tCm-\bPad_-)=0.
\end{equation}
It is the characteristic identity for $\tCm$, which is restricted to the antisymmetric part $\bPad_-(\Vad^{\otimes 2})$ of $\Vad^{\otimes 2}$. Using \eqref{tCmchar} and \eqref{GenPr}, one can obtain projectors onto the eigenspaces of $\tCm$. The explicit formulas are
\begin{equation}\label{slTildePr}
\widetilde{\proj}_0^{(-)}=-2\hCm,\qquad
\widetilde{\proj}_{-1}^{(-)}=\hCm+\frac{1}{2}\bPad_--\frac{1}{2}\tCm,\qquad
\widetilde{\proj}_{1}^{(-)}=\hCm+\frac{1}{2}\bPad_-+\frac{1}{2}\tCm.
\end{equation}
In \eqref{slTildePr}, the lower indices of the projectors equal the eigenvalues of $\tCm$ in the corresponding eigenspaces.

Note that $\widetilde{\proj}^{(-)}_{-1}+\widetilde{\proj}^{(-)}_1=\proj_1^{(-)}$ where $\proj_1^{(-)}$ is given in \eqref{slProjAsym1}, and $\widetilde{\proj}^{(-)}_0=\proj_2^{(-)}$ where $\proj_2^{(-)}$ is defined in \eqref{slProjAsym1}. Thus, we get the following full system of mutually orthogonal projectors for $\omega\neq 0,1,2$:
\begin{equation}\label{slProj}
\begin{aligned}
\widetilde{\proj}^{(-)}_{-1}&=\hCm+\frac{1}{2}\bPad_--\frac{1}{2}\tCm,\qquad
\widetilde{\proj}^{(-)}_{1}=\hCm+\frac{1}{2}\bPad_-+\frac{1}{2}\tCm,\qquad \proj_2^{(-)}=-2\hCm,\\
\proj_1^{(+)}&=\frac{1}{\omega^2-1}\bK,\\
\proj_2^{(+)}&=-\frac{\omega}{2(\omega+1)(\omega+2)}\bK+\frac{\omega^2}{\omega+2}\hCp^2+\frac{\omega}{2}\hCp+\frac{\omega}{2(\omega+2)}\bPad_+,\\
\proj_3^{(+)}&=\frac{\omega}{2(\omega-1)(\omega-2)}\bK-\frac{\omega^2}{\omega-2}\hCp^2-\frac{\omega}{2}\hCp+\frac{\omega}{2(\omega-2)}\bPad_+,\\
\proj_4^{(+)}&=\frac{4}{\omega^2-4}\big(\omega^2\hCp^2-\bPad_+-\bK\big).
\end{aligned}
\end{equation}
The images of $\widetilde{\proj}_{\pm 1}^{(-)}$ and $\proj_2^{(-)}$ lie within the antisymmetric part $\bPad_-(\Vad^{\otimes 2})$ of the space $\Vad^{\otimes 2}$, while the images of $\proj_i^{(+)}$, $(i=2,...,5)$ belong to its symmetric part $\bPad_+(\Vad^{\otimes 2})$.

For $\omega=1,2$ the projectors are given by (the left and right columns correspond to $\omega=1$ and $\omega=2$, respectively)
 \begin{equation*}
 \begin{aligned}
 \widetilde{\proj}^{(-)}_{-1}&=\hCm+\frac{1}{2}\bPad_--\frac{1}{2}\tCm, & \qquad\qquad \widetilde{\proj}^{(-)}_{-1}&=\hCm+\frac{1}{2}\bPad_--\frac{1}{2}\tCm\\
\widetilde{\proj}^{(-)}_{1}&=\hCm+\frac{1}{2}\bPad_-+\frac{1}{2}\tCm, & \widetilde{\proj}^{(-)}_{1}&=\hCm+\frac{1}{2}\bPad_-+\frac{1}{2}\tCm,\\
\proj_2^{(-)}&=-2\hCm, & \proj_2^{(-)}&=-2\hCm\\
\proj_1^{(+)}&=-\frac{1}{2}\bPad_+-\frac{5}{4}\bK-\frac{1}{2}\hCp+\hCp^2, & \proj_1^{(+)}&=\frac{1}{3}\bK,\\
\proj_2^{(+)}&=\frac{1}{6}\bPad_+-\frac{1}{12}\bK+\frac{1}{2}\hCp+\frac{1}{3}\hCp^2, & \proj_2^{(+)}&=\frac{1}{4}\bPad_+-\frac{1}{12}\bK+\hCp+\hCp^2,\\
\proj_3^{(+)}&=\frac{4}{3}\bPad_++\frac{4}{3}\bK-\frac{4}{3}\hCp^2, & \proj_3^{(+)}&=\frac{3}{4}\bPad_+-\frac{1}{4}\bK-\hCp-\hCp^2.
 \end{aligned}
 \end{equation*}

In order to find the dimensions of the invariant subspaces extracted by the projectors \eqref{slProj}, we calculate the traces and supertraces of those projectors. First, we compute some auxiliary traces and supertraces (recall that $\xi=M+N$):
\begin{equation}\label{slAuxTr}
\begin{aligned}
\tr \bI&=(\xi^2-1)^2, & \str\bI&=(\omega^2-1)^2,\\
\tr \bPad_+&=\frac{1}{2}(\xi^2-1)^2+\frac{1}{2}(\omega^2-1), &\qquad \str\bPad_+&=\frac{1}{2}\omega^2(\omega^2-1),\\
\tr\bPad_-&=\frac{1}{2}(\xi^2-1)^2-\frac{1}{2}(\omega^2-1), & \str \bPad_-&=\frac{1}{2}(\omega^2-1)(\omega^2-2),\\
\tr\bK&=\omega^2-1, & \str\bK&=\omega^2-1,\\
\tr\hCp&=\frac{1}{2}(\xi^2-1), & \str\hCp&=\frac{1}{2}(\omega^2-1),\\
\tr\hCp^2&=\frac{\xi^4}{2\omega^2}+\frac{\xi^2}{4}-2\frac{\xi^2}{\omega^2}+\frac{5}{4},& \str\hCp^2&=\frac{3}{4}(\omega^2-1),\\
\tr\hCm&=-\frac{1}{2}(\xi^2-1),& \str\hCm&=-\frac{1}{2}(\omega^2-1),\\
\tr\tCm&=0, & \str\tCm&=0.
\end{aligned}
\end{equation}
Using \eqref{slProj} and \eqref{slAuxTr}, we obtain the traces
\begin{align}
\tr\widetilde{\proj}^{(-)}_{-1}&=\frac{1}{4}\big((\xi^2-2)^2-\omega^2\big), \qquad& \tr\proj_1^{(+)}&=1,\\
\tr\widetilde{\proj}^{(-)}_{1}&=\frac{1}{4}\big((\xi^2-2)^2-\omega^2\big), & \tr\proj_2^{(+)}&=\frac{1}{4}\big((\xi^2-1)^2+2(\xi^2+1)(\omega-1)+(\omega-1)^2\big),\nonumber\\
\tr\proj_2^{(-)}&=\xi^2-1, & \tr\proj_3^{(+)}&=\frac{1}{4}\big((\xi^2-1)^2-2(\xi^2+1)(\omega+1)+(\omega+1)^2\big),\nonumber\\
&{} & \tr\proj_4^{(+)}&=\xi^2-1\nonumber
\end{align}
and supertraces of the projectors \eqref{slProj}:
\begin{equation}
\begin{aligned}
\str\widetilde{\proj}^{(-)}_{-1}&=\frac{1}{4}(\omega^2-1)(\omega^2-4), \qquad\qquad& \str\proj_1^{(+)}&=1,\\
\str\widetilde{\proj}^{(-)}_{1}&=\frac{1}{4}(\omega^2-1)(\omega^2-4), & \str\proj_2^{(+)}&=\frac{1}{4}\omega^2(\omega-1)(\omega+3),\\
\str\proj_2^{(-)}&=\omega^2-1, & \str\proj_3^{(+)}&=\frac{1}{4}\omega^2(\omega+1)(\omega-3),\\
&{} & \str\proj_4^{(+)}&=\omega^2-1
\end{aligned}
\end{equation}
for $\omega\neq 0,1,2$. Analogously to the $osp(M|N)$ case, we find the dimensions of even and odd parts of the invariant subspaces:
\begin{align}
\dim_{\oo}\widetilde{V}^{(-)}_{-1}&=\frac{1}{4}(M^2-1)(M^2-4)+
\frac{1}{4}(N^2-1)(N^2-4)+\nonumber\\
&+\frac{1}{2}(MN+1)(3MN-2),\nonumber\\
\dim_{\oo}\widetilde{V}^{(-)}_{1}&=\frac{1}{4}(M^2-1)(M^2-4)+
\frac{1}{4}(N^2-1)(N^2-4)+\nonumber\\
&+\frac{1}{2}(MN+1)(3MN-2),\nonumber\\
\dim_{\oo}V_2^{(-)}&=M^2+N^2-1,\nonumber\\
\dim_{\oo}V_1^{(+)}&=1,\label{slEvDim} \\ 
\dim_{\oo}V_2^{(+)}&=\frac{1}{4}M^2(M-1)(M+3)+\frac{1}{4}N^2(N+1)(N-3)+\nonumber\\
&+\frac{1}{2}MN(3MN-M+N-1),\nonumber\\
\dim_{\oo}V_3^{(+)}&=\frac{1}{4}M^2(M+1)(M-3)+\frac{1}{4}N^2(N-1)(N+3)+\nonumber\\
&+\frac{1}{2}MN(3MN+M-N-1),\nonumber\\
\dim_{\oo}V_4^{(+)}&=M^2+N^2-1,
\end{align}
\begin{align}
\dim_{\ol}\widetilde{V}^{(-)}_{-1}&=MN(M^2+N^2-2), & \dim_{\ol}V_1^{(+)}&=0,\label{slOdDim}\\
\dim_{\ol}\widetilde{V}^{(-)}_{1}&=MN(M^2+N^2-2), & \dim_{\ol}V_2^{(+)}&=MN\big(M(M+1)+N(N-1)-2\big),\nonumber\\
\dim_{\ol}V_2^{(-)}&=2MN, & \dim_{\ol}V_3^{(+)}&=MN\big(M(M-1)+N(N+1)-2\big),\nonumber\\
&{} & \dim_{\ol}V_4^{(+)}&=2MN.\nonumber
\end{align}
Note that substitutions $M=0$ and $N=0$ nullify the dimensions of the odd parts of $\ad^{\otimes 2}$-invariant subspaces of $s\ell(M|N)$, as it should be. Besides, the mentioned substitutions turn \eqref{slEvDim} into the corresponding expressions for the dimensions of the invariant subspaces of the $s\ell(N)$ (or $s\ell(M)$) Lie algebra, which are given in \cite{IsKr}.

\section{Universal characteristic identities for $\hCp$ in the cases of the $osp(M|N)$ and $s\ell(M|N)$ Lie superalgebras}\label{Sec5}

For the $osp(M|N)$ and $s\ell(M|N)$ Lie superalgebras (which are denoted by $\mfg$ in this section), the characteristic identities \eqref{Cp3} and \eqref{slCp3} for $\hCp$ in the adjoint representation can be written in the following general form:
\begin{equation}\label{hCpUni3}
\hCp^3+\frac{1}{2}\hCp^2=\mu_1\hCp+\mu_2(\bI+\bPad-2\bK) \; ,
\end{equation}
which coincides precisely with the form of the universal identity for $\hCp$ in the case of the classical Lie algebras \cite{IsKr}.
The parameters $\mu_1$ and $\mu_2$ corresponding to the algebras $osp(M|N)$ and $s\ell(M|N)$ are given in Table $1$.
 \vspace{-0.5cm}
{\renewcommand{\arraystretch}{1.3}
\begin{table}[H]\label{mu1mu2init}
\centering
\caption{The values of $\mu_1$ and $\mu_2$ for the $osp(M|N)$ and $s\ell(M|N)$ Lie superalgebras} \vspace{0.2cm}
\begin{tabular}{|c|c|c|}
\hline
 & $\mu_1$ & $\mu_2$\\
 \hline
$osp(M|N)$ & $-\frac{\omega-8}{2(\omega-2)^2}$ & $\frac{\omega-4}{2(\omega-2)^3}$\\
\hline
$s\ell(M|N)$ & $\frac{1}{\omega^2}$ & $\frac{1}{4\omega^2}$\\
\hline
\end{tabular}
\end{table}

The subsequent analysis of \eqref{hCpUni3} mostly follows the consideration \cite{IsKr} of an analogous identity for the classical Lie algebras.
Multiplying both sides of \eqref{hCpUni3} by $\bK$ and using
\begin{equation}
\bK(\bI+\bPad)=2\bK,\qquad \bK\hCp=-\bK,\qquad \bK\cdot \bK=\sdim\mfg\bK,
\end{equation}
one can express the superdimension of $\mfg$ as a function of $\mu_1$ and $\mu_2$:
\begin{equation}\label{dimmu}
\sdim\mfg=\frac{2\mu_2-\mu_1+\frac{1}{2}}{2\mu_2}.
\end{equation}
Multiplying then both sides of \eqref{hCpUni3} by $\hCp(\hCp+\bI)$, we obtain the characteristic identity for $\hCp$:
\begin{equation}\label{Vogmu}
\hCp(\hCp+\bI)(\hCp^3+\frac{1}{2}\hCp^2-\mu_1\hCp-2\mu_2\bI)=0,
\end{equation}
the factorized form of which is
\begin{equation}\label{Vogabg}
\hCp(\hCp+\bI)(\hCp+\frac{\alpha}{2t}\bI)
(\hCp+\frac{\beta}{2t}\bI)(\hCp+\frac{\gamma}{2t}\bI)=0.
\end{equation}
Thus, the roots of the polynomial on the left of \eqref{Vogmu} are
\begin{equation}\label{VogRoots}
a_1=0,\qquad a_2=-1, \qquad a_3=-\frac{\alpha}{2t},\qquad a_4=-\frac{\beta}{2t},\qquad a_5=-\frac{\gamma}{2t},
\end{equation}
where the normalisation parameter $t=\alpha+\beta+\gamma$, as by \eqref{Vogmu} and
\eqref{Vogabg},
\begin{equation}
\frac{\alpha}{2t}+\frac{\beta}{2t}+\frac{\gamma}{2t}= \frac{1}{2},
\end{equation}
We choose $t=h^\lor$ where $h^\lor$ is the dual Coxeter number of $\mfg$. The values of $h^\lor$ (see, e.g., \cite{KacWak}) for $s\ell(M|N)$ and $osp(M|N)$ are presented in Table \ref{CxtrNum}. As usual, $\omega=2m+1-N$ and $\omega=2m-N$ for $\mfg=osp(2m+1|N)$ and $\mfg=osp(2m,N)$, respectively.
{\begin{table}[H]
\centering
\caption{The dual Coxeter numbers for the Lie superalgebras $s\ell(M|N)$ and $osp(M|N)$}\label{CxtrNum} \vspace{0.2cm}
\begin{tabular}{|c|c|c|c|}
\hline
& $s\ell(M|N)$ & $\begin{array}{c} osp(2m+1|N),\ \omega>1 \\ osp(2m|N),\ \omega>0 \end{array}$ & $\begin{array}{c} osp(2m+1|N),\ \omega\le 1\\osp(2m|N),\ \omega\le 0 \end{array}$\\
\hline
$h^\lor$ & $\omega$ & $\omega-2$ & $-\frac{1}{2}(\omega-2)$\\
\hline
\end{tabular}
\end{table}}
The parameters $\alpha$, $\beta$ and $\gamma$ were introduced by Vogel in \cite{Vog}. The values of these parameters for the algebras $osp(M|N)$ and $s\ell(M|N)$ can be found from \eqref{Cp5F} and \eqref{slCpCharF} and are given in Table \ref{abgVal}.\\
\vspace{-0.5cm}
{\begin{table}[H]
\centering
\caption{The Vogel parameters for the $osp(M|N)$ and $s\ell(M|N)$ Lie superalgebras}\label{abgVal} \vspace{0.2cm}
\begin{tabular}{|c|c|c|c|}
\hline
& $s\ell(M|N)$ & $\begin{array}{c} osp(2m+1|N),\ \omega>1 \\ osp(2m|N),\ \omega>0 \end{array}$ & $\begin{array}{c} osp(2m+1|N),\ \omega\le 1\\osp(2m|N),\ \omega\le 0 \end{array}$\\
 \hline
 $\alpha$ & $-2$ & $-2$ & $1$ \\
 \hline
 $\beta$ & $2$ & $4$ & $-2$\\
 \hline
 $\gamma$ & $\omega $ & $\omega-4$ & $-\frac{1}{2}(\omega-4)$\\
 \hline
 $t$ & $\omega$ & $\omega-2$ & $-\frac{1}{2}(\omega-2)$\\
 \hline
\end{tabular}
\end{table}}
A comparison of \eqref{Vogmu} and \eqref{Vogabg} shows that $\mu_1$ and $\mu_2$ can be expressed in terms of the Vogel parameters:
\begin{equation}
\mu_1=-\frac{\alpha\beta+\alpha\gamma+\beta\gamma}{4t^2},\qquad \mu_2=-\frac{\alpha\beta\gamma}{16t^3},
\end{equation}
while the superdimension \eqref{dimmu} of $\mfg$ acquires the universal form
\begin{equation}\label{dimvog}
\sdim\mfg=\frac{(\alpha-2t)(\beta-2t)(\gamma-2t)}{\alpha\beta\gamma}.
\end{equation}
Now using \eqref{Vogabg}, we can obtain a universal form of the projectors $\proj^{(+)}_{(a_i)}$ onto the invariant subspaces $V_{(a_i)}$ of the symmetric space $\bP_+(V_{\ad}^{\otimes 2})$:
\begin{equation}\label{vogproj}
\begin{gathered}
\proj^{(+)}_{(-\frac{\alpha}{2t})}= \proj^{(+)}(\alpha|\beta,\gamma),\qquad \proj^{(+)}_{(-\frac{\beta}{2t})}=\proj^{(+)}(\beta|\alpha,\gamma),\qquad \proj^{(+)}_{(-\frac{\gamma}{2t})}=\proj^{(+)}(\gamma|\alpha,\beta)\\
\proj^{(+)}_{(-1)}=\frac{1}{\sdim \mfg}\bK,
\end{gathered}
\end{equation}
where we denoted
\begin{equation}\label{vogprojabg}
\proj^{(+)}(\alpha|\beta,\gamma)=
\frac{4t^2}{(\beta-\alpha)(\gamma-\alpha)}
\Big(\hCp^2+\big(\frac{1}{2}-\frac{\alpha}{2t}\big)\hCp+
\frac{\beta\gamma}{8t^2}\big(\bI+\bPad-\frac{2\alpha}{\alpha-2t}
\bK\big)\Big).
\end{equation}
By \eqref{strGenMany} and \eqref{dimvog}, the supertrace of $\proj^{(+)}(\alpha|\beta,\gamma)$ is
\begin{equation}\label{strPabg}
\str\proj^{(+)}(\alpha|\beta,\gamma)=-\frac{(3\alpha-2t)(\beta-2t)(\gamma-2t)(\beta+t)(\gamma+t)t}{\alpha^2(\alpha-\beta)(\alpha-\gamma)\beta\gamma}.
\end{equation}
From \eqref{strPabg} and \eqref{dimvog} we get the superdimensions of the invariant subspaces $V_{(-1)}$, $V_{(-\frac{\alpha}{2t})}$, $V_{(-\frac{\beta}{2t})}$ and $V_{(-\frac{\gamma}{2t})}$ extracted by the projectors \eqref{vogproj}:
\begin{equation}
\begin{aligned}
\sdim V_{(-1)}&=\str \proj^{(+)}_{(-1)}=1,\\
\sdim V_{(-\frac{\alpha}{2t})}&=\str\proj^{(+)}_{(-1)}=-\frac{(3\alpha-2t)(\beta-2t)(\gamma-2t)(\beta+t)(\gamma+t)t}{\alpha^2(\alpha-\beta)(\alpha-\gamma)\beta\gamma},\\
\sdim V_{(-\frac{\beta}{2t})}&=\str\proj^{(+)}_{(-1)}=-\frac{(3\beta-2t)(\alpha-2t)(\gamma-2t)(\alpha+t)(\gamma+t)t}{\beta^2(\beta-\alpha)(\beta-\gamma)\alpha\gamma},\\
\sdim V_{(-\frac{\gamma}{2t})}&=\str\proj^{(+)}_{(-1)}=
-\frac{(3\gamma-2t)(\beta-2t)(\alpha-2t)(\beta+t)(\alpha+t)t}{
\gamma^2(\gamma-\beta)(\gamma-\alpha)\beta\alpha}.
\end{aligned}
\end{equation}
It is worth noting that for the case of Lie algebras nullification of either $3\alpha-2t$, $3\beta-2t$, or $3\gamma-2t$ corresponds to the exceptional Lie algebras $g_2$, $f_4$, $e_6$, $e_7$, $e_8$ as well as $s\ell(3)$ and $so(8)$. In this sense, the ``exceptional'' basic classical Lie superalgebras (see their definition, e.g., in \cite{Sorba2}) are $s\ell(M|N)$ for $M-N=0,\pm 3$, $osp(M|N)$ for $M-N=-1,8$ and $F(4)$.
\section{Eigenvalues of higher Casimir operators in the adjoint representation}\label{Sec6}
In this section, we derive a formula that expresses the eigenvalues of the higher Casimir operators in the adjoint representation in terms of the Vogel parameters.
Our method of constructing higher Casimir operators for Lie superalgebras is based on the method proposed in \cite{Okub} for Lie algebras (see also \cite{Book1}).

Let ${Y_A}$ be a homogeneous basis of the enveloping algebra $\mcU(\mfg)$ of the Lie superalgebra $\mfg$. If $\widetilde{C}=D^{AB}Y_A\otimes Y_B\in\mcU(\mfg)\otimes \mcU(\mfg)$ is an $\ad$-invariant operator, then for an arbitrary representation $T:\mfg\to (\End(V))_L$ of $\mfg$ the operator
\begin{equation}\label{hCtoC}
C=\str_2(({\rm id}
\otimes T) \; \widetilde{C})= D^{AB}Y_A\str(T(Y_B))
\end{equation}
lies in the centre of $\mcU(\mfg)$. Here
${\rm id}$ is the identity operator and
$\str_2$ denotes the supertrace in the second factor in $\mcU(\mfg)\otimes ((\End(V))_L$. In what follows, we are only interested in the operator $C$ in a particular representation $T'$. For $T'(C)$ we get:
\begin{equation}\label{hCC}
T'(C)=\str_2\bigl( (T'\otimes T) \; \widetilde{C} \bigr)=D^{AB}T'(Y_A)\str(T(Y_B)).
\end{equation}

Apparently, for the simple Lie superalgebra $\mfg$ where $\dim(\mfg)=n$ and the Cartan-Killing metric ${\sf g}_{ab}$ of which is nondegenerate, an arbitrary power $\hC^k$ of $\hC$ defined in \eqref{SpCas} is ad-invariant. The explicit form of $\hC^k$ is:
\begin{equation}
\hC^k=(-1)^{\sum_{i>j}^n[a_i][a_j]}{\sf g}^{a_1b_1}\dots {\sf g}^{a_nb_n}X_{a_1}\cdots X_{a_n}\otimes X_{b_1}\cdots X_{b_n}.
\end{equation}
Substituting $\widetilde{C}=\hC^k$ and $T'=T=\ad$ into \eqref{hCC} yields a relation for the $k$-th Casimir operator $\ad(C_k)$ in the adjoint representation:
\begin{equation}\label{Ckad}
\ad(C_k)=\str_2(\ad^{\otimes 2}(\hC^k))\equiv\str_2(\hC_{\ad}^k)
={\sf g}^{a_1\dots a_n}\ad(X_{a_1})\cdots \ad(X_{a_n}),
\end{equation}
where we denoted
\begin{equation}
{\sf g}^{a_1\dots a_2}=(-1)^{\sum_{i>j}^n[a_i][a_j]}{\sf g}^{a_1b_1}\cdots {\sf g}^{a_nb_n}\str\big(\ad(X_{b_1}\cdots \ad(X_{b_n})\big)
\end{equation}
and $\hC_{\ad} = \ad^{\otimes 2}(\hC)$.
As the adjoint representation of a simple Lie superalgebra is irreducible, then, by Schur's lemma, $\ad(C_k)$ is the scalar operator with the eigenvalue $c_k$, i.e. $\ad(C_k)=c_k I$ where $I$ is the identity operator acting in $V_{\ad}$.

Let us introduce the generating function for $c_k$:
\begin{equation}\label{cpzp}
c(z)=\sum_{p=0}^{\infty}c_pz^p.
\end{equation}
By \eqref{CpmMult} and \eqref{Ckad},
\begin{equation}\label{str2Cadzp1}
c(z)\cdot I=\str_2\big(\sum_{p=0}^\infty\hC_{\ad}^pz^p\big) =
\str_2\big(\sum_{p=0}^\infty\hCp^p z^p\big)+\str_2\big(\sum_{p=0}^\infty\hCm^p z^p\big),
\end{equation}
where we assume $\hC_{\pm}^0=\bP^{\ad}_{\pm}$, and $\hC_{\ad}^0=\bI$. By \eqref{Cm2byCm}, $\hCm^p=\left(-\frac{1}{2}\right)^{p-1}\hCm$, so
\begin{equation}\label{hCmzp}
\sum_{p=0}^\infty \hCm^pz^p=\bP^{\ad}_-+\sum_{p=1}^\infty \left(-\frac{1}{2}\right)^{p-1}\hCm z^p=\bPad_-+\frac{z}{1+\frac{z}{2}}\hCm.
\end{equation}
Now we express $\hCp^p$ in terms of $\proj^{(+)}(\alpha|\beta,\gamma)$, $\proj^{(+)}(\beta|\gamma,\alpha)$, $\proj^{(+)}(\gamma|\alpha,\beta)$ and $\proj^{(+)}_{(-1)}$,
that were defined in \eqref{vogproj} and \eqref{vogprojabg}. Using the condition
\begin{equation}
\label{polno}
\bPad_+ = \proj^{(+)}(\alpha|\beta,\gamma)+\proj^{(+)}(\beta|\gamma,\alpha)
+\proj^{(+)}(\gamma|\alpha,\beta)+\proj^{(+)}_{(-1)}
\end{equation}
yields
 $$
\begin{aligned}
 & \hCp^p = \hCp^p \Bigl(\proj^{(+)}(\alpha|\beta,\gamma)+
 \proj^{(+)}(\beta|\gamma,\alpha)
+\proj^{(+)}(\gamma|\alpha,\beta)+\proj^{(+)}_{(-1)}\Bigr) \\
&=\Bigl(-\frac{\alpha}{2t}\Bigr)^p\proj^{(+)}
(\alpha|\beta,\gamma)+\Bigl(-\frac{\beta}{2t}\Bigr)^p\proj^{(+)}
(\beta|\gamma,\alpha)+\Bigl(-\frac{\gamma}{2t}\Bigr)^p\proj^{(+)}
(\gamma|\alpha,\beta)+(-1)^p\proj^{(+)}_{(-1)} \; ,
\end{aligned}
$$
where $(-\frac{\alpha}{2t})$, $(-\frac{\beta}{2t})$,
$(-\frac{\gamma}{2t})$
and $(-1)$ are the eigenvalies of $\hCp$ corresponding to the projectors mentioned.
Therefore,
\begin{equation}\label{hCpzp1}
\begin{aligned}
\sum_{p=0}^\infty \hCp^pz^p&=\sum_{p=0}^\infty\bigg(\Big(-\frac{\alpha z}{2t}\Big)^p\proj^{(+)}(\alpha|\beta,\gamma)+\Big(-\frac{\beta z}{2t}\Big)^p\proj^{(+)}(\beta|\gamma,\alpha)\\
&+\Big(-\frac{\gamma z}{2t}\Big)^p\proj^{(+)}(\gamma|\alpha,\beta)+(-z)^p\proj^{(+)}_{(-1)}\bigg)\\
&=\frac{1}{1+\frac{\alpha z}{2t}}\proj^{(+)}(\alpha|\beta,\gamma)+\frac{1}{1+\frac{\beta z}{2t}}\proj^{(+)}(\beta|\gamma,\alpha)+\frac{1}{1+\frac{\gamma z}{2t}}\proj^{(+)}(\gamma|\alpha,\beta)+\\
&+\frac{1}{1+z}\proj^{(+)}_{(-1)}.
\end{aligned}
\end{equation}
By \eqref{polno}, \eqref{hCpzp1} can be rewritten as
\begin{equation}\label{hCpzp2}
\begin{aligned}
\sum_{p=0}^\infty\hCp^p z^p=&-\frac{\alpha z}{2t+\alpha z}\proj^{(+)}(\alpha|\beta,\gamma)-\frac{\beta z}{2t+\beta z}\proj^{(+)}(\beta|\gamma,\alpha)\\
&-\frac{\gamma z}{2t+\gamma z}\proj^{(+)}(\gamma|\alpha,\beta)-\frac{z}{1+z}\proj^{(+)}_{(-1)}+\bPad_+.
\end{aligned}
\end{equation}
Summing \eqref{hCmzp} and \eqref{hCpzp2} leads to
\begin{equation}\label{hCadzp1}
\begin{aligned}
\sum_{p=0}^\infty \hC_{\ad}^pz^p=&-\frac{\alpha z}{2t+\alpha z}\proj^{(+)}(\alpha|\beta,\gamma)-\frac{\beta z}{2t+\beta z}\proj^{(+)}(\beta|\gamma,\alpha)\\
&-\frac{\gamma z}{2t+\gamma z}\proj^{(+)}(\gamma|\alpha,\beta)-\frac{z}{1+z}\proj^{(+)}_{(-1)}+\frac{2z}{2+z}\hCm+\bI.
\end{aligned}
\end{equation}
To find $c(z)$ by \eqref{str2Cadzp1},
we need to calculate the supertrace $\str_2$ of the right hand side of
\eqref{hCadzp1}. Using \eqref{InvKilG},
 \eqref{CadGenC}, \eqref{IPKGenC} and \eqref{CpmDef}, we get the auxiliary supertraces
\begin{equation}\label{str2GenMany}
\begin{array}{c}
\str_2(\bI)=\sdim\mfg\cdot I, \;\;\;\;\; \str_2(\hCm)=-\frac{1}{2}I \; ,
 \;\;\;\;\; \str_2(\bPad)=I,  \\ [0.2cm]
 \str_2(\hCp) =\frac{1}{2}I, \;\;\;\;\; \str_2(\bK)=I,
\;\;\;\;\; \str_2(\hCp^2) =\frac{3}{4}I \; ,
\end{array}
\end{equation}
which are in accordance with (\ref{strGenMany}).
Then by \eqref{dimvog} and \eqref{str2GenMany},
\begin{equation}\label{str2Pabg}
\str_2\proj^{(+)}(\alpha|\beta,\gamma)=-\frac{(3\alpha-2t)t(\beta+t)(\gamma+t)}{\alpha(\alpha-\beta)(\alpha-\gamma)(\alpha-2t)}I.
\end{equation}
Using \eqref{dimvog}, \eqref{vogproj} and \eqref{str2GenMany} shows that
\begin{equation}\label{str2P-1}
\str_2\proj^{(+)}_{(-1)}=\frac{1}{\sdim \mfg}\str_2\bK=\frac{\alpha\beta\gamma}{(\alpha-2t)(\beta-2t)
(\gamma-2t)}I.
\end{equation}

Substituting \eqref{hCadzp1} into \eqref{str2Cadzp1} and using \eqref{str2GenMany}, \eqref{str2Pabg}, \eqref{str2P-1} results in
\begin{equation}
\begin{aligned}
\sum_{p=0}^\infty \hC_{\ad}z^p=&\sum_{p=0}^\infty c_pz^p\cdot I=\frac{(\alpha-2t)(\beta-2t)(\gamma-2t)}{\alpha\beta\gamma}\cdot I+\\
&z^2\frac{96t^3+168t^3z+6(14t^3+tt_2-t_3)z^2+
(13t+3tt_2-4t_3)z^3}{6(2t+\alpha z)(2t+\beta z)(2t+\gamma z)(2+z)(1+z)}\cdot I,
\end{aligned}
\end{equation}
where $t_2=\alpha^2+\beta^2+\gamma^2$ and $t_3=\alpha^3+\beta^3+\gamma^3$.
Therefore, the generating function for the eigenvalues of the higher Casimir operators of the $osp(M|N)$ and $s\ell(M|N)$ Lie superalgebras in the adjoint representation is
\begin{equation}
\label{resck}
\begin{aligned}
c(z)=&\frac{(\alpha-2t)(\beta-2t)(\gamma-2t)}{\alpha\beta\gamma}+\\
&z^2\frac{96t^3+168t^3z+6(14t^3+tt_2-t_3)z^2+
(13t+3tt_2-4t_3)z^3}{6(2t+\alpha z)(2t+\beta z)(2t+\gamma z)(2+z)(1+z)}.
\end{aligned}
\end{equation}
Formula \eqref{resck} is in agreement with the results of \cite{Mkr}, where this expression was found by using the formula for the values of the higher Casimir operators obtained in \cite{Okub}.

\section{Conclusion}
We have found explicit formulas for the projectors onto the invariant subspaces of the tensor product of two adjoint representations of the $osp(M|N)$ for $M-N\neq 0,1,2$ and $s\ell(M|N)$ Lie superalgebras for $M-N\neq 0,\pm 1,\pm 2$. The construction was performed by finding the characteristic identities for the split Casimir operator of the corresponding algebras. In the case of the $s\ell(M|N)$ Lie superalgebras, an additional ad-invariant operator was defined by means of the so-called symmetric structure constants of $s\ell(M|N)$. It was also shown that the dimensions of the invariant subspaces and the values of the quadratic Casimir operator in those subspaces are in agreement with \cite{Vog}--\cite{Lan2}, where these quantities are written by means of the Vogel parameters in the context of the universal Lie algebra. Furthermore, the generating function of the eigenvalues of the higher Casimir operators in the adjoint representation was found and expressed in terms of the Vogel parameters. The last result is in accordance with \cite{Mkr}.

\section*{Acknowledgements}
The authors are thankful to S.O.Krivonos and R.L.Mkrtchyan
  for useful discussions.
A.P.I. acknowledges the support of the Russian Science Foundation, grant No. 19-11-00131.


\begin{thebibliography}{888}

\bibitem{Okub} S. Okubo, {\it Casimir invariants and vector operators in simple and classical Lie algebras}.
    J. Math. Phys. {\bf 18} (1977), 2382-2394.

\bibitem{ChPr}
V.~Chari, A.~N.~Pressley, \textit{A guide to quantum groups}, Cambridge university press (1995).

\bibitem{ZMa}
Z.~Ma, \textit{Yang-Baxter equation and quantum enveloping algebras}, World Scientific, (1993) 91-123.

\bibitem{SWsl}
A.~N.~Sergeev, \textit{The tensor algebra of the identity representation as a module over the Lie superalgebras Gl(n,m) and Q(n)}, Math. USSR-Sb. {\bf 51}(2), (1985) 419–427.

\bibitem{SWsl2}
A.~Berele, A.~Regev, \textit{Hook young diagrams with applications to combinatorics and to representations of Lie superalgebras}, Advances in Mathematics {\bf 62}(2), (1987) 118-175.

\bibitem{SWosp}
M.~Ehrig, C.~Stroppel, \textit{Schur–Weyl duality for the Brauer algebra and the ortho-symplectic Lie superalgebra}, Mathematische Zeitschrift {\bf 284}, (2016) 595-613.

\bibitem{Vog}
P. Vogel,
\textit{The universal Lie algebra},
Preprint (1999).

\bibitem{Del}
P. Deligne,
\textit{La serie exceptionnelle des groupes de Lie},
C. R. Acad. Sci. {\bf 322}, (1996) 321-326.

\bibitem{Lan2}
J. M. Landsberg, L. Manivel,
\textit{Triality, exceptional Lie algebras and Deligne dimension formulas},
Adv. Math. {\bf 171}, (2002) 59-85.

\bibitem{Mkr}
R. L. Mkrtchyan, A. N. Sergeev, A. P. Veselov,
\textit{Casimir eigenvalues for universal Lie algebra},
Journal of Mathematical Physics \textbf{53}, (2012) 102-106.

\bibitem{Lan}
J. M. Landsberg, L. Manivel,
\textit{A universal dimension formula for complex simple Lie algebras},
Adv. Math. {\bf 201}, (2006) 379-407.

\bibitem{AvMkr} M.Y. Avetisyan, R.L. Mkrtchyan, \textit{X2 series of universal quantum dimensions}, Jour. of Phys. A, {\bf 53}.4 (2020) 045202; \textit{On (ad) n (X 2) k series of universal quantum dimensions}, Jour. of Math. Phys. {\bf 61}.10 (2020) 101701; arXiv:1909.02076 [math-ph].

\bibitem{Knots}
A. Mironov, R. Mkrtchyan, A. Morozov, \textit{On universal knot polynomials}, Journal of High Energy Physics {\bf 02}, (2016) 078;
arXiv:1510.05884 [hep-th].

\bibitem{Knots2}
A. Mironov, A. Morozov, \textit{Universal Racah matrices and adjoint knot
polynomials: Arborescent knots}, Phys.Lett. B 755 (2016), 47;
arXiv:1511.09077 [hep-th].

  \bibitem{IsKr}
  A.~P.~Isaev and S.~O.~Krivonos,
{\it Split Casimir operator for simple Lie algebras, solutions of Yang-Baxter equations and Vogel parameters},
Journal of Math. Phys. {\bf 62}, (2021) 083503,
arXiv:2102.08258 [math-ph];
A.~P.~Isaev and S.~O.~Krivonos,
{\it Split Casimir Operator and Universal Formulation of the Simple Lie Algebras}, Symmetry {\bf 13}(6), (2021) 1046;
arXiv:2106.04470 [math-ph].

\bibitem{IsPr}
A.~P.~Isaev and A.~A.~Provorov,
{\it Projectors on invariant subspaces of representations $\ad^{\otimes 2}$ of Lie algebras $so(N)$ and $sp(2r)$ and Vogel parameterization}, TMF, \textbf{206}(1) (2021),  3–22;
arXiv:2012.00746 [math-ph].

\bibitem{Kac} V.G. Kac, {\it Lie superalgebras}, Advances in Mathematics {\bf 26}(1), (1977) 8-96.

\bibitem{Ber} F.A. Berezin, {\it Introduction
 to algebra and analysis with anticommuting variables}, Moscow State University Press, Moscow (1983).

  \bibitem{FIKK} J.~Fuksa, A.~P.~Isaev, D.~Karakhanyan and R.~Kirschner,
    {\it Yangians
     and Yang–Baxter R-operators for ortho-symplectic superalgebras},
  Nucl.\ Phys.\ B {\bf 917}, (2017) 44;
  arXiv:1612.04713 [math-ph].

   \bibitem{IKK} A.~P.~Isaev, D.~Karakhanyan, R.~Kirschner,
    {\it Yang-Baxter R-operators for osp superalgebras},
      Nucl.\ Phys.\ B {\bf 965} (2021) 115355; arXiv:2009.08143 [math-ph].

\bibitem{Kac2} V.G. Kac, {\it A sketch of Lie superalgebra theory}, Comm. Math. Phys. {\bf 53} (1), (1977) 31-64.

 \bibitem{Ber2}
 F. A. Berezin, {\it Introduction to superanalysis}, Mathematical Physics and Applied Mathematics, Springer Netherlands, {\bf 9}, (1987).

 \bibitem{Sorba} L.Frappat, A.Sciarrino, P.Sorba,  {\it Dictionary on Lie algebras and superalgebras}, Academic Press (London), (2000) 410.

 \bibitem{Sorba2} L.Frappat, A.Sciarrino, P.Sorba, {\it Structure of basic Lie superalgebras and of their affine extensions}, Communications in Mathematical Physics, {\bf 121}, (1989) 457–500.

\bibitem{Cvit}
P. Cvitanovi\'{c}, \textit{Birdtracks, Lie's, and Exceptional Groups}, Princeton; Oxford: Princeton University Press (2008); \\
http://cns.physics.gatech.edu/grouptheory/chapters/draft.pdf

\bibitem{Book2} A.P.~Isaev, V.A.~Rubakov,
{\it Theory of groups and symmetries II. Representations of groups and Lie algebras, applications}, World Scientific, (2020).

  \bibitem{KulSkl}
  P.P. Kulish, E.K. Sklyanin, {\it On solutions of the Yang–Baxter equation}, J. Sov. Math. {\bf 19}(5) (1982) 1596, Zap. Nauch. Semin. POMI 95 (1980) 129.

  \bibitem{Isaev} A.P.Isaev,
  {\it Quantum groups and Yang–Baxter equations},
  preprint MPIM (Bonn), MPI 2004-132, (2004), \\ http://webdoc.sub.gwdg.de/ebook/serien/e/mpi\_mathematik/2004/132.pdf.



\bibitem{KacWak}
V. G. Kac, M. Wakimoto, \textit{Integrable highest weight modules over affine superalgebras and number theory}, Progr. Math. \textbf{123} (1994), 415–456.

\bibitem{Book1}
A.P.~Isaev, V.A.~Rubakov,
{\it Theory of groups and symmetries I. Finite groups, Lie groups and Lie algebras}, World Scientific, (2018).

\end{thebibliography}
\end{document}